\newtheorem{definition}{Definition}
\newtheorem{remark}{Remark}
\newtheorem{theorem}{Theorem}
\newtheorem{proposition}{Proposition}
\newtheorem{corollary}{Corollary}
\newtheorem{lemma}{Lemma}
\newtheorem{example}{Example}
\newcommand{\TTD}{\Tilde{\Tilde{\boldsymbol{D}}}}
\newcommand{\TTE}{\Tilde{\Tilde{\boldsymbol{\epsilon}}}}
\begin{document}

\title{On the instrumental variable estimation with many weak and invalid instruments}
\author[a]{Yiqi Lin}
\author[b]{Frank Windmeijer}
\author[a]{Xinyuan Song}
\author[c]{Qingliang Fan\thanks{\noindent Correspondence: Qingliang Fan, Department of Economics, The Chinese University of Hong Kong, Shatin, N.T., Hong Kong. Email: \nolinkurl{michaelqfan@gmail.com}. }}

\affil[a]{Department of Statistics, The Chinese University of Hong Kong}
\affil[b]{Department of Statistics, University of Oxford}
\affil[c]{Department of Economics, The Chinese University of Hong Kong}

\maketitle

\begin{abstract}  

We discuss the fundamental issue of identification in linear instrumental variable (IV) models with unknown IV validity. With the assumption of the ``sparsest rule'', which is equivalent to the plurality rule but becomes operational in computation algorithms, we investigate and prove the advantages of non-convex penalized approaches over other IV estimators based on two-step selections, in terms of selection consistency and accommodation for individually weak IVs. Furthermore, we propose a surrogate sparsest penalty that aligns with the identification condition and provides oracle sparse structure simultaneously. Desirable theoretical properties are derived for the proposed estimator with weaker IV strength conditions compared to the previous literature. Finite sample properties are demonstrated using simulations and the selection and estimation method is applied to an empirical study concerning the effect of {  BMI on diastolic blood pressure}.
\end{abstract}
\noindent {\bf{Keywords}}: {Invalid Instruments, Model Identification, Non-convex Penalty, Treatment Effect, Weak Instruments.}

\section{Introduction}\label{sec:intro}

Recently, estimation of causal effects with high-dimensional observational data has drawn much attention in many research fields such as economics, epidemiology and genomics. The instrumental variable (IV) method is widely used when the treatment variable of interest is endogenous.
As shown in Figure \ref{fig:figure1label}, the ideal  IV needs to be correlated with the endogenous treatment variable (C1), it should not have a direct effect on the outcome (C2) and should not be related to unobserved confounders that affect both outcome and treatment (C3).

\begin{figure}[htp]
\centering
\includegraphics[width = 6.5cm]{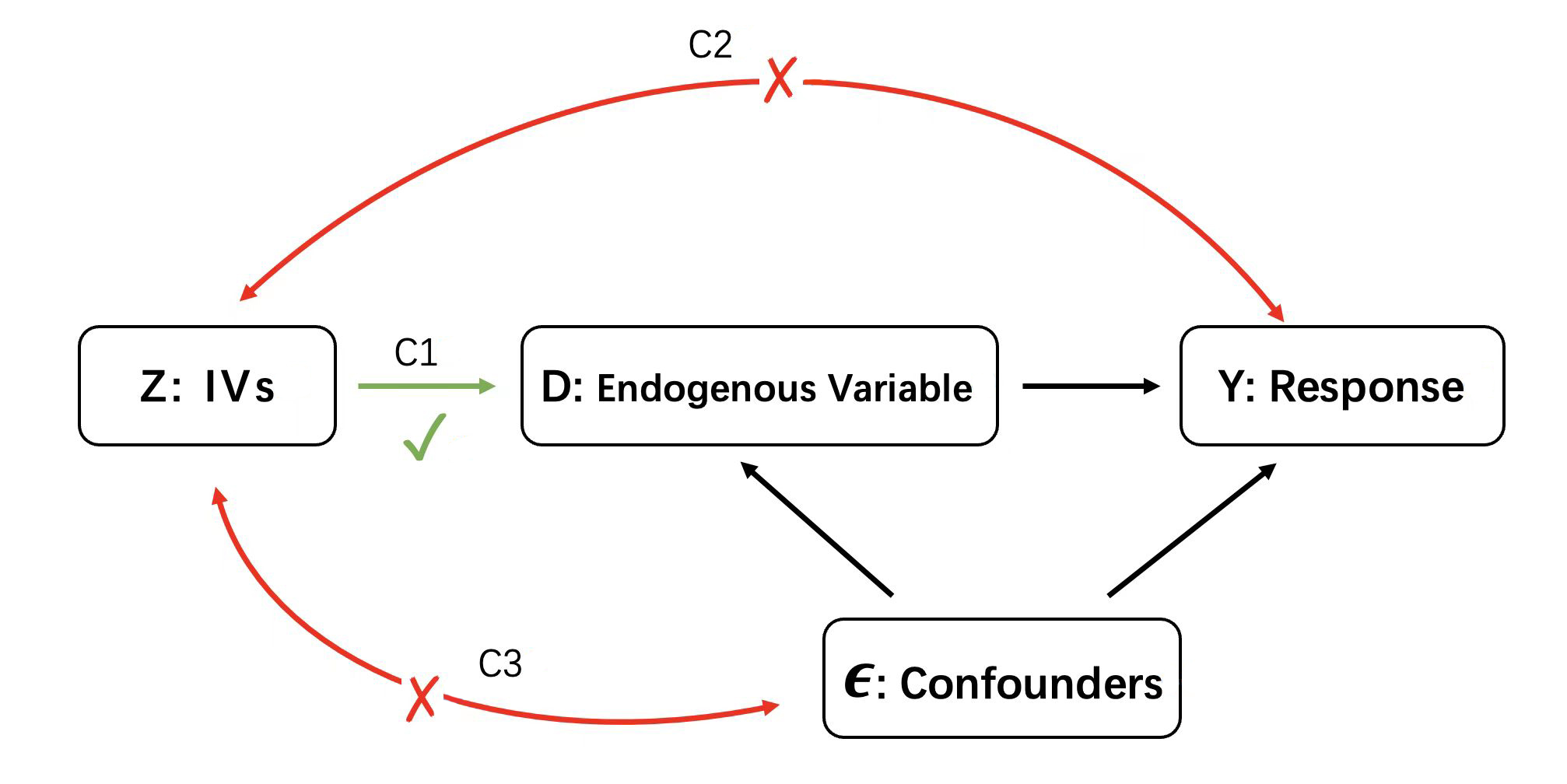}
\caption{Relevance and Validity of IVs}
\label{fig:figure1label}
\end{figure}

Our research is motivated by the difficulty of finding IVs that satisfy all the above conditions. In applications, invalid IVs (violation of C2 or C3) \citep{davey2003mendelian,kang2016instrumental,windmeijer2019use} and weak IVs {  (concerning the weak correlation in C1)} \citep{Bound1995,staiger1997instrumental} are prevalent. A strand of literature studies the ``many weak IVs" problem \citep{Stock2002,chao2005consistent}. With the increasing availability of large datasets, IV models are often high-dimensional \citep{Belloni2012sparse,lin2015,fan2018nonparametric}, and have potentially weak IVs \citep{Andrews2019}, and invalid IVs \citep{guo2018confidence,windmeijer2019confidence}. Among those problems, we mainly focus on the invalid IV problem, while allowing for potential high-dimensionality and weak signals.

\subsection{Related Works}
Most related works fall into two main categories: robust estimation with invalid IVs and estimation which can select valid IVs without any prior knowledge of validity. The first strand of literature allows all IVs to be invalid. For example, \cite{kolesar2015identification} restricted the direct effects of IVs on treatment and response being random effects and independent. In practice, this assumption might be difficult to justify. \cite{lewbel2012using,tchetgen2021genius,guo2022causal} utilized conditional heteroskedasticity or heterogeneous curvatures to achieve robustness with potentially all IVs invalid. However, their performances are not satisfactory once the identification condition is not evident.

The second strand focused on unknown invalid IVs, while imposing certain identification conditions on the number of valid IVs. \cite{kang2016instrumental} proposed a Lasso type estimator (sisVIVE).  \cite{windmeijer2019use} pointed out the inconsistent variable selection of sisVIVE under a relative IV strength condition and proposed an adaptive Lasso estimator, which has asymptotic oracle properties under the assumption that more than half of the IVs are valid, also called the majority rule.  \cite{guo2018confidence,windmeijer2019confidence} further developed two-step (the first one for relevance, the second one for validity) selection approaches, Two-Stage Hard Thresholding (TSHT) and Confidence Intervals IV (CIIV), respectively, under the plurality rule conditional on the set of relevant IVs. The plurality rule states that the valid IVs form the largest group. However, the approaches mentioned above are not robust to many weak IVs due to the restriction of the majority/plurality rule amongst the strong IVs instead of all IVs. Our method closely follows this strand of literature. Instead of a two-step selection, we require the plurality rule for valid IVs for a one-step selection procedure, thus considerably relaxing the requirement of valid IVs in theory and most practical scenarios.

The study of many (weak) IVs originated from empirical motivations but often assumed known validity. For example, \cite{staiger1997instrumental,hansen2008estimation,newey2009generalized,hansen2014instrumental} considered different estimators for situations with many (weak) valid IVs but fixed the number of known covariates. \cite{kolesar2015identification,kolesar2018minimum} allowed the number of covariates to grow with the sample size. { \cite{seng2022structural} introduced an alternative model averaging method to handle weak IVs in low- and high-dimensional settings.} We consider the weak IV issues that are prevalent in empirical studies.
\subsection{Main Results and Contributions}
We propose a {\bf{W}}eak and {\bf{I}}nvalid IV robust {\bf{T}}reatment effect (WIT) estimator. The sparsest rule is sufficient for identification and is operational in numerical optimization. The proposed procedure has a selection stage (regarding IV validity) and a post-selection estimation stage. The selection stage is a penalized IV-regression via minimax concave penalty (MCP, \citealp{zhang2010nearly}), a proper surrogate penalty aligned with the identification condition to achieve model selection consistency of valid IVs under much weaker technical conditions than existing methods \citep{guo2018confidence, windmeijer2019confidence}. In the estimation stage, we utilize the limited information maximum likelihood (LIML) estimator to handle the weak IVs  \citep{staiger1997instrumental}. An efficient computational algorithm for the optimal solution is provided. {{The computer codes for implementing the WIT estimator are available at \url{https://github.com/QoifoQ/WIT}.}}

The key contributions of this paper are summarized as follows.
\begin{enumerate}[itemsep=2pt,topsep=0pt,parsep=0pt]
\item We provide a self-contained framework to investigate the fundamental problem in model identification for linear IV models with unknown validity of instruments. Specifically, we study the identification condition from the general data generating process (DGP) framework. {Furthermore, we discuss the alignment of model identification and variable selection regarding IV validity, which requires a non-convex penalty function. } 

\item This study extends the IV estimation with unknown invalid IVs (namely, \citealp{kang2016instrumental,guo2018confidence,windmeijer2019use,windmeijer2019confidence}) to allow for many potentially weak IVs. We show that the sparsest rule, equivalent to the plurality rule on \emph{the whole IV set}, could accommodate weak IVs in empirically relevant scenarios. Furthermore, we revisit the penalized approaches using the sparsest rule and propose a concept of proper surrogate sparsest penalty that targets identification conditions and provides sparse structure. We propose to deploy MCP as a surrogate sparsest penalty and ensure the targeted solution is the global minimizer.  On the contrary, the existing methods \citep{kang2016instrumental,windmeijer2019use} do not fit the surrogate sparsest penalty and hence are mistargeting the model identification.

\item  Our method is a one-step valid IV selection instead of the previous sequential two-step selections \citep{guo2018confidence,windmeijer2019confidence}. This allows us to utilize individually weak IVs instead of discarding them. We provide theoretical foundations to ensure the compatibility of weak IVs under a mild minimal signal condition. Formally, we establish the selection consistency of the valid IV set, the consistency, and asymptotic normality of the proposed treatment effect estimator under many potentially invalid and weak IVs, where both the number of valid and invalid IVs are increasing with the sample size $n$. We also provide the theoretical results for the case of a fixed and finite number of IVs. {  Our model accomodates different rates for IV validity violations which are illustrated through representative low- and high-dimensional cases. }\end{enumerate}

The article is organized as follows. In Section \ref{sec:model}, we describe the model with some invalid IVs and analyze identification conditions in a general way.  In Section \ref{sec:Theorem}, we present the methodology and the novel WIT estimator. We establish the theorems to identify the valid IVs, estimation consistency, and asymptotic normality.  Section \ref{sec:sim} shows the finite sample performance of our proposed estimator using comprehensive numerical experiments. Section \ref{sec:real} applies our methods to  {  study the effect of BMI on diastolic blood pressure using Mendelian Randomization}. Section \ref{sec:con} concludes. All the technical details and proofs are provided in the appendix.

\section{Model and Identification Strategy}\label{sec:model}
\subsection{Potential Outcome Model with Some Invalid IVs}\label{sec:model_1}

{ 
For $i =1,2,\ldots,n$,  we have the random sample $(Y_i,D_i,\boldsymbol{Z}_{i.})$, where $Y_i\in \mathbb{R}^1$ is the outcome variable, $D_i\in\mathbb{R}^1$ is the (endogenous) treatment variable and $\boldsymbol{Z}_i\in\mathbb{R}^p$ are the potential IVs. Following the same model setting as in \cite{small2007sensitivity,kang2016instrumental,guo2018confidence,windmeijer2019confidence,windmeijer2019use}, we consider a linear functional form between treatments $D_i$ and instruments $\boldsymbol{Z}_{i.}$ as the first-stage specification; meanwhile, a linear exposure of $Y_i$ and $D_i$ and $\boldsymbol{Z}_i$ is assumed as follows:
\begin{equation}
\begin{aligned}
Y_{i} &=D_{i} \beta^*+\boldsymbol{Z}_{i .}^{\top} \boldsymbol{\alpha}^*+\epsilon_{i}, \\
D_i &= \boldsymbol{Z}_{i.}^{\top}\boldsymbol{\gamma}^*+\eta_i.\label{Structure}
\end{aligned}
\end{equation}
where $\epsilon_{i},\eta_{i}$ are random errors. 
\begin{remark}
    Assuming a homogeneous treatment effect (denoted as $\beta^*$) among subjects simplifies the identification problem in instrumental variable analysis. 
\end{remark}
Following  \cite{kang2016instrumental}, we define the valid instruments as follows,
\begin{definition}
For $j = 1,\ldots,p$, the $j$-th instrument is valid if $\alpha^*_j = 0$.
\end{definition}
The validity of the $j$-th IV is quantified by $\alpha^*_j$, which captures the direct effect of the potential IV $\boldsymbol{Z}_j$ on the outcome (C1) as well as its influence on unmeasured confounders (C2). More details can be found in  \cite{kang2016instrumental}.}
 Further, we define the valid IV set $\mathcal{V}^* = \{j: \alpha^*_j = 0\}$ and invalid IV set  $\mathcal{V}^{c*} = \{j: \alpha^*_j \neq 0\}$. Let  ${p_{\mathcal{V}^*}} = |\mathcal{V}^*|$, ${p_{\mathcal{V}^{c*}}} = |\mathcal{V}^{c*}|$ and $p = {p_{\mathcal{V}^{*}}}+{p_{\mathcal{V}^{c*}}}$. Notably, ${p_{\mathcal{V}^*}} \geq 1$ refers to the existence of an excluded IV, thus satisfying the order condition \citep{wooldridge2010econometric}.  Let the $n \times p$ matrix of observations on the instruments be denoted by $\boldsymbol{Z}$, and the $n$-vectors of outcomes and treatments by $\boldsymbol{Y}$ and $\boldsymbol{D}$,  respectively.
  We consider the cases of many and weak IVs in \eqref{Structure} and make the following model assumptions:\\
{\bf{Assumption}} 1 (Many valid and invalid IVs): $p<n$, ${p_{\mathcal{V}^{c*}}}/n \rightarrow \upsilon_{p_{\mathcal{V}^{c*}}}+o(n^{-1/2})$ and ${p_{\mathcal{V}^*}}/n\rightarrow \upsilon_{p_{\mathcal{V}^*}}+o(n^{-1/2})$ for some non-negative constants $\upsilon_{p_{\mathcal{V}^{c*}}}$ and $\upsilon_{p_{\mathcal{V}^*}}$ such that $0\leq\upsilon_{p_{\mathcal{V}^*}}+\upsilon_{p_{\mathcal{V}^{c*}}}<1$.\\
{\bf{Assumption}} 2: Assume $\boldsymbol{Z}$ is standardized. It then has full column rank and $\|\boldsymbol{Z}_j\|^2_2\leq n$ for $j = 1,2,\ldots,p$.\\
{\bf{Assumption}} 3: Let $\boldsymbol{u}_{i}=\left(\epsilon_{i}, \eta_{i} \right)^{\top}$. $\boldsymbol{u}_i\mid \boldsymbol{Z}_i$ are i.i.d. and follow a multivariate normal distribution with mean zero and positive definite covariance matrix $ \boldsymbol{\Sigma} = \left( \displaystyle{\begin{smallmatrix}
              \sigma_\epsilon^2 & \sigma_{\epsilon,\eta} \\ \sigma_{\epsilon,\eta} & \sigma_\eta^2
           \end{smallmatrix}}  \right)  $. The elements of $\boldsymbol{\Sigma}$ are finite and $\sigma_{\epsilon,\eta}\neq 0$.\\
{\bf{Assumption}} 4 (Strength of valid IVs): The concentration parameter $\mu_n$ grows at the same rate as $n$, i.e., $\mu_n \coloneqq \boldsymbol{\gamma}^{*\top}_{\boldsymbol{Z}_{\mathcal{V}^*}}\boldsymbol{Z}_{\mathcal{V}^*}^{\top}M_{\boldsymbol{Z}_{\mathcal{V}^{c*}}}\boldsymbol{Z}_{\mathcal{V}^*}\boldsymbol{\gamma}^{*}_{\boldsymbol{Z}_{\mathcal{V}^*}}/\sigma_\eta^2 {\rightarrow} \mu_0 n$, for some $\mu_0>0$.

Assumption 1 is identical to the assumption of many instruments in \cite{kolesar2015identification,kolesar2018minimum}. It relaxes the conventional many IVs assumptions \citep{bekker1994alternative,chao2005consistent} that only allow the dimension of valid IVs ${p_{\mathcal{V}^*}}$ to grow with $n$. Also, it has not been considered in the literature on selecting valid IVs  \citep{kang2016instrumental,guo2018confidence,windmeijer2019confidence,windmeijer2019use}. Assumption 2 is standard for data preprocessing and scaling $\boldsymbol{Z}_j$.
 Assumption 3 follows \cite{guo2018confidence,windmeijer2019confidence} to impose the homoskedasticity assumption and endogeneity of treatment $D_i$. 
{  \begin{remark}
     Under the setting of many valid and invalid IVs, i.e., $\upsilon_{p_{\mathcal{V}^{c*}}}\neq 0$ and $\upsilon_{p_{\mathcal{V}^{c}}}\neq 0$, the homoskedasticity assumption is necessary for the LIML estimator in the estimation stage (see details in Section \ref{sec:Theorem}) to be consistent. In the low-dimensional case, i.e., $\upsilon_{p_{\mathcal{V}^{c*}}} = \upsilon_{p_{\mathcal{V}^{c}}} =0$, we can relax Assumption 3 and use a heteroskedasticity robust version of the TSLS estimator. The normality condition in Assumption 3 is not required for the selection stage; it is required only for the estimation stage (asymptotic results of the embedded LIML estimator). Relaxing the normality assumption could impact the formula of standard errors and its consistent variance estimator \citep{kolesar2018minimum}.
 \end{remark}}
 {Assumption 4 implies a strong identification condition in terms of the concentration parameter \citep{bekker1994alternative,newey2009generalized}.  In the fixed $p$ case, it indicates the presence of a constant coefficient $\gamma_j$, $\exists j$, and the rest of IVs could be weak \citep{staiger1997instrumental}. Specifically, we model weakly correlated IVs as $\gamma = Cn^{-\tau}$, $0<\tau\le 1/2$, which is the ``local to zero'' case \citep{staiger1997instrumental}. Essentially this is a mixture of constant $\gamma$-type and asymptotically diminishing $\gamma$-type instruments for fixed $p$. For $p\to \infty$ in the same order as $n$, we allow all the IVs to be weak in the ``local to zero'' case with specified rates. This IV strength assumption can be further weakened along the lines of \cite{hansen2008estimation} to have weak identification asymptotics. In this paper, we focus on the individually weak (diminishing to zero as in \citealp{staiger1997instrumental}) signals model in high-dimensionality. Notice our model allows for much weaker individually weak IVs regardless of their validity (as long as the concentration parameter satisfies Assumption 4), unlike that of \cite{guo2018confidence}. Nevertheless, the constant $\mu_0$ can be a small number to accommodate empirically relevant finite samples with many individually weak IVs.}

\subsection{Identifiability of Model \eqref{Structure} \label{subsec:plur}}
The following moment conditions can be derived from Model \eqref{Structure}:
\begin{eqnarray}
   E\Big(\boldsymbol{Z}^{\top}(\boldsymbol{D}-\boldsymbol{Z}\boldsymbol{\gamma}^*)\Big) = \boldsymbol{0}, \quad E\Big(\boldsymbol{Z}^\top(\boldsymbol{Y}-\boldsymbol{D}\beta^*-\boldsymbol{Z}\boldsymbol{\alpha}^*)\Big)=\boldsymbol{0} \quad\Rightarrow\quad \boldsymbol{\Gamma}^* = \boldsymbol{\alpha}^*+ \beta^*\boldsymbol{\gamma}^*, \label{moment}
\end{eqnarray}
where $\boldsymbol{\Gamma}^* = E(\boldsymbol{Z}^\top\boldsymbol{Z})^{-1}E(\boldsymbol{Z}^\top\boldsymbol{Y})$ and $\boldsymbol{\gamma}^* = E(\boldsymbol{Z}^\top\boldsymbol{Z})^{-1}E(\boldsymbol{Z}^\top\boldsymbol{D})$, both are identified by the reduced form models.
Without the exact knowledge about which IVs are valid, \cite{kang2016instrumental} considered the identification of $(\boldsymbol{\alpha}^*,\beta^*)$ via the unique mapping of
\begin{equation}
  \beta_j^* = {\boldsymbol{\Gamma}_j^*}/{\boldsymbol{\gamma}_j^*} = \beta^* + {{\alpha}_j^*}/{{\gamma}_j^*}.\label{betaj}
\end{equation}
Notice that the moment conditions (\ref{moment}) consist of $p$ equations, but  $(\boldsymbol{\alpha}^*,\beta^*) \in \mathbb{R}^{p+1}$ need to be estimated and is hence under-identified without further restrictions. \cite{kang2016instrumental} proposed a sufficient condition, called majority rule (first proposed by \citealp{han2008}), such that ${p_{\mathcal{V}^*}} \geq \lceil{p}/{2} \rceil$, to identify the model parameters without any prior knowledge of the validity of individual IVs. However, the majority rule could be restrictive in practice. \cite{guo2018confidence} further relaxed it to the plurality rule as follows:
\begin{equation}
\text{Plurality Rule:}\quad  \Big|\mathcal{V}^* = \left\{j : {\alpha_{j}^{*}}/{\gamma_{j}^{*}}=0\right\}\Big|>\max_{c \neq 0}\Big|\left\{j : {\alpha_{j}^{*}}/{\gamma_{j}^{*}}=c\right\}\Big|,\label{plurality}
\end{equation}
which was stated as an ``if and only if'' condition of identification of $(\boldsymbol{\alpha}^*,\beta^*)$. 

We re-examine the identifiability problem from the model DGP perspective. Given first-stage information:  $\{\boldsymbol{D},\boldsymbol{Z},\boldsymbol{\gamma}^*\}$, without loss of generality, we denote the { true} DGP with $\{\beta^*, \boldsymbol{\alpha}^*, \boldsymbol{\epsilon}\}$ in (\ref{Structure}) as DGP $\mathcal{P}_0$ that generates $\boldsymbol{Y}$. Given this $\mathcal{P}_0$, for $ j \in \mathcal{V}^{c*}$, we {  obtain an alternative representation}: $ \boldsymbol{Z}_j{\alpha}^*_j = \frac{\alpha^*_j}{\gamma_j^*}(\boldsymbol{D}-\sum_{l\neq j}\boldsymbol{Z}_l\gamma^*_l-\boldsymbol{\eta})$. Denote $\mathcal{I}_c=\{j\in \mathcal{V}^{c*}:c={\alpha^*_j}/{\gamma^*_j}\}$, where $c\neq 0$. For compatibility, we denote $\mathcal{I}_0 = \mathcal{V}^*$. Thus, we can reformulate $\boldsymbol{Y} = \boldsymbol{D}\beta^*+\boldsymbol{Z}\boldsymbol{\alpha}^*+\boldsymbol{\epsilon}$ in (\ref{Structure}) to:
\begin{equation}
 \boldsymbol{Y} = \boldsymbol{D}\tilde{\beta}^{c}+\boldsymbol{Z}\tilde{\boldsymbol{\alpha}}^{c}+\tilde{\boldsymbol{\epsilon}}^{c},\label{procedure}
\end{equation}
where $\{\tilde{\beta}^{c},\tilde{\boldsymbol{\alpha}}^{c},\tilde{\boldsymbol{\epsilon}}^{c}\} = \{\beta^*+c,\boldsymbol{\alpha}^*-c\boldsymbol{\gamma}^*,\boldsymbol{\epsilon}-c\boldsymbol{\eta}\}$, for some $ j \in \mathcal{V}^{c*}$.  Evidently, for different $c\neq 0$, it forms different DGPs $\mathcal{P}_c = \{\tilde{\beta}^{c},\tilde{\boldsymbol{\alpha}}^{c},\tilde{\boldsymbol{\epsilon}}^{c}\}$ that can generate \emph{the same} $\boldsymbol{Y}$ (given $\boldsymbol{\epsilon}$), which also satisfies the moment condition (\ref{moment}) as $\mathcal{P}_0$ since  $E(\boldsymbol{Z}^\top\tilde{\boldsymbol{\epsilon}}^c) = \boldsymbol{0}$. Building on the argument of \cite{guo2018confidence}, Theorem 1, the additional number of potential DGPs satisfying the moment condition (\ref{moment}) is the number of distinguished $c \ne 0$ for $ j \in \mathcal{V}^{c*}$.
  We formally state this result in the following theorem.
\begin{theorem}\label{Theorem 1}
Suppose Assumptions 1-3 hold, given $\mathcal{P}_0$ and $\{\boldsymbol{D},\boldsymbol{Z},\boldsymbol{\gamma}^*,\boldsymbol{\eta}\}$, it can only produce additional $G = |\{c\neq 0: \alpha^*_j/\gamma_j^* = c, ~j \in \mathcal{V}^{c*}\}|$ groups of different $\mathcal{P}_c$  such that $\mathcal{V}^* \cup \{\cup_{c\neq 0}\mathcal{I}_c\} = \{1,2\ldots,p\}$, $\mathcal{V}^* \cap \mathcal{I}_c = \varnothing$ for any $c\neq 0$ and $\mathcal{I}_c \cap \mathcal{I}_{\tilde c} = \varnothing$ for $c \ne \tilde c$, and $E(\boldsymbol{Z}^\top \tilde{\boldsymbol{\epsilon}}^c) = \boldsymbol{0}$. The sparsity structure regarding $\boldsymbol{\alpha}$ is non-overlapping for different solutions.
\end{theorem}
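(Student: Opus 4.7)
The plan is to verify the theorem by a direct algebraic construction followed by a combinatorial partition argument. Starting from the candidate triple $\{\tilde{\beta}^c,\tilde{\boldsymbol{\alpha}}^c,\tilde{\boldsymbol{\epsilon}}^c\}=\{\beta^*+c,\boldsymbol{\alpha}^*-c\boldsymbol{\gamma}^*,\boldsymbol{\epsilon}-c\boldsymbol{\eta}\}$, first I would substitute it back into the structural equation and use the first-stage identity $\boldsymbol{D}=\boldsymbol{Z}\boldsymbol{\gamma}^*+\boldsymbol{\eta}$ to confirm
\[
\boldsymbol{D}\tilde{\beta}^c+\boldsymbol{Z}\tilde{\boldsymbol{\alpha}}^c+\tilde{\boldsymbol{\epsilon}}^c=\boldsymbol{D}\beta^*+c(\boldsymbol{D}-\boldsymbol{Z}\boldsymbol{\gamma}^*-\boldsymbol{\eta})+\boldsymbol{Z}\boldsymbol{\alpha}^*+\boldsymbol{\epsilon}=\boldsymbol{Y},
\]
so every $\mathcal{P}_c$ reproduces the same $\boldsymbol{Y}$. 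The validity of the moment restriction $E(\boldsymbol{Z}^\top\tilde{\boldsymbol{\epsilon}}^c)=\boldsymbol{0}$ then follows by linearity from $E(\boldsymbol{Z}^\top\boldsymbol{\epsilon})=\boldsymbol{0}$ and $E(\boldsymbol{Z}^\top\boldsymbol{\eta})=\boldsymbol{0}$ under Assumption 3, confirming that $\mathcal{P}_c$ is observationally equivalent to $\mathcal{P}_0$ with respect to the reduced-form moments used for identification.

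Next, I would characterize the zero-pattern of each $\tilde{\boldsymbol{\alpha}}^c$. Componentwise, $\tilde{\alpha}^c_j=\alpha^*_j-c\gamma^*_j=0$ is equivalent to $\alpha^*_j/\gamma^*_j=c$, which is precisely the defining condition of $\mathcal{I}_c$. Hence two alternative DGPs associated with distinct $c\neq\tilde c$ produce genuinely different parameters $(\tilde{\beta}^c,\tilde{\boldsymbol{\alpha}}^c)\neq(\tilde{\beta}^{\tilde c},\tilde{\boldsymbol{\alpha}}^{\tilde c})$, so the number of alternatives to $\mathcal{P}_0$ equals $G=|\{c\neq 0:\alpha^*_j/\gamma^*_j=c,\,j\in\mathcal{V}^{c*}\}|$. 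For the partition conclusion, I would separately verify: (i) $\mathcal{V}^*\cap\mathcal{I}_c=\varnothing$, since $\alpha^*_j=0$ forces the ratio to be zero rather than $c\neq 0$; (ii) $\mathcal{I}_c\cap\mathcal{I}_{\tilde c}=\varnothing$, since a single index cannot yield two distinct ratio values; and (iii) exhaustiveness, since every index is in $\mathcal{V}^*$ or has $\alpha^*_j\neq 0$ with a well-defined nonzero ratio placing it in a unique $\mathcal{I}_c$.

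The argument is largely bookkeeping; the only delicate aspect is handling indices with $\gamma_j^*=0$, for which the ratio $\alpha^*_j/\gamma^*_j$ is undefined. This is the implicit regularity requirement already embedded in the plurality characterization (\ref{plurality}) of \cite{guo2018confidence}; under Assumptions 1--4 the first-stage structure ensures that invalid IVs admit a consistent assignment into some $\mathcal{I}_c$. The non-overlapping sparsity structure claim then follows from the partition: the zero-support of $\tilde{\boldsymbol{\alpha}}^c$ is exactly $\mathcal{I}_c$ (with $\mathcal{I}_0=\mathcal{V}^*$), and the blocks $\{\mathcal{I}_c\}_{c}$ tile $\{1,\dots,p\}$ without overlap, so different solutions annihilate disjoint subsets of coordinates. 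This explicit partition structure is exactly what the subsequent surrogate sparsest penalty will be designed to exploit.
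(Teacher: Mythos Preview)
Your construction and partition verification are correct and match the paper's first step, but you have a genuine gap: you establish that the $G$ constructed $\mathcal{P}_c$'s are observationally equivalent to $\mathcal{P}_0$ and have disjoint zero-patterns, but you never show these are the \emph{only} such DGPs. The theorem's ``can only produce'' requires an upper bound on the number of sparse solutions satisfying the moment condition, not just a lower bound. Your sentence ``so the number of alternatives to $\mathcal{P}_0$ equals $G$'' conflates ``distinct $c$'s give distinct $\mathcal{P}_c$'s'' (which you prove) with ``every alternative is some $\mathcal{P}_c$'' (which you do not).

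The paper closes this gap by contradiction: suppose $\{\breve\beta,\breve{\boldsymbol\alpha},\breve{\boldsymbol\epsilon}\}$ is an additional sparse DGP with $E(\boldsymbol Z^\top\breve{\boldsymbol\epsilon})=\boldsymbol 0$. Since $\breve{\boldsymbol\alpha}$ has at least one zero, pick $j$ with $\breve\alpha_j=0$; by the exhaustive partition this $j$ lies in some $\mathcal{I}_g$ (possibly $g=0$). The reduced-form moment identity $\boldsymbol\Gamma^*=\boldsymbol\alpha+\beta\boldsymbol\gamma^*$ then forces $\Gamma^*_j=\breve\beta\gamma^*_j=\tilde\beta^g\gamma^*_j$, so $\breve\beta=\tilde\beta^g$, and the remaining coordinates of $\breve{\boldsymbol\alpha}$ are pinned down by the same identity, yielding $\breve{\boldsymbol\alpha}=\tilde{\boldsymbol\alpha}^g$. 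This is the missing piece: the moment condition (\ref{moment}) is what rules out alternatives beyond the one-parameter family indexed by $c$, and you need to invoke it explicitly on an arbitrary candidate, not just verify it on the constructed ones.
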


Theorem \ref{Theorem 1} shows there is a collection of model DGPs { (different parametrizations)}
\begin{equation}
    \mathcal{Q} = \{\mathcal{P}=\{\beta,\boldsymbol{\alpha},\epsilon\}: \boldsymbol{\alpha}\,\, \text{is sparse},E(\boldsymbol{Z}^\top\boldsymbol{\epsilon}) = \boldsymbol{0}\} \label{Q:collection of DGPs}
\end{equation}  corresponding to the same observation $\boldsymbol{Y}$ conditional on first-stage information. Given some $\mathcal{P}_0$, there are additional $1 \le G\le {p_{\mathcal{V}^{c*}}}$ equivalent DGPs. All members in $\mathcal{Q}$ are related through the transformation procedure (\ref{procedure}) and $1 <|\mathcal{Q}| = G+1\leq p$. Notably, the non-overlapping sparse structure among all possible DGPs leads to the sparsest model regarding $\boldsymbol{\alpha}^*$ being equivalent to plurality rule $|\mathcal{V}^*|> {\operatorname{max}}_{c \neq 0} |\mathcal{I}_{c}|$ in the whole set of IVs. 

\subsection{The Sparsest ($\boldsymbol{\alpha}$) Rule}
The sparsest rule is conceptually equivalent to the plurality rule on the whole IV set, considering the non-overlapping sparse solutions given by Theorem \ref{Theorem 1}.
To relax the majority rule,
\cite{guo2018confidence} proposed to use the plurality rule based on the relevant IV set:
\begin{equation}
 \Big|\mathcal{V}^*_{\mathcal{S}^*} = \left\{j \in \mathcal{S}^{*}: {\alpha_{j}^{*}}/{\gamma_{j}^{*}}=0\right\}\Big|>\max_{c \neq 0}\Big|\left\{j \in \mathcal{S}^{*}: {\alpha_{j}^{*}}/{\gamma_{j}^{*}}=c\right\}\Big|,
\end{equation}
where $\mathcal{S}^*$ is the set of strong IVs estimated by $\hat{\mathcal{S}}$ via first-step hard thresholding. 
Thus, TSHT  and  CIIV  explicitly leverage the $\hat{\mathcal{S}}$-based plurality rule to estimate  $\mathcal{V}^*_{\hat{\mathcal{S}}}$ and $\beta^*$.

In contrast to earlier studies on invalid IVs, our approach leverages the information from weak IVs. Firstly, weak IVs can be employed to estimate $\beta^*$. In situations where strong IVs are not available, weak IV robust estimators such as LIML prove useful \citep{Andrews2019}. Secondly, weak IVs can aid in the identification of the valid IVs set, as demonstrated in Theorem \ref{Theorem 3}. When weak IVs are present, the plurality rule applied after the initial selection of strong instruments may yield unstable estimates of $\mathcal{V}^*$, as exemplified in the following scenario.

\begin{example}[weak and invalid IVs]\label{Example 1}
 Let $\boldsymbol{\gamma}^* = (\boldsymbol{0.04}_3,\boldsymbol{0.5}_2,0.2,\boldsymbol{0.1}_4)^\top$ and $\boldsymbol{\alpha}^* = (\boldsymbol{0}_5,1,\boldsymbol{0.7}_4)^\top$.  There are therefore three groups: $\mathcal{I}_0 = \mathcal{V}^* = \{1,2,3,4,5\}$, $\mathcal{I}_{5} = \{6\}$, $\mathcal{I}_{7} = \{7,8,9,10\}$ and plurality rule $|\mathcal{I}_0|>{\operatorname{max}}_{c= 5,7} |\mathcal{I}_{c}|$ holds in the whole IVs set. This setup satisfies the individually weak IVs in fixed $p$ (Discuss later in Corollary \ref{Corollary 2}).
 \begin{figure}[tbp]
\centering
\begin{minipage}[t]{0.45\textwidth}
\centering
\includegraphics[width=7cm]{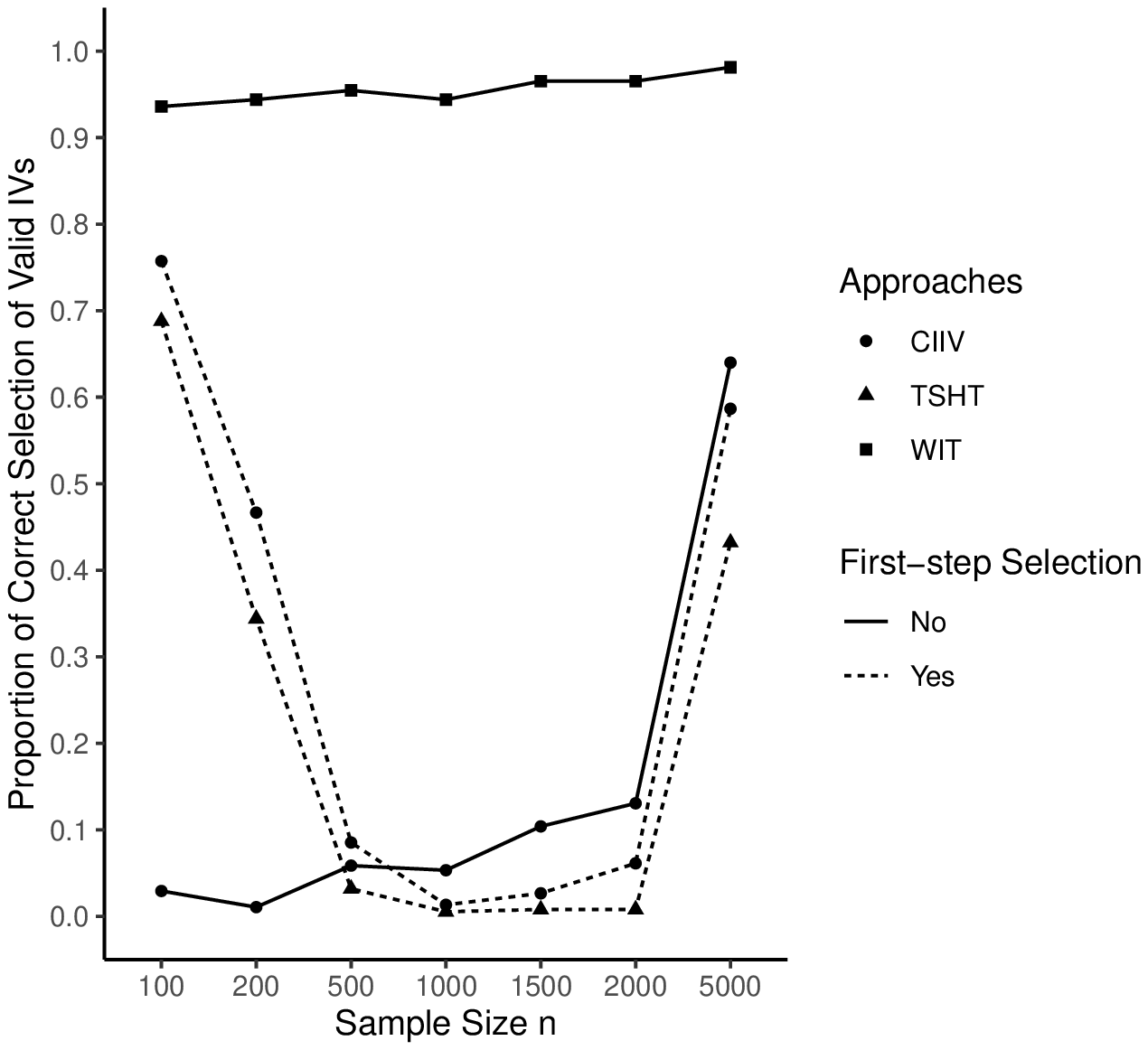}
\caption{The proportion of correct selection of (subset) valid IVs based on 500 replications on each sample size.}
\label{fig 2}
\end{minipage}
~~~
\begin{minipage}[t]{0.45\textwidth}
\centering
\includegraphics[width=7cm]{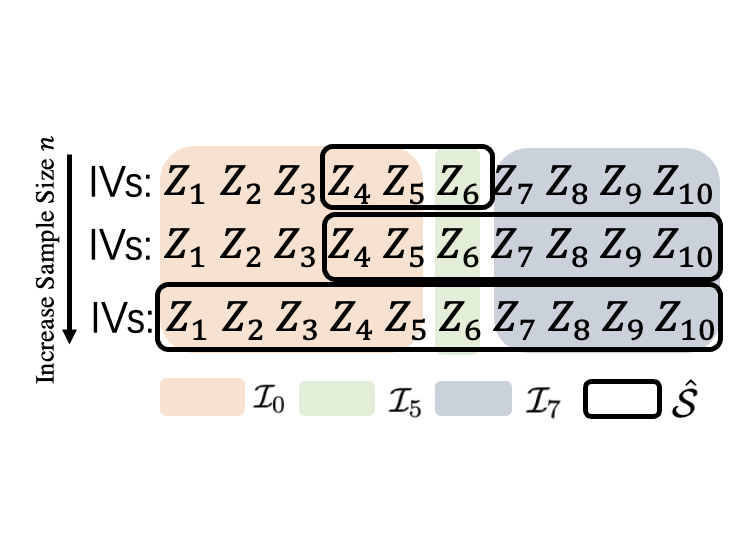}
\caption{Illustration of Plurality rule based on first-step selection.}
\label{fig 3}
\end{minipage}
\end{figure}
Fig. \ref{fig 2} shows that the selection of valid IVs by CIIV and TSHT breaks down in finite samples, e.g., for $n \in [500,2000]$. This is because the solution of the plurality rule after first-stage selection in the finite sample (which may not hold in practice even though it is in theory) is quite sensitive to IV strength and sample sizes. On the other hand, weak IVs also deteriorate the performance of CIIV without first-step selection. {Notably, the proposed WIT estimator significantly outperforms others.} Fig. \ref{fig 3} demonstrates the relevant set $\mathcal{S}^*$ selected by plurality rule-based TSHT and CIIV. It clearly shows $\hat{ \mathcal{S}}$ is unstable and changes with sample size, even though the plurality rule holds in the whole IV set.
\end{example}

The mixture of weak and invalid IVs is ubiquitous in practice, especially in many IVs cases. For the sake of using all instruments' information for estimating $\beta^*$ and identification of $\mathcal{V}^*$, we allow some individual IV strengths to be local to zero  \citep{chao2005consistent}, say $\gamma^*_j\rightarrow 0$, or a small constant that cannot pass the first-stage threshold \citep{guo2018confidence} unless with a very large sample size. However, we can see that in (\ref{betaj}), plurality rule-based methods that rely on first-stage selection are problematic, since $\mathcal{I}_0 = \{j :{\alpha^*_j}/{\gamma_j^*} = 0\}$ is ill-defined asymptotically due to the problem of ``${0}/{0}$'' if $\gamma_j^*$ is local to zero.

For using weak IVs information and improving finite sample performance, we turn to the sparsest rule that is also operational in computation algorithms. From the multiple DGPs $\mathcal{Q}$, recall $\mathcal{P}_c = \{\tilde{\beta}^{c},\tilde{\boldsymbol{\alpha}}^{c},\tilde{\boldsymbol{\epsilon}}^{c}\} = \{\beta^*+c,\boldsymbol{\alpha}^*-c\boldsymbol{\gamma}^*,\boldsymbol{\epsilon}-c\boldsymbol{\eta}\}$, where $\tilde{\boldsymbol{\alpha}}^c_{\mathcal{I}_c} = \boldsymbol{0}$. For other elements in $\tilde{\boldsymbol{\alpha}}^c$ (corresponding to a different DGP in $\mathcal{Q}$) and  $j \in \{j:\alpha^*_j/\gamma^*_j = \tilde{c}\neq c\}$, we obtain
\begin{equation}
    |\tilde{{\alpha}}^c_j| = |\alpha^*_j-c\gamma_j^*| = |\alpha^*_j/\gamma^*_j-c| \cdot |\gamma^*_j| = |\tilde{c}-c| \cdot |\gamma^*_j|.\label{GAP}
\end{equation}

The above $|\tilde{\alpha}^c_j|$ needs to be distinguished from $0$ on the ground of the non-overlapping structure stated in Theorem \ref{Theorem 1}. To facilitate the discovery of all solutions in $\mathcal{Q}$, we assume: \\
{\bf{Assumption}} 5 (Separation Condition): $|\boldsymbol{\alpha}_{\mathcal{V}^{c*}}^*|_{\text{min}}>\kappa(n)$
 and $|\tilde{\alpha}_j^c|>\kappa^{c}(n)$ for $j \in \{j:\alpha^*_j/\gamma^*_j = \tilde{c}\neq c\}$, where $\kappa(n)$ and $\kappa^c(n)$ are a generally vanishing rate specified by a particular estimator.

The conditions described above are comparable to the ``beta-min" condition \citep{van2009conditions,loh2015regularized}. {  The rates $\kappa(n)$ and $\kappa^c(n)$ will be further specified in Theorem \ref{Theorem 3}. We aim to facilitate the understanding of these technical conditions by providing practical examples. In Section 3.3, we introduce two commonly encountered DGPs in both low and high dimensions. These examples demonstrate the validity of Assumption 5 without delving into the intricate specifics of the separation (beta-min) condition. 

\begin{remark}\label{Remark 3}
    Notably, as shown in (\ref{GAP}), $|\tilde{\alpha}^c_j| = |\tilde{c}-c|\cdot|\gamma^*_j|>\kappa^c(n)$ depends on the product of $|\tilde{c}-c|$ and $|\gamma_j^*|$. As discussed in \cite{guo2018confidence}, $|\tilde{c}-c|$ cannot be too small to separate different solutions in $\mathcal{Q}$, and a larger gap $|\tilde{c}-c|$ is helpful to mitigate the problem of small or local to zero $|\gamma_j^*|$ in favor of our model.
\end{remark}
}

Hence, the identification condition known as the sparsest rule is formally defined as\\
{\bf{Assumption}} 6: {(The Sparsest Rule)}: $\boldsymbol{\alpha}^* = {\operatorname{arg min}}_{\mathcal{P}=\{\beta,\boldsymbol{\boldsymbol{\alpha}},\boldsymbol{\epsilon}\}\in \mathcal{Q}} \,\,\|\boldsymbol{\alpha}\|_0.$
\addtocounter{example}{-1}
\begin{example}[continued]\label{Example 2}
Following the procedure (\ref{procedure}), we are able to reformulate two additional solutions of \eqref{moment} given the DGP of Example 1, $\boldsymbol{\alpha}^* = (\boldsymbol{0}_5,1,\boldsymbol{0.7}_4)^\top$: $\tilde{\boldsymbol{\alpha}}^5 = (\boldsymbol{-0.2}_3,\boldsymbol{-2.5}_2,0,\boldsymbol{0.2}_4)^\top$ and $\tilde{\boldsymbol{\alpha}}^7 = (\boldsymbol{-0.28}_3,\boldsymbol{-3.5}_2,-0.4,\boldsymbol{0}_4)^\top$. Thus, the sparsest rule ${\operatorname{arg min}}_{\boldsymbol{\alpha} \in \{\boldsymbol{\alpha}^*,\tilde{\boldsymbol{\alpha}}^5, \tilde{\boldsymbol{\alpha}}^7\}} \|\boldsymbol{\alpha}\|_0$ picks $\boldsymbol{\alpha}^*$ up, and Assumption 5 is easy to satisfy since fixed minimum absolute values except $0$ are $0.7,0.2,0.28$ in $\boldsymbol{\alpha}^*,\tilde{\boldsymbol{\alpha}}^5,\tilde{\boldsymbol{\alpha}}^7$, respectively. This example shows the first-stage signal should not interfere with the valid IV selection in the structural form equation in \eqref{Structure}, as long as the first-stage has  sufficient information (concentration parameter requirement in Assumption 4). Therefore, the most sparse rule using the whole IVs set is desirable. It is also shown to be stable in numerical studies. The detailed performance of the proposed method under this example refers to Case 1(II) in Section \ref{sec:simu1}.
\end{example}

In the next subsection, we reexamine the penalized approaches by \cite{kang2016instrumental} and \cite{windmeijer2019use}, and discuss a class of existing estimators concerning penalization, identification, and computation. We also explore the general penalization approach that aligns model identification with its objective function.

 \subsection{Penalization Approaches with Embedded Surrogate Sparsest Rule}

A Lasso penalization approach was first used in the unknown IV validity context by \cite{kang2016instrumental}. We extend this to a general formulation and discuss the properties of different classes of penalties.

 Consider a general penalized estimator based on moment conditions \eqref{moment},
\begin{equation}
     (\widehat{\boldsymbol{\alpha}}^{\text{pen}},\hat{\beta}^{\text{pen}}) = \underset{\boldsymbol{\alpha},\beta}{\operatorname{arg min}}\,\,\,  \underbrace{\frac{1}{2n}\|P_{\boldsymbol{Z}}(\boldsymbol{Y}-\boldsymbol{Z}\boldsymbol{\alpha}-\boldsymbol{D}\beta)\|_2^2}_{(I)}+ \underbrace{p_\lambda^{\text{pen}}(\boldsymbol{\alpha})}_{(II)}.\label{problem1}
\end{equation}
where $p_\lambda^{\text{pen}}(\boldsymbol{\alpha}) = \sum_{j=1}^{p}p_\lambda^{\text{pen}}(\alpha_j)$ and  $p_\lambda^{\text{pen}}(\cdot)$ %
 is a general penalty function with tuning parameter $\lambda > 0$ and ${p_\lambda^{\text{pen}}}^{\prime}(\cdot)$ is its derivative that satisfy:  $\underset{x \rightarrow 0^+}{\operatorname{lim}}\, {p_\lambda^{\text{pen}}}^{\prime}(x) = \lambda$,  $p_\lambda^{\text{pen}}(0) = 0$, $p_\lambda^{\text{pen}}(x) =p_\lambda^{\text{pen}}(-x) $,  $(x-y)(p_\lambda^{\text{pen}}(x)-p_\lambda^{\text{pen}}(y))\geq0$, and ${p_\lambda^{\text{pen}}}^{\prime}(\cdot)$ is continuous on $(0,\infty)$.

 In the RHS of (\ref{problem1}), (I) and (II) correspond to two requirements for the collection of valid DGPs in $\mathcal{Q}$ defined in (\ref{Q:collection of DGPs}). $(I)$ is a scaled finite sample version of $E\Big((\boldsymbol{Z}^\top  \boldsymbol{\epsilon})^\top (\boldsymbol{Z}^\top\boldsymbol{Z})^{-1}(\boldsymbol{Z}^\top  \boldsymbol{\epsilon})\Big)$, which is a $(\boldsymbol{Z}^\top\boldsymbol{Z})^{-1}$ weighted quadratic term of condition $E(\boldsymbol{Z}^\top\boldsymbol{\epsilon}) = \boldsymbol{0}$, and $(II)$ is imposed to ensure sparsity structure in $\widehat{\boldsymbol{\alpha}}$.

Further, regarding $(I)$, one can reformulate (\ref{problem1}) with respect to $\widehat{\boldsymbol{\alpha}}^{\text{pen}}$ as
\begin{equation}
     \widehat{\boldsymbol{\alpha}}^{\text{pen}} = \underset{\boldsymbol{\alpha}}{\operatorname{arg min}}\,\,\, \frac{1}{2n}\|\boldsymbol{Y}-{\tilde{\boldsymbol{Z}}}\boldsymbol{\alpha}\|_2^2+p_\lambda^{\text{pen}}(\boldsymbol{\alpha}),
\label{seqid}
\end{equation}
where $\tilde{\boldsymbol{Z}} = M_{\widehat{\boldsymbol{D}}}\boldsymbol{Z}$ and $\widehat{\boldsymbol{D}} = P_{\boldsymbol{Z}}\boldsymbol{D} = \boldsymbol{Z}\widehat{\boldsymbol{\gamma}}$, where $\widehat{\boldsymbol{\gamma}}$ is the least squares estimator of $\boldsymbol{\gamma}$, see \cite{kang2016instrumental}. The design matrix $\tilde{\boldsymbol{Z}}$ is rank-deficient with rank $p-1$ since $\tilde{\boldsymbol{Z}}\widehat{\boldsymbol{\gamma}} = 0$. However, we show that it does not affect the $\boldsymbol{\alpha}$ support recovery using a proper penalty function.  {{On the other hand, $\tilde{\boldsymbol{Z}}$ is a function of $\boldsymbol{\eta}, \boldsymbol{\gamma}^*$ and $\boldsymbol{Z}$, hence is correlated with $\boldsymbol{\epsilon}$. This inherited endogeneity initially stems from $\widehat{\boldsymbol{D}}$, in which $E(\widehat{\boldsymbol{D}}^\top \boldsymbol{\epsilon}) = \sigma_{\epsilon,\eta}^2 p/n$ does not vanish in the many IVs model (Assumption 1).
The following lemma implies that the level of endogeneity of each $\tilde{\boldsymbol{Z}_j}$ is limited.
}}
\begin{lemma}\label{Lemma 1}
 Suppose Assumptions 1-4 hold and denote average gram matrix $\boldsymbol{Q_n} = \boldsymbol{Z}^\top \boldsymbol{Z}/n$. The endogeneity level of the $j$-th transformed IV $\tilde{\boldsymbol{Z}_j}$  follows
\begin{equation}
     \tilde{\boldsymbol{Z}}_j^\top \boldsymbol{\epsilon}/n
     = \underbrace{\sigma_{\epsilon,\eta}^2 p/n}_{E(\widehat{\boldsymbol{D}}^\top \boldsymbol{\epsilon}/n)}\cdot \underbrace{\frac{\boldsymbol{Q}_{nj}^\top \boldsymbol{\gamma}^*}{\boldsymbol{\gamma}^{*\top}\boldsymbol{Q}_{n}\boldsymbol{\gamma}^*+\sigma_{\eta}^2 p/n}}_{\text{dilution weight}} +O_p(n^{-1/2}).
\end{equation}
\end{lemma}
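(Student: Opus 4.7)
The plan is to decompose $\tilde{\boldsymbol{Z}}_j^\top \boldsymbol{\epsilon}/n$ by substituting $\tilde{\boldsymbol{Z}}_j = M_{\widehat{\boldsymbol{D}}}\boldsymbol{Z}_j = \boldsymbol{Z}_j - \widehat{\boldsymbol{D}}(\widehat{\boldsymbol{D}}^\top\widehat{\boldsymbol{D}})^{-1}\widehat{\boldsymbol{D}}^\top\boldsymbol{Z}_j$, giving
\[
\tilde{\boldsymbol{Z}}_j^\top \boldsymbol{\epsilon}/n \;=\; \boldsymbol{Z}_j^\top \boldsymbol{\epsilon}/n \;-\; \bigl(\boldsymbol{Z}_j^\top\widehat{\boldsymbol{D}}/n\bigr)\cdot\bigl(\widehat{\boldsymbol{D}}^\top\widehat{\boldsymbol{D}}/n\bigr)^{-1}\cdot\bigl(\widehat{\boldsymbol{D}}^\top\boldsymbol{\epsilon}/n\bigr).
\]
Since $E(\boldsymbol{Z}_j^\top\boldsymbol{\epsilon})=0$ and $\|\boldsymbol{Z}_j\|_2^2\leq n$ by Assumption 2, a CLT argument immediately shows $\boldsymbol{Z}_j^\top\boldsymbol{\epsilon}/n = O_p(n^{-1/2})$, which can be absorbed into the remainder. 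The rest of the argument reduces to a ratio expansion on the second term.

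Next, I would use the first-stage relation $\boldsymbol{D} = \boldsymbol{Z}\boldsymbol{\gamma}^* + \boldsymbol{\eta}$ to rewrite $\widehat{\boldsymbol{D}} = P_{\boldsymbol{Z}}\boldsymbol{D} = \boldsymbol{Z}\boldsymbol{\gamma}^* + P_{\boldsymbol{Z}}\boldsymbol{\eta}$ and analyze each factor of the product separately. For the numerator of the weight, writing $\widehat{\boldsymbol{\gamma}} = \boldsymbol{\gamma}^* + (\boldsymbol{Z}^\top\boldsymbol{Z})^{-1}\boldsymbol{Z}^\top\boldsymbol{\eta}$, one gets $\boldsymbol{Z}_j^\top\widehat{\boldsymbol{D}}/n = \boldsymbol{Q}_{nj}^\top \boldsymbol{\gamma}^* + O_p(n^{-1/2})$ by bounding the noise contribution with Assumption 2 and Assumption 3. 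For the denominator,
\[
\widehat{\boldsymbol{D}}^\top\widehat{\boldsymbol{D}}/n \;=\; \boldsymbol{\gamma}^{*\top}\boldsymbol{Q}_n\boldsymbol{\gamma}^* \;+\; 2\boldsymbol{\gamma}^{*\top}\boldsymbol{Z}^\top\boldsymbol{\eta}/n \;+\; \boldsymbol{\eta}^\top P_{\boldsymbol{Z}}\boldsymbol{\eta}/n,
\]
whose middle term is $O_p(n^{-1/2})$ (variance bounded by $\sigma_\eta^2\boldsymbol{\gamma}^{*\top}\boldsymbol{Q}_n\boldsymbol{\gamma}^*/n$) and whose last term equals $\sigma_\eta^2 p/n + O_p(\sqrt{p}/n) = \sigma_\eta^2 p/n + O_p(n^{-1/2})$ via the trace formula $E[\boldsymbol{\eta}^\top P_{\boldsymbol{Z}}\boldsymbol{\eta}]=\sigma_\eta^2 p$. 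For the cross term,
\[
\widehat{\boldsymbol{D}}^\top\boldsymbol{\epsilon}/n \;=\; \boldsymbol{\gamma}^{*\top}\boldsymbol{Z}^\top\boldsymbol{\epsilon}/n \;+\; \boldsymbol{\eta}^\top P_{\boldsymbol{Z}}\boldsymbol{\epsilon}/n,
\]
where the first summand is $O_p(n^{-1/2})$ and $E[\boldsymbol{\eta}^\top P_{\boldsymbol{Z}}\boldsymbol{\epsilon}]=\sigma_{\epsilon,\eta}\,\mathrm{tr}(P_{\boldsymbol{Z}}) = \sigma_{\epsilon,\eta} p$, with fluctuations of order $\sqrt{p}/n = O_p(n^{-1/2})$ under Assumption 1 and the normality in Assumption 3 (Gaussian quadratic-form concentration).

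Finally, I would combine these three expansions through a ratio linearization. Assumption 4 ensures $\boldsymbol{\gamma}^{*\top}\boldsymbol{Q}_n\boldsymbol{\gamma}^*$ is bounded away from zero (of order $\mu_0\sigma_\eta^2$), while Assumption 1 keeps $\sigma_\eta^2 p/n$ bounded; hence the denominator is bounded below by a strictly positive constant with probability tending to one, so writing $(A+O_p(n^{-1/2}))/(B+O_p(n^{-1/2})) = A/B + O_p(n^{-1/2})$ is legitimate. Collecting the leading pieces yields the stated representation, with the dilution weight $\boldsymbol{Q}_{nj}^\top\boldsymbol{\gamma}^*/\bigl(\boldsymbol{\gamma}^{*\top}\boldsymbol{Q}_n\boldsymbol{\gamma}^* + \sigma_\eta^2 p/n\bigr)$ naturally emerging from the ratio of leading terms. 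The main obstacle is not any single calculation but the bookkeeping required to show that the cross-products between ``small'' ($O_p(n^{-1/2})$) and ``deterministic'' ($p/n$) pieces all collapse into a single $O_p(n^{-1/2})$ remainder, uniformly in $j$; verifying this cleanly under the many-instrument regime $p/n\to \upsilon<1$ is where Assumptions 1, 3, and 4 must be used in concert, particularly to control $\boldsymbol{\eta}^\top P_{\boldsymbol{Z}}\boldsymbol{\epsilon}/n$ and to keep the denominator well-separated from zero.
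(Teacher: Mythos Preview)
Your proposal is correct and follows essentially the same route as the paper's proof: both decompose $\tilde{\boldsymbol{Z}}_j^\top\boldsymbol{\epsilon}/n$ via $M_{\widehat{\boldsymbol{D}}}$, analyze the three factors $\boldsymbol{Z}_j^\top\widehat{\boldsymbol{D}}/n$, $\widehat{\boldsymbol{D}}^\top\boldsymbol{\epsilon}/n$, and $\widehat{\boldsymbol{D}}^\top\widehat{\boldsymbol{D}}/n$ separately using $\widehat{\boldsymbol{D}}=\boldsymbol{Z}\boldsymbol{\gamma}^*+P_{\boldsymbol{Z}}\boldsymbol{\eta}$ and trace arguments for the quadratic forms in $(\boldsymbol{\eta},\boldsymbol{\epsilon})$, and then combine via a ratio expansion. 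Your explicit invocation of Assumption~4 to bound the denominator away from zero is a point the paper leaves implicit but is indeed what makes the $O_p(n^{-1/2})$ linearization legitimate.
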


\begin{remark}
Under Assumption 1, $p/n\rightarrow\upsilon_{p_{\mathcal{V}^*}}+\upsilon_{p_{\mathcal{V}^{c*}}} <1$ does not vanish as $n \to \infty$. The dilution weight is related to $\boldsymbol{Q}_n$ and first-stage signal $\boldsymbol{\gamma}^*$. {{In general the dilution weight is $o(1)$ and hence negligible except for the existence of dominated $\gamma^*_j$.}} However, in the fixed $p$ case, since $p/n\rightarrow0$, the endogeneity of $\tilde{\boldsymbol{Z}}$ disappears asymptotically.
\end{remark}

Concerning $(II)$ in (\ref{problem1}), Theorem \ref{Theorem 1} shows that model \eqref{Structure} can be identified by different strategies with non-overlapping results. On the ground of the sparsest rule assumption, the role of the penalty on $\boldsymbol{\alpha}$, i.e.\ $p_\lambda^{\text{pen}}(\boldsymbol{\alpha})$, should not only impose a sparsity structure but also serve as an objective function corresponding to the identification condition we choose. For example, the penalty $\lambda \|\boldsymbol{\alpha}\|_0$ matches the sparsest rule.

To see the roles of a proper penalty function clearly, we rewrite ($\ref{problem1}$) into an equivalent constrained objective function with the optimal penalty $\|\boldsymbol{\alpha}\|_0$ regarding the sparsest rule:
\begin{equation}
     (\widehat{\boldsymbol{\alpha}}^{\text{opt}},\hat{\beta}^{\text{opt}}) = \underset{\boldsymbol{\alpha},\beta}{\operatorname{arg min}}\,\,\, \|\boldsymbol{\alpha}\|_0 \quad \text{s.t.} \,\, \|P_{\boldsymbol{Z}}(\boldsymbol{Y}-\boldsymbol{D}\beta-\boldsymbol{Z}\boldsymbol{\alpha})\|_2^2<\delta, \label{view of objective function}
\end{equation}
where $\delta$ is the tolerance level which we specify in Section \ref{sec:sim}. The constraint above narrows the feasible solutions into $\mathcal{Q}$ because it aligns with the Sargan test statistic, \citep{sargan1958estimation}, $\|P_{\boldsymbol{Z}}(\boldsymbol{Y}-\boldsymbol{D}\beta-\boldsymbol{Z}\boldsymbol{\alpha})\|_2^2/\|(\boldsymbol{Y}-\boldsymbol{D}\beta-\boldsymbol{Z}\boldsymbol{\alpha})/\sqrt{n}\|_2^2  = O_p(1)$ under the null hypothesis $E(\boldsymbol{Z}^\top \boldsymbol{\epsilon}) = \boldsymbol{0}$ as required in $\mathcal{Q}$; otherwise, the constraint becomes $O_p(n)$ that cannot be bounded by $\delta$.
Thus, a properly chosen  $\delta$ in \eqref{view of objective function} leads to an equivalent optimization problem 
\begin{equation*}
      (\widehat{\boldsymbol{\alpha}}^{\text{opt}},\hat{\beta}^{\text{opt}}) =\underset{{\mathcal{P}=\{\beta,\boldsymbol{\boldsymbol{\alpha}},\boldsymbol{\epsilon}\}\in \mathcal{Q}}}{\operatorname{arg min}}\,\,\, \|\boldsymbol{\alpha}\|_0
\end{equation*}
that matches the identification condition in Assumption 6.
Therefore, the primary optimization object in (\ref{view of objective function}) should also serve as an identification condition: the sparsest rule. 

{  \begin{remark}
    The constrained optimization perspective provides valuable insights into the issue of invalid IVs in \eqref{Structure}. Specifically, the penalty term in \eqref{problem1} not only imposes a sparse structure but also serves as an identification rule differing from penalized linear regressions without an endogeneity issue.
\end{remark}}

Due to computational NP-hardness for $\|\boldsymbol{\alpha}\|_0$ in (\ref{view of objective function}), a surrogate penalty function is needed.  \cite{kang2016instrumental} proposed to replace the optimal $l_0$-norm with Lasso in (\ref{problem1}), denoting their estimator sisVIVE as $(\widehat{\boldsymbol{\alpha}}^{\text{sis}},\hat{\beta}^{\text{sis}})$. And $\mathcal{V}^*$ is estimated as $\hat{\mathcal{V}}^{\text{sis}} = \{j:\widehat{\alpha}_j^{\text{sis}} = 0\}$. However, the surrogate $\ell_1$ penalty brings the following issues: {  (a) Failure in consistent variable selection under some deterministic conditions, namely the sign-aware invalid IV strength (SAIS) condition \citep[Proposition 2]{windmeijer2019use}: 
\begin{equation}
\left|\widehat{\boldsymbol{\gamma}}_{\mathcal{V}^{c*}}^{\top} {\operatorname{sgn}}\left(\boldsymbol{\alpha}^*_{\mathcal{V}^{c*}}\right)\right|>\left\|\widehat{\boldsymbol{\gamma}}_{\mathcal{V}^*}\right\|_{1}  \label{SAIS} 
\end{equation}
The sisVIVE cannot achieve $\mathcal{P}_0$ under the  SAIS condition, which is likely to hold when the invalid IVs are relatively stronger than valid ones. (b) \cite{kang2016instrumental}, Theorem 2, proposed
a non-asymptotic error bound $|\hat{\beta}^{\text{sis}}-\beta^*|$ for sisVIVE. The dependence of the restricted isometry property (RIP) constant of $P_{\widehat{\boldsymbol{D}}}\boldsymbol{Z}$ and the error bound is not clear due to the random nature of $\widehat{\boldsymbol{D}}$. (c) The objective function deviates from the original sparsest rule: $g_1(\mathcal{P}) = \|\boldsymbol{\alpha}\|_0$ and $g_2(\mathcal{P}) = \|\boldsymbol{\alpha}\|_1$ correspond to incompatible identification conditions unless satisfying an additional strong requirement
    \begin{equation}
         \boldsymbol{\alpha}^* = {\operatorname{arg min}}_{\mathcal{P}=\{\beta,\boldsymbol{\boldsymbol{\alpha}},\boldsymbol{\epsilon}\}\in \mathcal{Q}} \,\,g_j(\mathcal{P}), \forall j = 1,2
         \iff \|\boldsymbol{\alpha}^*-c\boldsymbol{\gamma}^*\|_1>\|\boldsymbol{\alpha}^*\|_1, \forall c\neq 0,\label{additional con}
    \end{equation}
    which further impedes sisVIVE in estimating $\beta^* \in \mathcal{P}_0$.
    }

Problems (a) and (b) relate to the Lasso problem within invalid IVs, while (c) exposes a deeper issue beyond Lasso: a proper surrogate penalty in (\ref{seqid}) must align with the identification condition.

\citet{windmeijer2019use} suggested the Adaptive Lasso \citep{zou2006adaptive}, using as an initial estimator the median estimator of \cite{han2008} to tackle problem (a)'s SAIS issue. This solution also addresses (c) concurrently. However, it necessitates a more demanding majority rule and requires all IVs to be strong in the fixed $p$ case. Like TSHT and CIIV, it is sensitive to weak IVs.

The following proposition outlines the appropriate surrogate sparsest penalty.
\begin{proposition}\label{Proposition 1}
{\bf (The proper Surrogate Sparsest penalty)} Suppose Assumptions 1-6 are satisfied. If $p_\lambda^{\text{pen}}(\boldsymbol{\alpha})$ is the surrogate sparsest rule in the sense that it gives sparse solutions and
\begin{equation}
     \boldsymbol{\alpha}^* = \underset{{\mathcal{P}=\{\beta,\boldsymbol{\boldsymbol{\alpha}},\boldsymbol{\epsilon}\}\in \mathcal{Q}}}{\operatorname{arg min}} \|\boldsymbol{\alpha}\|_0 = \underset{{\mathcal{P}=\{\beta,\boldsymbol{\boldsymbol{\alpha}},\boldsymbol{\epsilon}\}\in \mathcal{Q}}}{\operatorname{arg min}} p_\lambda^{\text{pen}}(\boldsymbol{\alpha}),\label{surrogate penalty}
\end{equation}
then $p_\lambda^{\text{pen}}(\cdot)$ must be concave and ${p_\lambda^{\text{pen}}}^{\prime}(t) = O(\lambda \kappa(n))$ for any $t>\kappa(n)$.
\end{proposition}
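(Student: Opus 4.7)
The plan is to prove necessity by showing that if either concavity or the stated derivative bound fails, then an admissible DGP satisfying Assumptions 1--6 can be constructed on which $\boldsymbol{\alpha}^*$ is not the minimizer of $p_\lambda^{\text{pen}}$ over $\mathcal{Q}$, contradicting (\ref{surrogate penalty}). The key decomposition comes from $\tilde{\boldsymbol{\alpha}}^c = \boldsymbol{\alpha}^* - c\boldsymbol{\gamma}^*$ (Theorem \ref{Theorem 1}) and the identity $\alpha_j^* = c_j^* \gamma_j^*$ on $\mathcal{V}^{c*}$, with $c_j^* \coloneqq \alpha_j^*/\gamma_j^*$, yielding
\begin{align*}
p_\lambda^{\text{pen}}(\tilde{\boldsymbol{\alpha}}^c) - p_\lambda^{\text{pen}}(\boldsymbol{\alpha}^*) &= \sum_{j\in\mathcal{V}^*}p_\lambda^{\text{pen}}(|c\gamma_j^*|) - \sum_{j\in\mathcal{I}_c}p_\lambda^{\text{pen}}(|c\gamma_j^*|) \\
&\quad + \sum_{j\in\mathcal{V}^{c*}\setminus\mathcal{I}_c}\bigl[p_\lambda^{\text{pen}}(|(c_j^*-c)\gamma_j^*|) - p_\lambda^{\text{pen}}(|c_j^*\gamma_j^*|)\bigr],
\end{align*}
which (\ref{surrogate penalty}) forces to be strictly positive for every $c\neq 0$ and every admissible configuration.

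For concavity, I would argue by contrapositive on a minimal two-group DGP with $\mathcal{V}^{c*} = \mathcal{I}_c$ and $|\mathcal{V}^*| = |\mathcal{I}_c|+1$, so the third sum vanishes and the requirement reduces to $\sum_{j\in\mathcal{V}^*} p_\lambda^{\text{pen}}(|c\gamma_j^*|) > \sum_{j\in\mathcal{I}_c} p_\lambda^{\text{pen}}(|c\gamma_j^*|)$. A failure of concavity on some interval $[a,b]\subset(0,\infty)$ means $p_\lambda^{\text{pen}}$ is locally super-additive there; arranging the $|\mathcal{V}^*|$ values $|c\gamma_j^*|$ to cluster near $a$ (many small contributions) and the $|\mathcal{I}_c|$ values to sit near $b$ (few large contributions), while remaining consistent with Assumption 5, flips the inequality and contradicts the hypothesis. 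The right structural object is the subadditivity $p(x+y)\le p(x)+p(y)$ of concave $p$ with $p(0)=0$, which converts the single-unit support gap $|\mathcal{V}^*|-|\mathcal{I}_c|$ into a penalty gap independent of the nonzero magnitudes.

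For the derivative bound, concavity already gives ${p_\lambda^{\text{pen}}}'$ non-increasing with ${p_\lambda^{\text{pen}}}'(0^+)=\lambda$. Suppose ${p_\lambda^{\text{pen}}}'(t_0)\gg \lambda\kappa(n)$ at some $t_0>\kappa(n)$. On the same two-group DGP, place the nonzero entries of $\boldsymbol{\alpha}^*$ at magnitude $t_0$ (admissible since $t_0>\kappa(n)$) and arrange $\boldsymbol{\gamma}^*$ so each competing $|\tilde{\alpha}_j^c|$ barely clears $\kappa^c(n)$. By the mean-value theorem each nonzero of $\boldsymbol{\alpha}^*$ contributes penalty at least of order $\lambda\kappa(n)+{p_\lambda^{\text{pen}}}'(t_0)(t_0-\kappa(n))$, whereas each extra zero of $\boldsymbol{\alpha}^*$ only saves $p_\lambda^{\text{pen}}(\kappa^c(n))=O(\lambda\kappa^c(n))$ relative to $\tilde{\boldsymbol{\alpha}}^c$. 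Scaling $t_0$ so the first term dominates flips the surrogate inequality, and the crossover between these two rates pinpoints $O(\lambda\kappa(n))$ as the sharp threshold.

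The main obstacle is that Assumption 5 couples the separation rates $\kappa(n)$ and $\kappa^c(n)$ across every shadow DGP $\mathcal{P}_c\in\mathcal{Q}$ via $\tilde{\alpha}_j^c=(c_j^*-c)\gamma_j^*$, so freely perturbing one group's magnitudes implicitly constrains all the others; restricting counterexamples to the two-group DGPs above short-circuits this, since only a single $c$ is active, the third sum in the decomposition disappears, and the coupling collapses. A minor additional point is the boundary case $t=\kappa(n)$ in the derivative statement, which is handled by the assumed continuity of ${p_\lambda^{\text{pen}}}'$ on $(0,\infty)$.
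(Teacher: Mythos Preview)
Your overall strategy—force structural constraints on $p_\lambda^{\text{pen}}$ by exhibiting admissible DGPs in $\mathcal{Q}$ on which the surrogate inequality would otherwise fail—is exactly the paper's approach, and your restriction to two-group DGPs ($\mathcal{V}^{c*}=\mathcal{I}_c$, so the third sum in your decomposition drops out) is a clean simplification that the paper does not make explicit but which is perfectly legitimate since the surrogate property must hold for every member of $\mathcal{Q}$.

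There is, however, a genuine gap in your concavity step. The implication ``failure of concavity on $[a,b]$ $\Rightarrow$ $p_\lambda^{\text{pen}}$ is locally super-additive there'' is not valid: for a non-decreasing $C^1$ function with $p(0)=0$, concavity is equivalent to $p'$ being non-increasing, whereas subadditivity is strictly weaker, so non-concavity does not in general furnish super-additive points to exploit. The paper avoids this by working directly with derivatives rather than with subadditivity. It pairs off $|\mathcal{V}^{c*}|$ of the nonzeros of $\tilde{\boldsymbol{\alpha}}^c$ against those of $\boldsymbol{\alpha}^*$, applies the mean-value theorem to each pair, and arrives at
\[
\frac{\min_{\bar\xi\in(|\tilde\alpha_k^c|,\,|\alpha_j^*|)} {p_\lambda^{\text{pen}}}'(\bar\xi)}{\max_{\bar\xi\in(0,\,|\tilde\alpha_l^c|)} {p_\lambda^{\text{pen}}}'(\bar\xi)}
\;<\;
\frac{\|\tilde{\boldsymbol{\alpha}}^c_{\text{residual}}\|_1}{\sum_{(j,k)}\bigl(|\alpha_j^*|-|\tilde\alpha_k^c|\bigr)},
\]
then observes that since $p_\lambda^{\text{pen}}$ is free of $\mathcal{P}_0$, the right-hand side can be driven arbitrarily small by choice of DGP; this forces ${p_\lambda^{\text{pen}}}'$ to be (essentially) monotone decreasing, i.e.\ $p_\lambda^{\text{pen}}$ concave. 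Your two-group construction is fully compatible with this derivative-ratio argument and you should use it in place of the subadditivity detour. Your treatment of the derivative bound is in the same spirit as the paper's and is essentially correct, though the lower bound ``$\lambda\kappa(n)+{p_\lambda^{\text{pen}}}'(t_0)(t_0-\kappa(n))$'' is imprecise; the paper simply keeps $\boldsymbol{\alpha}^*$ fixed, drives the paired $|\tilde\alpha_k^c|$ down to order $\kappa(n)$, and reads off $\min {p_\lambda^{\text{pen}}}'(\bar\xi)\lesssim \lambda\kappa(n)$ with $\bar\xi>\kappa(n)$ from the same MVT inequality.
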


The surrogate sparsest penalty requirement aligns with the folded-concave penalized method \citep{fan2001variable,zhang2010nearly}. We adopt MCP penalty for our proposed approach in the following. In standard sparse linear regression, MCP necessitates $\lambda = \lambda(n) = O(\sqrt{\log p/n})$ and ${p_\lambda^{\text{MCP}}}^\prime (t)= 0$ for $t>C\lambda$, with $C$ a constant, fulfilling Proposition \ref{Proposition 1}. We further elaborate that this property is applicable in invalid IVs scenarios in the following section, and demonstrate its ability to bypass sisVIVE's shortcomings shown in (a) and (b).

\begin{remark}
Proposition \ref{Proposition 1} mandates concavity for the surrogate penalty, precluding the use of Adaptive Lasso. However, by imposing additional conditions like the majority rule, which identifies $\boldsymbol{\alpha}^*$, the Adaptive Lasso penalty can satisfy (\ref{surrogate penalty}). It is crucial to differentiate the motivation for using concave penalties in the surrogate sparsest penalty from the one for debiasing techniques such as the debiased Lasso \citep{javanmard2018debiasing}. The latter may not fulfill the identification condition, thus potentially altering the objective function (\ref{view of objective function}).
\end{remark}

\section{WIT Estimator}\label{sec:Theorem}

\subsection{Estimation Procedure}\label{sect31} 
We implement the penalized regression framework (\ref{problem1}) and specifically employ a concave penalty in (\ref{seqid}), the MCP, which provides nearly unbiased estimation. The MCP is favored numerically for its superior convexity in penalized loss and its consistent variable selection property without imposing incoherence conditions on the design matrix \citep{loh2017support,feng2019sorted}. This makes MCP more suited to two-stage estimation problems than Lasso. 

The selection stage is formally defined as:
 \begin{equation}
      \widehat{\boldsymbol{\alpha}}^{\text{MCP}} = \underset{\boldsymbol{\alpha}}{\operatorname{arg min}}\,\,\, \frac{1}{2n}\|\boldsymbol{Y}-\boldsymbol{\widetilde{Z}}\boldsymbol{\alpha}\|_2^2+ p_\lambda^{\text{MCP}}(\boldsymbol{\alpha}),\label{Problem3}
 \end{equation}
 where $\widetilde{\boldsymbol{Z}} = M_{\widehat{\boldsymbol{D}}}\boldsymbol{Z}$, and $\widehat{\boldsymbol{D}} = P_{\boldsymbol{Z}}\boldsymbol{D} = \boldsymbol{Z}\widehat{\boldsymbol{\gamma}}$ are the same as in \eqref{seqid}.  $p_\lambda^{\text{MCP}}(\boldsymbol{\alpha}) = \sum_{j = 1}^p p_\lambda^{\text{MCP}}({\alpha}_j) =  \sum_{j = 1}^p\int_0^{|\alpha_j|} \Big(\lambda-{t}/{\rho}\Big)_{+}dt$ is the MCP penalty and $\rho>1$ is the tuning parameter, which also controls convexity level $1/\rho$, and its corresponding derivative is ${p_\lambda^{\text{MCP}}}^{\prime}(t) = (\lambda-{|t|}/{\rho})_+$. Unlike Lasso, MCP imposes no penalty when $|\alpha_j|>\lambda\rho$, resulting in nearly unbiased coefficient estimation and exact support recovery without the SAIS condition (see \ref{sec:A2} for further discussion). Consequently, a consistent estimation of the valid IVs set, i.e., $\hat{\mathcal{V}} = \{ j:\widehat{{\alpha}}^{\text{MCP}}_j = 0 \}$ and $\Pr(\hat{\mathcal{V}} = \mathcal{V}^*) \overset{p}{\rightarrow}1$, is expected to hold under more relaxed conditions. We next illustrate how the WIT blends the benefits of penalized TSLS and LIML estimators at various stages.

The LIML estimator is consistent not only in classic many (weak) IVs \citep{bekker1994alternative,hansen2008estimation}, but also in many IVs and many included covariates \citep{kolesar2015identification,kolesar2018minimum}. However, simultaneous estimation of $\hat{\kappa}_{\text{liml}}$ and $\hat{\mathcal{V}^c}$ in \eqref{k_fuller} below is difficult. In the selection stage \eqref{problem1}, we use the moment-based objective function. If we do not consider the penalty term (II), the moment-based part (I) of \eqref{problem1} coincides with TSLS. Furthermore, the bias in TSLS has a limited effect on consistent variable selections (see Theorem \ref{Theorem 3}). In the estimation step, however, due to LIML's superior finite sample performance and the issue of TSLS in the presence of many (or weak) IVs even when $\mathcal{V}^*$ is known \citep{sawa1969exact,chao2005consistent}, we embed the LIML estimator to estimate $\beta^*$ on the basis of estimated valid IVs set via (\ref{Problem3}). The performance of oracle-TSLS shown in simulations verifies this choice.

Consequently, we proposed the \textbf{W}eak and some \textbf{I}nvalid instruments robust \textbf{T}reatment effect (WIT) estimator in the estimation stage,
\begin{eqnarray}
\left(
\hat{\beta}^{\text{WIT}} ,
\hat{\boldsymbol{\alpha}}_{\boldsymbol{Z}_{\hat{\mathcal{V}}^{c}}}^{\text{WIT}}
\right)^\top=\Big(\left[
\boldsymbol{D},  {\boldsymbol{Z}_{\hat{\mathcal{V}}^{c}}}
\right]^{\top}\left(\boldsymbol{I}-\hat{\kappa}_{\text{liml}} {M}_{{\boldsymbol{Z}}}\right)\left[
\boldsymbol{D},  {\boldsymbol{Z}_{\hat{\mathcal{V}}^{c}}}
\right]\Big)^{-1}\Big(\left[
\boldsymbol{D},  {\boldsymbol{Z}_{\hat{\mathcal{V}}^{c}}}
\right]^{\top}\left(\boldsymbol{I}-\hat{\kappa}_{\text{liml}} {M}_{{\boldsymbol{Z}}}\right)\boldsymbol{Y}\Big)\label{WIT},\\
\hat{\kappa}_{\text{liml}} = \underset{\beta}{\operatorname{min}}\,\big\{G(\beta) = \big((\boldsymbol{Y}-\boldsymbol{D}\beta)^\top M_{{\boldsymbol{Z}}}(\boldsymbol{Y}-\boldsymbol{D}\beta)\Big)^{-1}\Big((\boldsymbol{Y}-\boldsymbol{D}\beta)^\top M_{{\boldsymbol{Z}_{\hat{\mathcal{V}}^{c}}}}(\boldsymbol{Y}-\boldsymbol{D}\beta)\Big)\Big\}
\label{k_fuller},
\end{eqnarray}
Note, (\ref{WIT}) belongs to the general $k$-class estimators \citep{nagar1959bias}, whose properties vary upon the choice of $\hat{\kappa}$, i.e.\ $\hat{\kappa} = 0$ refers to OLS and $\hat{\kappa} = 1$ reduces to the TSLS estimator. (\ref{k_fuller}) has a closed-form solution: $\hat{\kappa}_{\text{liml}} = \lambda_{\text{min}}\Big(\{[\boldsymbol{Y},\boldsymbol{D}]^\top M_{\boldsymbol{Z}}[\boldsymbol{Y},\boldsymbol{D}]\}^{-1}\{[\boldsymbol{Y},\boldsymbol{D}]^\top M_{\boldsymbol{Z}_{\hat{\mathcal{V}}^c}}[\boldsymbol{Y},\boldsymbol{D}]\}\Big)$, where $\lambda_{\text{min}}(\cdot)$ denotes the smallest eigenvalue.
 Focusing on $\hat{\beta}^{\text{WIT}}$ as the primary interest, we reformulate (\ref{WIT}) and (\ref{k_fuller}) based on the residuals of $\boldsymbol{Y},\boldsymbol{D},\boldsymbol{Z}_{\hat{\mathcal{V}}}$ on $\boldsymbol{Z}_{\hat{\mathcal{V}}^{c}}$. Denote $\boldsymbol{Y}_{\perp} = M_{\boldsymbol{Z}_{\hat{\mathcal{V}}^c}}\boldsymbol{Y}$, $\boldsymbol{D}_{\perp} = M_{\boldsymbol{Z}_{\hat{\mathcal{V}}^c}}\boldsymbol{D}$ and $\boldsymbol{Z}_{\hat{\mathcal{V}}\perp} = M_{\boldsymbol{Z}_{\hat{\mathcal{V}}^c}}\boldsymbol{Z}_{\hat{\mathcal{V}}}$  and notice $M_{\boldsymbol{Z}_{\hat{\mathcal{V}}^c}}M_{\boldsymbol{Z}_{\hat{\mathcal{V}}\perp}} = M_{\boldsymbol{Z}}$, thus it is equivalent to derive asymptotic results on the following model \eqref{resmodel}.
 \begin{eqnarray}
\hat{\beta}^{\text{WIT}}&=&\left(\boldsymbol{D}_{\perp}^{\top}\left(\boldsymbol{I}-\hat{\kappa}_{\text{liml}} M_{\boldsymbol{Z}_{\hat{\mathcal{V}}\perp}}\right) \boldsymbol{D}_{\perp}\right)^{-1}\left(\boldsymbol{D}_{\perp}^{\top}\left(\boldsymbol{I}-\hat{\kappa}_{\text{liml}} M_{\boldsymbol{Z}_{\hat{\mathcal{V}}\perp}}\right) \boldsymbol{Y}_{\perp}\right), \label{resmodel}\\
\hat{\kappa}_{\text{liml}} &=& \lambda_{\text{min}}\Big(\{[\boldsymbol{Y}_{\perp},\boldsymbol{D}_{\perp}]^\top M_{\boldsymbol{Z}_{\hat{\mathcal{V}}\perp}}[\boldsymbol{Y}_{\perp},\boldsymbol{D}_{\perp}]\}^{-1}\{[\boldsymbol{Y}_{\perp},\boldsymbol{D}_{\perp}]^\top[\boldsymbol{Y}_{\perp},\boldsymbol{D}_{\perp}]\}\Big)
.\end{eqnarray}

\subsection{Asymptotic Behavior of WIT Estimator}
Throughout this section, we aim to recover the one specific element in $\mathcal{Q}$, denoted as $({\beta^*},\boldsymbol{\alpha}^*,\boldsymbol{\epsilon})$. Though a slight abuse of notation, we use  $\widehat{\boldsymbol{\alpha}}$
to denote a local solution of (\ref{Problem3}) with MCP.

All local solutions of (\ref{Problem3}) we consider are characterized by the Karush–Kuhn–Tucker (KKT) or first-order  condition, i.e.\
 \begin{equation}
     \widetilde{\boldsymbol{Z}}^{\top}(\boldsymbol{Y}-\widetilde{\boldsymbol{Z}}\widehat{\boldsymbol{\alpha}})/n = \frac{\partial}{\partial \boldsymbol{t}}\sum_{j = 1}^{p}p^\prime_\lambda ({t}_j)\Big|_{\boldsymbol{t} = \widehat{\boldsymbol{\alpha}}}.\label{form 1}
 \end{equation}
 Explicitly, to the end of finding valid IVs via comparing with true signal $\boldsymbol{\alpha}^*$, we rewrite (\ref{form 1}) as
\begin{equation}
\begin{cases}\left(\lambda-\frac{1}{\rho}\left|\widehat{\alpha}_{j}\right|\right)_{+} \leqslant \operatorname{sgn}\left(\widehat{\alpha}_{j}\right) \tilde{\boldsymbol{Z}}_{j}^{\top}(\boldsymbol{Y}-\tilde{\boldsymbol{Z}} \widehat{\boldsymbol{\alpha}}) / n \leqslant \lambda, & j \in \hat{\mathcal{V}}^{c} \\ \left|\tilde{\boldsymbol{Z}}_{j}^{\top}(\boldsymbol{Y}-\tilde{\boldsymbol{Z}} \widehat{\boldsymbol{\alpha}}) / n\right| \leqslant \lambda, &
j \in \hat{\mathcal{V}}\end{cases} \label{form 2}
\end{equation}
where the inequalities in the first line stem from the convexity of the MCP penalty and in the last line originate in the sub-derivative of the MCP penalty at the origin.

As discussed in Section \ref{sec:model}, $\tilde{\boldsymbol{Z}}$ is a function of $(\boldsymbol{Z},\boldsymbol{\gamma}^*,\boldsymbol{\eta})$ and thus endogenous with $\boldsymbol{\epsilon}$. The fact that $\tilde{\boldsymbol{Z}}$ inherits the randomness of $\boldsymbol{\eta}$ distinguishes itself from the general assumptions put on the design matrix, and obscures the feasibility of conditions required to achieve exact support recovery in the literature of penalized least squares estimator (PLSE) \citep{feng2019sorted,loh2017support,zhang2012general}.
The sisVIVE method imposed the RIP condition directly on $\tilde{\boldsymbol{Z}}$ to establish an error bound. However, the restricted eigenvalue (RE) condition \citep{bickel2009simultaneous} is the weakest condition \citep{van2009conditions} available to guarantee rate minimax performance in prediction and coefficient estimation, as well as to establish variable selection consistency for Lasso penalty. \cite{feng2019sorted} further adopted the RE condition for non-convex penalty analysis. We then state the conditions on the design matrix $\tilde{\boldsymbol{Z}}$ of \eqref{Problem3} in the following. Define restricted cone $\mathscr{C}(\mathcal{V}^{*} ; \xi)=\left\{\boldsymbol{u}:\left\|\boldsymbol{u}_{\mathcal{V}^{*}}\right\|_{1} \leq \xi\left\|\boldsymbol{u}_{\mathcal{V}^{c*}}\right\|_{1}\right\}$ for some $\xi >0$ that estimation error $\widehat{\boldsymbol{\alpha}}-\boldsymbol{\alpha}^*$ belongs to.  The restricted eigenvalue $K_{\mathscr{C}}$ for $\tilde{\boldsymbol{Z}}$ is defined as $K_{\mathscr{C}}$ = $K_{\mathscr{C}}(\mathcal{V}^*,\xi)\coloneqq \underset{\boldsymbol{u}}{\operatorname{inf}}\{\|\tilde{\boldsymbol{Z}}\boldsymbol{u}\|_2/(\|\boldsymbol{u}\|_2 n^{1/2}): \boldsymbol{u}\in \mathscr{C}(\mathcal{V}^* ; \xi) \}$ and the RE condition refers to the condition that $K_{\mathscr{C}}$ for $\tilde{\boldsymbol{Z}}$ should be bounded away from zero.
\begin{lemma}\label{Lemma 2}
 (RE condition of $\tilde{\boldsymbol{Z}}$) {{Under Assumptions 1-3, there exists a constant  $\xi \in (0, \|\widehat{\boldsymbol{\gamma}}_{\mathcal{V}^{*}}\|_1/\|\widehat{\boldsymbol{\gamma}}_{\mathcal{V}^{c*}}\|_1)$ and a restricted cone  $\mathscr{C}(\mathcal{V}^* ; \xi)$ defined by chosen $\xi$ such that $K_{\mathscr{C}}^2 >0$ holds strictly.}}
 \end{lemma}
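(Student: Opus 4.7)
My approach exploits the fact that the rank deficiency of $\tilde{\boldsymbol{Z}} = M_{\widehat{\boldsymbol{D}}}\boldsymbol{Z}$ is one-dimensional and occurs in a precisely known direction. First I would identify $\mathrm{null}(\tilde{\boldsymbol{Z}})$ exactly. Since $\widehat{\boldsymbol{D}} = \boldsymbol{Z}\widehat{\boldsymbol{\gamma}}$, the identity $\tilde{\boldsymbol{Z}}\widehat{\boldsymbol{\gamma}} = M_{\widehat{\boldsymbol{D}}}\widehat{\boldsymbol{D}} = \boldsymbol{0}$ is immediate. Conversely, if $\tilde{\boldsymbol{Z}}\boldsymbol{u} = \boldsymbol{0}$, then $\boldsymbol{Z}\boldsymbol{u}$ lies in the one-dimensional column space of $\widehat{\boldsymbol{D}}$, so $\boldsymbol{Z}\boldsymbol{u} = t\,\boldsymbol{Z}\widehat{\boldsymbol{\gamma}}$ for some scalar $t$. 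The full column rank of $\boldsymbol{Z}$ guaranteed by Assumption 2 then forces $\boldsymbol{u} = t\widehat{\boldsymbol{\gamma}}$, so $\mathrm{null}(\tilde{\boldsymbol{Z}}) = \mathrm{span}(\widehat{\boldsymbol{\gamma}})$. This pinpoints the single ``bad direction'' that the restricted cone needs to avoid in order to obtain a strictly positive restricted eigenvalue.

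Next I would translate the hypothesis on $\xi$ into a geometric statement about $\mathscr{C}(\mathcal{V}^*;\xi)$. For any scalar $t\neq 0$, the vector $t\widehat{\boldsymbol{\gamma}}$ lies in $\mathscr{C}(\mathcal{V}^*;\xi)$ if and only if $\|\widehat{\boldsymbol{\gamma}}_{\mathcal{V}^*}\|_1 \le \xi\,\|\widehat{\boldsymbol{\gamma}}_{\mathcal{V}^{c*}}\|_1$. Picking $\xi$ strictly below the ratio $\|\widehat{\boldsymbol{\gamma}}_{\mathcal{V}^*}\|_1/\|\widehat{\boldsymbol{\gamma}}_{\mathcal{V}^{c*}}\|_1$, as permitted by the statement, makes this inclusion fail strictly, whence $\mathrm{span}(\widehat{\boldsymbol{\gamma}}) \cap \mathscr{C}(\mathcal{V}^*;\xi) = \{\boldsymbol{0}\}$. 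The stated interval for $\xi$ is non-degenerate because, under Assumption 3, $\widehat{\boldsymbol{\gamma}} = \boldsymbol{\gamma}^* + (\boldsymbol{Z}^\top\boldsymbol{Z})^{-1}\boldsymbol{Z}^\top\boldsymbol{\eta}$ is, conditional on $\boldsymbol{Z}$, a non-degenerate Gaussian vector, so both $\|\widehat{\boldsymbol{\gamma}}_{\mathcal{V}^*}\|_1$ and $\|\widehat{\boldsymbol{\gamma}}_{\mathcal{V}^{c*}}\|_1$ are almost surely strictly positive and finite.

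The final step is a soft compactness argument. The set $A := \mathscr{C}(\mathcal{V}^*;\xi)\cap\{\boldsymbol{u}:\|\boldsymbol{u}\|_2=1\}$ is closed (a weak inequality of continuous seminorms intersected with the unit sphere) and bounded in $\mathbb{R}^p$, hence compact. The continuous map $\boldsymbol{u}\mapsto \|\tilde{\boldsymbol{Z}}\boldsymbol{u}\|_2^2/n$ attains its infimum $K_{\mathscr{C}}^2$ at some $\boldsymbol{u}^\star\in A$. If $K_{\mathscr{C}}^2$ were zero, then $\boldsymbol{u}^\star\in \mathrm{null}(\tilde{\boldsymbol{Z}})\cap \mathscr{C}(\mathcal{V}^*;\xi) = \mathrm{span}(\widehat{\boldsymbol{\gamma}})\cap\mathscr{C}(\mathcal{V}^*;\xi) = \{\boldsymbol{0}\}$, which contradicts $\|\boldsymbol{u}^\star\|_2 = 1$. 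Therefore $K_{\mathscr{C}}^2 > 0$ strictly.

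The main conceptual obstacle is that a classical RE-type lower bound cannot be obtained from generic concentration or eigenvalue bounds on $\tilde{\boldsymbol{Z}}^\top\tilde{\boldsymbol{Z}}/n$, since $\tilde{\boldsymbol{Z}}$ is \emph{deterministically} rank-deficient along $\widehat{\boldsymbol{\gamma}}$. The trick is to calibrate the cone so as to excise this unique null direction; the specific admissible range of $\xi$ in the statement is precisely the analytic encoding of that geometric requirement. Once this is recognized, no sub-Gaussian or probabilistic machinery on the design is needed and the restricted eigenvalue bound follows from the finite-dimensional compactness argument outlined above.
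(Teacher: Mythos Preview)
Your argument is correct and rests on the same core insight as the paper's: the null space of $\tilde{\boldsymbol{Z}}$ is precisely $\mathrm{span}(\widehat{\boldsymbol{\gamma}})$, and choosing $\xi$ strictly below $\|\widehat{\boldsymbol{\gamma}}_{\mathcal{V}^*}\|_1/\|\widehat{\boldsymbol{\gamma}}_{\mathcal{V}^{c*}}\|_1$ excises that direction from the scale-invariant cone. The technical framing differs slightly. The paper works algebraically: it expands $\boldsymbol{u}^\top(\tilde{\boldsymbol{Z}}^\top\tilde{\boldsymbol{Z}}/n)\boldsymbol{u}$ into the closed form
\[
\frac{(\boldsymbol{u}^\top\boldsymbol{Q}_n\boldsymbol{u})(\widehat{\boldsymbol{\gamma}}^\top\boldsymbol{Q}_n\widehat{\boldsymbol{\gamma}})-(\widehat{\boldsymbol{\gamma}}^\top\boldsymbol{Q}_n\boldsymbol{u})^2}{\|\boldsymbol{u}\|_2^2\,\widehat{\boldsymbol{\gamma}}^\top\boldsymbol{Q}_n\widehat{\boldsymbol{\gamma}}},
\]
and then invokes Cauchy--Schwarz in the $\boldsymbol{Q}_n$-inner product to see that the numerator vanishes iff $\boldsymbol{u}\parallel\widehat{\boldsymbol{\gamma}}$. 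Your route is more direct: you identify the null space via the full-rank assumption on $\boldsymbol{Z}$ and then use compactness of the cone restricted to the unit sphere to guarantee the infimum is attained and hence strictly positive. The paper's last implication (pointwise strict positivity $\Rightarrow$ strictly positive infimum) implicitly relies on the same compactness you spell out, so your version is in fact a little tighter on that step; conversely, the paper's explicit quadratic-form identity (and its mention of Lagrange's identity) gives a formula that could in principle be used for quantitative lower bounds, which your abstract argument does not supply.
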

 Lemma \ref{Lemma 2} elaborates that the RE condition on $\tilde{\boldsymbol{Z}}$ holds without any additional assumptions on $\tilde{\boldsymbol{Z}}$, unlike the extra RIP condition for sisVIVE. Moreover, this restricted cone is invariant to scaling, thus indicating the accommodation of many weak IVs. These two features suggest the theoretical advantages of penalized methods (\ref{seqid}) over existing methods.

 Next, we discuss the selection of valid IVs by comparing the local solution of (\ref{form 2}) with the oracle (moment-based) counterpart.
 Define  $\widehat{\boldsymbol{\alpha}}^{\text{or}}_{\mathcal{V}^*} = \boldsymbol{0}$ and
 \begin{equation}
     \widehat{\boldsymbol{\alpha}}^{\text{or}}_{\mathcal{V}^{c*}} = (\tilde{\boldsymbol{Z}}_{\mathcal{V}^{c*}}^\top \tilde{\boldsymbol{Z}}_{\mathcal{V}^{c*}})^{-1}\tilde{\boldsymbol{Z}}_{\mathcal{V}^{c*}}^\top\boldsymbol{Y}\quad {\text{or}} \quad \widehat{\boldsymbol{\alpha}}_{\mathcal{V}^{c*}}^{\text{or}} = (\boldsymbol{Z}_{\mathcal{V}^{c*}}^\top \boldsymbol{Z}_{\mathcal{V}^{c*}})^{-1}[\boldsymbol{Z}_{\mathcal{V}^{c*}}^\top(\boldsymbol{Y}-\widehat{\boldsymbol{D}}\hat{\beta}_{\text{or}}^{\text{TSLS}})],
 \end{equation}
 where $\hat{\beta}_{\text{or}}^{\text{TSLS}} = [\boldsymbol{D}^\top(P_{\boldsymbol{Z}}-P_{\boldsymbol{Z}_{\mathcal{V}^{c*}}})\boldsymbol{D}]^{-1}[\boldsymbol{D}^\top(P_{\boldsymbol{Z}}-P_{\boldsymbol{Z}_{\mathcal{V}^{c*}}})\boldsymbol{Y}]$ and the two versions of $\widehat{\boldsymbol{\alpha}}^{\text{or}}_{\mathcal{V}^{c*}}$ are equivalent. Notice this $\hat{\beta}^{\text{TSLS}}_{\text{or}}$ is not for the final treatment effect estimation, but to illustrate the selection stage consistency only.  To this end, we show the supremum norm of  $\boldsymbol{R}^{\text{or}} = \widetilde{\boldsymbol{Z}}^{\top}(\boldsymbol{Y}-\widetilde{\boldsymbol{Z}}{\widehat{\boldsymbol{\alpha}}}^{\text{or}})/n$ is bounded by an inflated order of $O(\sqrt{\log p_{\mathcal{V}^*} /n})$. Denote $\TTD = (  P_{\boldsymbol{Z}}-P_{\boldsymbol{Z}_{\mathcal{V}^{c*}}})\boldsymbol{D}$ and $\Tilde{\Tilde{\boldsymbol{Q}}}_n = \boldsymbol{Z}_{\mathcal{V}^*}^\top (  P_{\boldsymbol{Z}}-P_{\boldsymbol{Z}_{\mathcal{V}^{c*}}})\boldsymbol{Z}_{\mathcal{V}^*}/n$, we derive the following lemma.
 \begin{lemma}\label{Lemma 3}
 Suppose Assumptions 1-4 hold and let
 \begin{equation}
      \zeta \asymp \frac{p_{\mathcal{V}^*}}{n}\cdot\frac{\|\Tilde{\Tilde{\boldsymbol{Q}}}_n\boldsymbol{\gamma}_{\mathcal{V}^*}^*\|_{\infty}}{\boldsymbol{\gamma}_{\mathcal{V}^*}^{*\top}\Tilde{\Tilde{\boldsymbol{Q}}}_n\boldsymbol{\gamma}_{\mathcal{V}^*}^*}+\sqrt{\frac{\log p_{\mathcal{V}^*}}{n}}.\label{margin}
 \end{equation} Then, the supremum norms of residual $\boldsymbol{R}^{\text{or}}$ are bounded by $\zeta$, i.e.,
 \begin{eqnarray}
     \|\boldsymbol{R}^{\text{or}}\|_\infty &\leq&  \Big\|\frac{\boldsymbol{Z}_{\mathcal{V}^*}^\top \TTE}{n}\Big\|_\infty+\Bigg\|\frac{\frac{\boldsymbol{Z}_{\mathcal{V}^*}^\top \TTD}{n}\cdot \frac{\TTD^\top\boldsymbol{\epsilon}}{n}}{\frac{\TTD^\top\TTD}{n}}\Bigg\|_\infty \leq \zeta
 \end{eqnarray}
 holds with probability approaching 1.
 \end{lemma}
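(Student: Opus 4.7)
The plan is to exploit the Frisch--Waugh structure of the oracle estimator to write $\boldsymbol{R}^{\text{or}}$ as a closed-form expression, and then bound each term by a combination of Gaussian concentration (for the $\boldsymbol{\epsilon}$-linear piece) and many-IV bias control (for the TSLS-like piece).

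First I would reduce to the components indexed by $\mathcal{V}^*$. By the normal equations defining $\widehat{\boldsymbol{\alpha}}^{\text{or}}_{\mathcal{V}^{c*}}$ as the OLS fit of $\boldsymbol{Y}$ on $\tilde{\boldsymbol{Z}}_{\mathcal{V}^{c*}}$, $\boldsymbol{R}^{\text{or}}_j = 0$ for every $j \in \mathcal{V}^{c*}$, so it suffices to bound $\boldsymbol{R}^{\text{or}}_{\mathcal{V}^*} = \tilde{\boldsymbol{Z}}_{\mathcal{V}^*}^{\top}(\boldsymbol{Y}-\tilde{\boldsymbol{Z}}\widehat{\boldsymbol{\alpha}}^{\text{or}})/n$. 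Using the second (equivalent) form of $\widehat{\boldsymbol{\alpha}}^{\text{or}}_{\mathcal{V}^{c*}}$, the joint-OLS residual is $\hat{\boldsymbol{e}}=M_{\boldsymbol{Z}_{\mathcal{V}^{c*}}}(\boldsymbol{Y}-\widehat{\boldsymbol{D}}\hat{\beta}^{\text{TSLS}}_{\text{or}})$, orthogonal to both $\widehat{\boldsymbol{D}}$ and $\boldsymbol{Z}_{\mathcal{V}^{c*}}$. A short computation using $M_{\widehat{\boldsymbol{D}}}\hat{\boldsymbol{e}}=\hat{\boldsymbol{e}}$ and $\boldsymbol{Z}_{\mathcal{V}^*}^\top M_{\boldsymbol{Z}_{\mathcal{V}^{c*}}} = \boldsymbol{Z}_{\mathcal{V}^*}^\top(P_{\boldsymbol{Z}}-P_{\boldsymbol{Z}_{\mathcal{V}^{c*}}})$ yields $\tilde{\boldsymbol{Z}}_{\mathcal{V}^*}^\top(\boldsymbol{Y}-\tilde{\boldsymbol{Z}}\widehat{\boldsymbol{\alpha}}^{\text{or}}) = \boldsymbol{Z}_{\mathcal{V}^*}^\top \hat{\boldsymbol{e}}$. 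Substituting $\boldsymbol{Y}=\boldsymbol{D}\beta^*+\boldsymbol{Z}_{\mathcal{V}^{c*}}\boldsymbol{\alpha}^*_{\mathcal{V}^{c*}}+\boldsymbol{\epsilon}$ and the closed form $\hat{\beta}^{\text{TSLS}}_{\text{or}}-\beta^* = (\TTD^{\top}\TTD)^{-1}\TTD^{\top}\boldsymbol{\epsilon}$ gives the identity
\begin{equation*}
\boldsymbol{R}^{\text{or}}_{\mathcal{V}^*} \;=\; \frac{\boldsymbol{Z}_{\mathcal{V}^*}^{\top}\TTE}{n} \;-\; \frac{\frac{\boldsymbol{Z}_{\mathcal{V}^*}^{\top}\TTD}{n}\cdot \frac{\TTD^{\top}\boldsymbol{\epsilon}}{n}}{\frac{\TTD^{\top}\TTD}{n}},
\end{equation*}
which is the first inequality in the lemma after the triangle inequality.

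Second, I would bound each piece. For the first term, each coordinate $\boldsymbol{Z}_j^\top\TTE/n = (M_{\boldsymbol{Z}_{\mathcal{V}^{c*}}}\boldsymbol{Z}_j)^\top \boldsymbol{\epsilon}/n$ is, conditional on $\boldsymbol{Z}$, Gaussian with variance at most $\sigma_\epsilon^2\|\boldsymbol{Z}_j\|_2^2/n^2\le\sigma_\epsilon^2/n$ by Assumption 2--3; a union bound over $p_{\mathcal{V}^*}$ coordinates gives $\|\boldsymbol{Z}_{\mathcal{V}^*}^\top\TTE/n\|_\infty = O_p(\sqrt{\log p_{\mathcal{V}^*}/n})$, matching the second summand of $\zeta$. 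For the second term, I decompose $\TTD = (P_{\boldsymbol{Z}}-P_{\boldsymbol{Z}_{\mathcal{V}^{c*}}})\boldsymbol{Z}_{\mathcal{V}^*}\boldsymbol{\gamma}^*_{\mathcal{V}^*} + (P_{\boldsymbol{Z}}-P_{\boldsymbol{Z}_{\mathcal{V}^{c*}}})\boldsymbol{\eta}$ and analyze its three constituents: (i) $\boldsymbol{Z}_{\mathcal{V}^*}^\top\TTD/n = \Tilde{\Tilde{\boldsymbol{Q}}}_n\boldsymbol{\gamma}^*_{\mathcal{V}^*} + O_p(\sqrt{\log p_{\mathcal{V}^*}/n})$ in $\ell_\infty$-norm by the same Gaussian bound applied to $\boldsymbol{\eta}$; (ii) $\TTD^{\top}\TTD/n = \boldsymbol{\gamma}^{*\top}_{\mathcal{V}^*}\Tilde{\Tilde{\boldsymbol{Q}}}_n\boldsymbol{\gamma}^*_{\mathcal{V}^*} + \sigma_\eta^2 p_{\mathcal{V}^*}/n + o_p(1)$, which by Assumption 4 is bounded away from zero on the order of $\boldsymbol{\gamma}^{*\top}_{\mathcal{V}^*}\Tilde{\Tilde{\boldsymbol{Q}}}_n\boldsymbol{\gamma}^*_{\mathcal{V}^*}$; (iii) $\TTD^{\top}\boldsymbol{\epsilon}/n$, whose expectation $\sigma_{\epsilon,\eta}p_{\mathcal{V}^*}/n$ comes from the quadratic form $\boldsymbol{\eta}^\top(P_{\boldsymbol{Z}}-P_{\boldsymbol{Z}_{\mathcal{V}^{c*}}})\boldsymbol{\epsilon}/n$ (rank $p_{\mathcal{V}^*}$), with fluctuations of order $\sqrt{p_{\mathcal{V}^*}}/n$ by a Hanson--Wright style computation. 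Combining these three with the main-order approximations yields that the second term is bounded, up to a constant, by $(p_{\mathcal{V}^*}/n)\cdot\|\Tilde{\Tilde{\boldsymbol{Q}}}_n\boldsymbol{\gamma}^*_{\mathcal{V}^*}\|_\infty / (\boldsymbol{\gamma}^{*\top}_{\mathcal{V}^*}\Tilde{\Tilde{\boldsymbol{Q}}}_n\boldsymbol{\gamma}^*_{\mathcal{V}^*})$, matching the first summand of $\zeta$.

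The main obstacle is item (iii): the many-IV bias $\sigma_{\epsilon,\eta}p_{\mathcal{V}^*}/n$ does \emph{not} vanish under Assumption 1 (echoing Lemma 1), so the bound must explicitly carry this $p_{\mathcal{V}^*}/n$ factor, and the stochastic fluctuations in the numerator must be shown to be of lower order than this bias. The key is to handle the joint distribution of $\boldsymbol{\eta}$ and $\boldsymbol{\epsilon}$ (which are correlated with $\sigma_{\epsilon,\eta}\ne 0$) carefully when computing $\mathrm{Var}(\boldsymbol{\eta}^\top P\boldsymbol{\epsilon})$, using the fact that $\mathrm{tr}(P)=\mathrm{tr}(P^2) = p_{\mathcal{V}^*}$. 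Putting the two bounded pieces together by the triangle inequality produces the claim with probability approaching one.
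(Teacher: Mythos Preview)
Your proposal is correct and arrives at the same closed-form identity
\[
\boldsymbol{R}^{\text{or}}_{\mathcal{V}^*}=\frac{\boldsymbol{Z}_{\mathcal{V}^*}^{\top}\TTE}{n}-\frac{\frac{\boldsymbol{Z}_{\mathcal{V}^*}^{\top}\TTD}{n}\cdot \frac{\TTD^{\top}\boldsymbol{\epsilon}}{n}}{\frac{\TTD^{\top}\TTD}{n}}
\]
that the paper derives, and your subsequent bounding of the two pieces (Gaussian concentration plus many-IV bias control for the quadratic form) is essentially the same as the paper's. The route to the identity, however, is genuinely different. The paper first computes $\boldsymbol{R}^*=\tilde{\boldsymbol{Z}}^\top(\boldsymbol{Y}-\tilde{\boldsymbol{Z}}\boldsymbol{\alpha}^*)/n$, then writes $\boldsymbol{R}^{\text{or}}_{\mathcal{V}^*}=\boldsymbol{R}^*_{\mathcal{V}^*}+\boldsymbol{Z}_{\mathcal{V}^*}^\top M_{\widehat{\boldsymbol{D}}}P_{\boldsymbol{Z}_{\mathcal{V}^{c*}}}(\boldsymbol{D}\cdot\operatorname{Bias}(\hat\beta^{\text{TSLS}}_{\text{or}})-\boldsymbol{\epsilon})/n$, and reduces the correction term through a four-piece decomposition (labeled (I)--(IV)) that is eventually collapsed using the identities $\widehat{\boldsymbol{D}}-\bar{\boldsymbol{D}}=\TTD$ and $\widehat{\boldsymbol{D}}^\top\TTD=\TTD^\top\TTD$. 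Your argument bypasses all of that: by recognizing $\hat{\boldsymbol{e}}=M_{\boldsymbol{Z}_{\mathcal{V}^{c*}}}(\boldsymbol{Y}-\widehat{\boldsymbol{D}}\hat\beta^{\text{TSLS}}_{\text{or}})$ as the Frisch--Waugh residual orthogonal to both $\widehat{\boldsymbol{D}}$ and $\boldsymbol{Z}_{\mathcal{V}^{c*}}$, you get $\tilde{\boldsymbol{Z}}_{\mathcal{V}^*}^\top(\boldsymbol{Y}-\tilde{\boldsymbol{Z}}\widehat{\boldsymbol{\alpha}}^{\text{or}})=\boldsymbol{Z}_{\mathcal{V}^*}^\top\hat{\boldsymbol{e}}$ in one step, and the substitution $\boldsymbol{Y}=\boldsymbol{D}\beta^*+\boldsymbol{Z}_{\mathcal{V}^{c*}}\boldsymbol{\alpha}^*_{\mathcal{V}^{c*}}+\boldsymbol{\epsilon}$ together with $(P_{\boldsymbol{Z}}-P_{\boldsymbol{Z}_{\mathcal{V}^{c*}}})\boldsymbol{Z}_{\mathcal{V}^{c*}}=\boldsymbol{0}$ delivers the identity immediately. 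This is cleaner and more transparent; the paper's detour through $\boldsymbol{R}^*$ does buy an intermediate bound on $\|\boldsymbol{R}^*\|_\infty$ (their equation for the full-$\boldsymbol{Z}$ residual), but that bound is not actually needed for Lemma~\ref{Lemma 3} itself.
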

 Based on Lemmas \ref{Lemma 2} and \ref{Lemma 3}, we consider the set  $\mathscr{B}(\lambda,\rho) = \{\widehat{\boldsymbol{\alpha}} \text{ in} \,\,(\ref{form 2}):  \lambda \geq \zeta, \rho > K^{-2}_{\mathscr{C}}(\mathcal{V}^*,\xi)\lor 1 \}$, in which $\zeta$ is defined in (\ref{margin}) and $\xi$ is guaranteed by Lemma \ref{Lemma 2}, as a collection of all local solutions $\widehat{\boldsymbol{\alpha}}$ computed in (\ref{form 2}) through a broad class of MCP under a certain penalty level $\lambda$ and convexity $1/\rho$. Given that the computed local solutions in practice are through a discrete path with some starting point (see Section \ref{sec:3.3}), we further consider the computable solution set  $\mathscr{B}_0(\lambda,\rho)$, introduced by \cite{feng2019sorted}, i.e.,
 \begin{equation}
     \mathscr{B}_0(\lambda,\rho) = \{\widehat{\boldsymbol{\alpha}}:\widehat{\boldsymbol{\alpha}} \text{ and starting point }\widehat{\boldsymbol{\alpha}}^{(0)} \text{ are connected in } \mathscr{B}(\lambda,\rho)\}.\label{B_0}
 \end{equation}
 The connection between $\mathscr{B}_0(\lambda,\rho)$ and $\mathscr{B}(\lambda,\rho)$ is that $\exists \,\, \widehat{\boldsymbol{\alpha}}^{(l)}\in \mathscr{B}(\lambda,\rho)$ with penalty level $\lambda^{(l)}$ increasing with the index $l = 1,2,\ldots$, such that $\widehat{\boldsymbol{\alpha}}^{(0)}-\boldsymbol{\alpha}^*\in \mathscr{C}(\mathcal{V}^*,\xi)$, $\widehat{\boldsymbol{\alpha}} = \widehat{\boldsymbol{\alpha}}^{(l)}$ for large enough $l$ and $\|\widehat{\boldsymbol{\alpha}}^{(l)}-\widehat{\boldsymbol{\alpha}}^{(l-1)}\|_1<a_0 \lambda^{(l)}$, where $a_0$ is specified in Lemma \ref{Lemma A3} of \ref{proof of theorem 3}. Thus, $\mathscr{B}_0(\lambda,\rho)$ is a collection of approximations of  $\boldsymbol{\alpha}$ in all DGPs.

 Denote
  \begin{align*}
 \operatorname{Bias}(\hat{\beta}_{\text{or}}^{\text{TSLS}}) &= \frac{\boldsymbol{D}^\top(P_{\boldsymbol{Z}}-P_{\boldsymbol{Z}_{\mathcal{V}^{c*}}})\boldsymbol{\epsilon}}{\boldsymbol{D}^\top(P_{\boldsymbol{Z}}-P_{\boldsymbol{Z}_{\mathcal{V}^{c*}}})\boldsymbol{D}} = \frac{\boldsymbol{D}^\top_{\perp} P_{\boldsymbol{Z}_{\perp}}\boldsymbol{\epsilon}_{\perp} }{\boldsymbol{D}^\top_{\perp} P_{\boldsymbol{Z}_{\perp}}\boldsymbol{D}_{\perp}};\\
 \bar{\boldsymbol{\gamma}}^*_{\mathcal{V}^{c*}} &= \boldsymbol{\gamma}^*_{\mathcal{V}^{c*}}+(\boldsymbol{Z}_{\mathcal{V}^{c*}}^\top \boldsymbol{Z}_{\mathcal{V}^{c*}})^{-1}\boldsymbol{Z}_{\mathcal{V}^{c*}}^\top\boldsymbol{Z}_{\mathcal{V}^*}\boldsymbol{\gamma}^*_{\mathcal{V}^*}.
 \end{align*}
 
 Also, let $\Tilde{\Tilde{\boldsymbol{Q}}}_n^c$ and $\operatorname{Bias}(\hat{\beta}_{\text{or}}^{c~\text{TSLS}})$ be defined as $\mathcal{P}_c$ version of  $\Tilde{\Tilde{\boldsymbol{Q}}}_n$ and $\operatorname{Bias}(\hat{\beta}_{\text{or}}^{\text{TSLS}})$. Then, we provide the asymptotic result of selection consistency of WIT.
 \begin{theorem}\label{Theorem 3}
 {\bf(Selection Consistency)}  Specify separation conditions $\kappa(n)$ and $\kappa^c(n)$ in Assumption 5 as
  \begin{eqnarray}
     \kappa(n) &\asymp& \underbrace{\sqrt{\frac{\log p_{\mathcal{V}^*}}{n}}}_{T_1} + \underbrace{\frac{p_{\mathcal{V}^*}}{n}\cdot\frac{\|\Tilde{\Tilde{\boldsymbol{Q}}}_n\boldsymbol{\gamma}_{\mathcal{V}^*}^*\|_{\infty}}{\boldsymbol{\gamma}_{\mathcal{V}^*}^{*\top}\Tilde{\Tilde{\boldsymbol{Q}}}_n\boldsymbol{\gamma}_{\mathcal{V}^*}^*}}_{T_2} + \underbrace{|\operatorname{Bias}(\hat{\beta}_{\text{or}}^{\text{TSLS}})| \|\bar{\boldsymbol{\gamma}}^*_{\mathcal{V}^{c*}}\|_\infty}_{T_3}, \label{rate of kappa}\\
      \kappa^c(n) &\asymp& (1+c)\Big\{{\sqrt{\frac{\log |\mathcal{I}_c|}{n}}} + {\frac{|\mathcal{I}_c|}{n}\cdot\frac{\|\Tilde{\Tilde{\boldsymbol{Q}}}^c_n\boldsymbol{\gamma}_{\mathcal{I}_c}^*\|_{\infty}}{\boldsymbol{\gamma}_{\mathcal{I}_c}^{*\top}\Tilde{\Tilde{\boldsymbol{Q}}}^c_n\boldsymbol{\gamma}_{\mathcal{I}_c}^*}}\Big\} + {|\operatorname{Bias}(\hat{\beta}_{\text{or}}^{c~\text{TSLS}})|\|\bar{\boldsymbol{\gamma}}^*_{\mathcal{I}^c_c}\|_\infty},\label{rate of kappa^c}
 \end{eqnarray}
 where $T_1\rightarrow0$ as $n\rightarrow\infty$.
 Suppose Assumptions 1-6 hold and consider computable local solutions specified in (\ref{B_0}). Then
\begin{equation}
   \widehat{\boldsymbol{\alpha}}^{\text{MCP}} = \underset{\widehat{\boldsymbol{\alpha}}\in \mathscr{B}_0(\lambda,\rho)}{\operatorname{arg min}} \|\widehat{\boldsymbol{\alpha}}\|_0,~ \Pr(\hat{\mathcal{V}} = \mathcal{V}^*, \widehat{\boldsymbol{\alpha}}^{\text{MCP}}  = \widehat{\boldsymbol{\alpha}}^{\text{or}}) \overset{p}{\rightarrow}1\label{Selection Consistency}.
\end{equation}
 \end{theorem}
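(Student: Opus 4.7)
The plan is to prove the two assertions in \eqref{Selection Consistency} simultaneously by reducing them to a single claim: with probability tending to one, (a) the oracle $\widehat{\boldsymbol{\alpha}}^{\mathrm{or}}$ satisfies the KKT system \eqref{form 2}, (b) $\widehat{\boldsymbol{\alpha}}^{\mathrm{or}}$ lies in the connected component $\mathscr{B}_0(\lambda,\rho)$, and (c) every other local minimizer in $\mathscr{B}_0(\lambda,\rho)$ has strictly larger $\ell_0$ norm. Combined with the definition $\widehat{\boldsymbol{\alpha}}^{\mathrm{MCP}}=\arg\min_{\widehat{\boldsymbol{\alpha}}\in\mathscr{B}_0(\lambda,\rho)}\|\widehat{\boldsymbol{\alpha}}\|_0$, these imply $\widehat{\boldsymbol{\alpha}}^{\mathrm{MCP}}=\widehat{\boldsymbol{\alpha}}^{\mathrm{or}}$ and hence $\hat{\mathcal{V}}=\mathcal{V}^*$.

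For step (a), I would verify the two branches of \eqref{form 2} separately. On $j\in\mathcal{V}^*$ the residual $\tilde{\boldsymbol{Z}}_j^\top(\boldsymbol{Y}-\tilde{\boldsymbol{Z}}\widehat{\boldsymbol{\alpha}}^{\mathrm{or}})/n$ is the $j$-th entry of $\boldsymbol{R}^{\mathrm{or}}$, which Lemma~\ref{Lemma 3} bounds in sup-norm by $\zeta$; choosing $\lambda\asymp\zeta$ forces $\|\boldsymbol{R}^{\mathrm{or}}\|_\infty\le\lambda$ with high probability. On $j\in\mathcal{V}^{c*}$ the moment definition of $\widehat{\boldsymbol{\alpha}}^{\mathrm{or}}_{\mathcal{V}^{c*}}$ makes the residual vanish, so it suffices to show $(\lambda-|\widehat{\alpha}_j^{\mathrm{or}}|/\rho)_+=0$. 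A direct algebraic expansion gives $\widehat{\boldsymbol{\alpha}}^{\mathrm{or}}_{\mathcal{V}^{c*}}-\boldsymbol{\alpha}^*_{\mathcal{V}^{c*}}=\bar{\boldsymbol{\gamma}}^*_{\mathcal{V}^{c*}}\operatorname{Bias}(\hat{\beta}_{\mathrm{or}}^{\mathrm{TSLS}})+(\boldsymbol{Z}_{\mathcal{V}^{c*}}^\top\boldsymbol{Z}_{\mathcal{V}^{c*}})^{-1}\boldsymbol{Z}_{\mathcal{V}^{c*}}^\top\boldsymbol{\epsilon}$, and the three terms $T_1,T_2,T_3$ in the rate \eqref{rate of kappa} exactly capture the fluctuation plus bias, so together with the separation condition $|\boldsymbol{\alpha}^*_{\mathcal{V}^{c*}}|_{\min}>\kappa(n)$ we get $|\widehat{\alpha}_j^{\mathrm{or}}|\gtrsim\kappa(n)$ uniformly on $\mathcal{V}^{c*}$. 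Picking $\rho$ large enough that $\lambda\rho\le\kappa(n)$ (still compatible with $\rho>K_\mathscr{C}^{-2}\vee 1$ from Lemma~\ref{Lemma 2}) then annihilates the MCP derivative on the invalid support and closes the KKT check.

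For step (b), I would use the restricted strong convexity implied by Lemma~\ref{Lemma 2}: on the cone $\mathscr{C}(\mathcal{V}^*;\xi)$ the effective Hessian of \eqref{Problem3} — namely $\tilde{\boldsymbol{Z}}^\top\tilde{\boldsymbol{Z}}/n$ minus the $1/\rho$ concavity contribution of the MCP — is bounded below by $K_\mathscr{C}^2-1/\rho>0$. Combined with the discrete-path construction in \eqref{B_0} and the one-step control $\|\widehat{\boldsymbol{\alpha}}^{(l)}-\widehat{\boldsymbol{\alpha}}^{(l-1)}\|_1<a_0\lambda^{(l)}$ from Lemma A3, this shows the homotopy started at any $\widehat{\boldsymbol{\alpha}}^{(0)}$ with $\widehat{\boldsymbol{\alpha}}^{(0)}-\boldsymbol{\alpha}^*\in\mathscr{C}(\mathcal{V}^*;\xi)$ stays in the cone and converges to the unique local minimizer there, which by step (a) must be $\widehat{\boldsymbol{\alpha}}^{\mathrm{or}}$. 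For step (c), I would apply the same KKT verification to each rival parametrization $\mathcal{P}_c$ with $c\neq 0$: the analogous oracle $\widehat{\boldsymbol{\alpha}}^{c,\mathrm{or}}$, whose zero set is $\mathcal{I}_c$, satisfies an analogous KKT system with the rate $\kappa^c(n)$ from \eqref{rate of kappa^c} and hence realizes another local minimum. The sparsest rule (Assumption 6), equivalent to the whole-set plurality $|\mathcal{V}^*|>\max_{c\neq 0}|\mathcal{I}_c|$, then forces $\|\widehat{\boldsymbol{\alpha}}^{c,\mathrm{or}}\|_0=p-|\mathcal{I}_c|>p-|\mathcal{V}^*|=\|\widehat{\boldsymbol{\alpha}}^{\mathrm{or}}\|_0$, so the $\arg\min_{\|\cdot\|_0}$ selection inside $\mathscr{B}_0(\lambda,\rho)$ returns $\widehat{\boldsymbol{\alpha}}^{\mathrm{or}}$.

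The hard part is that the endogeneity of $\tilde{\boldsymbol{Z}}$ contaminates every step: because $E(\tilde{\boldsymbol{Z}}_j^\top\boldsymbol{\epsilon}/n)$ does not vanish under Assumption 1 (Lemma~\ref{Lemma 1}), the usual $\sqrt{\log p/n}$ concentration bound must be inflated by the dilution-weighted bias terms $T_2$ and $T_3$, which forces the somewhat delicate form of $\kappa(n)$ in \eqref{rate of kappa} and tightly couples the choice of $\lambda$ and $\rho$ with Assumption 4 on the concentration parameter. Making the same pair $(\lambda,\rho)$ work simultaneously for the oracle \emph{and} each of the at most $p_{\mathcal{V}^{c*}}$ rival $\mathcal{P}_c$-oracles — so that every such $\widehat{\boldsymbol{\alpha}}^{c,\mathrm{or}}$ is a valid local minimum but is beaten in $\ell_0$ — is where the uniform control over $c$ and the careful tracking of the biased residual bounds must be carried out.
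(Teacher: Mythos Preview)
Your proposal is essentially the paper's argument: verify that the oracle $\widehat{\boldsymbol{\alpha}}^{\mathrm{or}}$ satisfies the KKT conditions \eqref{form 2} (the paper packages this as the event $\Omega=\{\boldsymbol{\omega}_{\mathcal{V}^{c*}}(\widehat{\boldsymbol{\alpha}}^{\mathrm{or}})=\boldsymbol{0}\}$ together with $\|\boldsymbol{R}^{\mathrm{or}}\|_\infty<\lambda$ from Lemma~\ref{Lemma 3}), then use the cone-propagation lemma (Lemma~A3) and the basic inequality with the restricted eigenvalue (Lemma~A2) to force any path-connected local solution in $\mathscr{B}_0(\lambda,\rho)$ to coincide with the oracle, and finally repeat for each rival $\mathcal{P}_c$ so that Assumption~6 singles out $\widehat{\boldsymbol{\alpha}}^{\mathrm{or}}$. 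One small slip in step~(a): you write ``picking $\rho$ large enough that $\lambda\rho\le\kappa(n)$'', but enlarging $\rho$ enlarges $\lambda\rho$; the correct observation is that $\rho$ is a fixed constant (only $\rho>K_\mathscr{C}^{-2}\vee 1$ is required), so $\lambda\rho\asymp\lambda\asymp\zeta$, and it is the separation condition $|\boldsymbol{\alpha}^*_{\mathcal{V}^{c*}}|_{\min}>\kappa(n)$ with $\kappa(n)\gtrsim\zeta$ that delivers $|\widehat{\alpha}_j^{\mathrm{or}}|>\lambda\rho$ on $\mathcal{V}^{c*}$.
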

 In Theorem \ref{Theorem 3}, $T_1$ is similar to a standard rate $\sqrt{\log p /n}$ in penalized linear regression, while $T_2$ and $T_3$ are the additional terms that only play a role in the many IVs context and vanish fast in the finite strong IVs case (see Corollary \ref{Corollary 2}). This result is new to the literature.
 \begin{proposition}\label{Proposition 2}
 Under the same assumptions of Theorem \ref{Theorem 3}, if there does not exist a dominant scaled $\gamma^*_j$, i.e.\ $\|\Tilde{\Tilde{\boldsymbol{Q}}}_n^{1/2}\boldsymbol{\gamma}_{\mathcal{V}^*}^*\|_\infty/\|\Tilde{\Tilde{\boldsymbol{Q}}}_n^{1/2}\boldsymbol{\gamma}_{\mathcal{V}^*}^*\|_1 = o\Big(\|\Tilde{\Tilde{\boldsymbol{Q}}}_n^{1/2}\boldsymbol{\gamma}_{\mathcal{V}^*}^*\|_1/(p_{\mathcal{V}^*}\|\Tilde{\Tilde{\boldsymbol{Q}}}_n^{1/2}\|_\infty)\Big)$, then $T_2\rightarrow0$.
 \end{proposition}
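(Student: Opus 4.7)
The plan is to normalise $T_2$ so that the proposition's hypothesis applies directly, then close the argument with Cauchy--Schwarz and Assumption~1. The key observation is that the denominator of $T_2$ is controlled by Assumption~4 and stays bounded away from zero, so the entire task reduces to bounding the numerator $\|\Tilde{\Tilde{\boldsymbol{Q}}}_n\boldsymbol{\gamma}^*_{\mathcal{V}^*}\|_\infty$ in terms of the quantities that appear in the hypothesis.

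First I would simplify $\Tilde{\Tilde{\boldsymbol{Q}}}_n$. Since $\boldsymbol{Z}_{\mathcal{V}^*}$ lies in the column space of $\boldsymbol{Z}$, we have $(P_{\boldsymbol{Z}}-P_{\boldsymbol{Z}_{\mathcal{V}^{c*}}})\boldsymbol{Z}_{\mathcal{V}^*} = M_{\boldsymbol{Z}_{\mathcal{V}^{c*}}}\boldsymbol{Z}_{\mathcal{V}^*}$, hence $\Tilde{\Tilde{\boldsymbol{Q}}}_n = \boldsymbol{Z}_{\mathcal{V}^*}^\top M_{\boldsymbol{Z}_{\mathcal{V}^{c*}}}\boldsymbol{Z}_{\mathcal{V}^*}/n$ is symmetric positive definite by Assumption~2 and admits a square root $\Tilde{\Tilde{\boldsymbol{Q}}}_n^{1/2}$. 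Setting $\boldsymbol{v} \coloneqq \Tilde{\Tilde{\boldsymbol{Q}}}_n^{1/2}\boldsymbol{\gamma}^*_{\mathcal{V}^*}$, Assumption~4 immediately gives
\[
\|\boldsymbol{v}\|_2^2 \;=\; \boldsymbol{\gamma}^{*\top}_{\mathcal{V}^*}\Tilde{\Tilde{\boldsymbol{Q}}}_n\boldsymbol{\gamma}^*_{\mathcal{V}^*} \;=\; \sigma_\eta^2\,\mu_n/n \;\longrightarrow\; \mu_0\sigma_\eta^2 \;>\; 0.
\]
Writing the numerator as $\Tilde{\Tilde{\boldsymbol{Q}}}_n\boldsymbol{\gamma}^*_{\mathcal{V}^*} = \Tilde{\Tilde{\boldsymbol{Q}}}_n^{1/2}\boldsymbol{v}$ and applying the standard induced-operator bound $\|A\boldsymbol{u}\|_\infty \leq \|A\|_\infty\|\boldsymbol{u}\|_\infty$ (with $\|\cdot\|_\infty$ on the matrix read as the maximum absolute row sum) yields
\[
T_2 \;\leq\; \frac{p_{\mathcal{V}^*}}{n}\cdot\frac{\|\Tilde{\Tilde{\boldsymbol{Q}}}_n^{1/2}\|_\infty\,\|\boldsymbol{v}\|_\infty}{\|\boldsymbol{v}\|_2^2}.
\]

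I would then cross-multiply the proposition's hypothesis to the equivalent form $p_{\mathcal{V}^*}\|\Tilde{\Tilde{\boldsymbol{Q}}}_n^{1/2}\|_\infty\|\boldsymbol{v}\|_\infty = o(\|\boldsymbol{v}\|_1^2)$ and plug it in, followed by the deterministic Cauchy--Schwarz inequality $\|\boldsymbol{v}\|_1^2 \leq p_{\mathcal{V}^*}\|\boldsymbol{v}\|_2^2$, to collapse the estimate to
\[
T_2 \;\leq\; \frac{1}{n}\cdot\frac{o(\|\boldsymbol{v}\|_1^2)}{\|\boldsymbol{v}\|_2^2} \;=\; o\!\left(\frac{p_{\mathcal{V}^*}}{n}\right) \;=\; o(1),
\]
where the last equality is $p_{\mathcal{V}^*}/n \to \upsilon_{p_{\mathcal{V}^*}} < 1$ from Assumption~1.

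The only point requiring care is to interpret $\|\Tilde{\Tilde{\boldsymbol{Q}}}_n^{1/2}\|_\infty$ as the induced $\ell_\infty\to\ell_\infty$ operator norm so that the matrix-vector inequality above is valid; once this is fixed, no probabilistic machinery is needed, as the statement is a deterministic consequence of the hypothesis together with Assumption~4 (to keep the denominator $\Theta(1)$) and Assumption~1 (to bound the $p_{\mathcal{V}^*}/n$ factor). Conceptually, this makes the ``no-dominant $\gamma^*_j$'' condition into an $\ell_\infty/\ell_1$ non-concentration requirement on the rotated signal $\Tilde{\Tilde{\boldsymbol{Q}}}_n^{1/2}\boldsymbol{\gamma}^*_{\mathcal{V}^*}$, strong enough to dominate the many-IV correction $T_2$ and leave only the standard $\sqrt{\log p_{\mathcal{V}^*}/n}$ rate $T_1$ of Theorem~\ref{Theorem 3}.
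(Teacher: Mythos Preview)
Your proof is correct and follows essentially the same route as the paper: bound the numerator via $\|\Tilde{\Tilde{\boldsymbol{Q}}}_n\boldsymbol{\gamma}^*_{\mathcal{V}^*}\|_\infty \le \|\Tilde{\Tilde{\boldsymbol{Q}}}_n^{1/2}\|_\infty\|\boldsymbol{v}\|_\infty$, rewrite the denominator as $\|\boldsymbol{v}\|_2^2$, use Cauchy--Schwarz $\|\boldsymbol{v}\|_1^2\le p_{\mathcal{V}^*}\|\boldsymbol{v}\|_2^2$, and then invoke the hypothesis together with $p_{\mathcal{V}^*}/n = O(1)$. Your explicit remark that $\|\Tilde{\Tilde{\boldsymbol{Q}}}_n^{1/2}\|_\infty$ must be read as the induced $\ell_\infty\!\to\!\ell_\infty$ operator norm is a useful clarification the paper leaves implicit; note also that the appeal to Assumption~4 in your opening paragraph is ultimately unnecessary, since the Cauchy--Schwarz step already absorbs the $\|\boldsymbol{v}\|_2^2$ denominator without needing it to be bounded below.
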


 Proposition \ref{Proposition 2} shows that $T_2$ is limited in the general case where dominant scaled $\gamma^*_j$ does not exist. For example, if we assume $\boldsymbol{Q}_n = \boldsymbol{I}$ and $\boldsymbol{\gamma}_{\mathcal{V}^*}^* = C\boldsymbol{1}_{p_{\mathcal{V}^*}}$, where $C$ is a constant or diminishing to zero, then  $\|\boldsymbol{\gamma}_{\mathcal{V}^*}^*\|_\infty/\|\boldsymbol{\gamma}_{\mathcal{V}^*}^*\|_1 = o(\|\boldsymbol{\gamma}_{\mathcal{V}^*}^*\|_1/p_{\mathcal{V}^*}) = o(Cp_{\mathcal{V}^*}/p_{\mathcal{V}^*}) = o(C)$ holds and it follows that $T_2\rightarrow0$.

 \begin{proposition}\label{Proposition 3}{\bf (Approximation of $\operatorname{Bias}(\hat{\beta}_{or}^{TSLS})$)} Let $s = \operatorname{max}(\mu_n,{p_{\mathcal{V}^*}})$. Under Assumptions 1-4, we obtain
 \begin{eqnarray}
        E\left[\operatorname{Bias}(\hat{\beta}^{\text{TSLS}}_{\text{or}})\right] &=& \frac{\sigma_{\epsilon\eta}}{\sigma_{\eta}^{2}}\left(\frac{{p_{\mathcal{V}^*}}}{(\mu_n+{p_{\mathcal{V}^*}})}-\frac{2 \mu_n^{2}}{(\mu_n+{p_{\mathcal{V}^*}})^{3}}\right)+o\left(s^{-1}\right).\label{Exp Bias}
 \end{eqnarray}

 \end{proposition}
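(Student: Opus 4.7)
My plan is to evaluate $E[N/D]$ via a Nagar-type stochastic expansion, where $N = \boldsymbol{D}^\top P\boldsymbol{\epsilon}$, $D = \boldsymbol{D}^\top P\boldsymbol{D}$, and $P = P_{\boldsymbol{Z}} - P_{\boldsymbol{Z}_{\mathcal{V}^{c*}}}$ is an orthogonal projection of rank $p_{\mathcal{V}^*}$. Working conditionally on $\boldsymbol{Z}$ and substituting $\boldsymbol{D} = \boldsymbol{Z}\boldsymbol{\gamma}^* + \boldsymbol{\eta}$, the identity $P\boldsymbol{Z}_{\mathcal{V}^{c*}} = \boldsymbol{0}$ yields $P\boldsymbol{D} = m + P\boldsymbol{\eta}$ with $m := P\boldsymbol{Z}_{\mathcal{V}^*}\boldsymbol{\gamma}^*_{\mathcal{V}^*}$ satisfying $m^\top m = \sigma_\eta^2\mu_n$ (Assumption 4) and $Pm = m$, reducing numerator and denominator to quadratic/bilinear forms in Gaussian $(\boldsymbol{\epsilon},\boldsymbol{\eta})$ (Assumption 3). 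With $\psi := E[D] = \sigma_\eta^2(\mu_n+p_{\mathcal{V}^*})$ and expanding $1/D = \psi^{-1}\sum_{k\ge 0}(-(D-\psi)/\psi)^k$, I obtain
\begin{equation*}
E[N/D] = \frac{E[N]}{\psi} - \frac{\operatorname{Cov}(N,D)}{\psi^2} + \frac{E[N]\operatorname{Var}(D)}{\psi^3} + R,
\end{equation*}
where the remainder $R$ gathers $E[\tilde{N}(D-\psi)^2]/\psi^3$ (with $\tilde N := N - E[N]$) and the $k\ge 3$ tail.

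The three leading moments are Gaussian computations by Isserlis' theorem, using $E[\boldsymbol{\epsilon}\boldsymbol{\eta}^\top] = \sigma_{\epsilon\eta}\boldsymbol{I}$ and the trace identity $\operatorname{tr}(P^k) = p_{\mathcal{V}^*}$ (since $P^2=P$). Immediately $E[N] = \sigma_{\epsilon\eta}p_{\mathcal{V}^*}$; two of four cross-terms in $\operatorname{Cov}(N,D)$ vanish as odd Gaussian moments, while the other two give $E[(m^\top\boldsymbol{\epsilon})(2m^\top\boldsymbol{\eta})] = 2\sigma_{\epsilon\eta}\sigma_\eta^2\mu_n$ and $E[(\boldsymbol{\eta}^\top P\boldsymbol{\epsilon})(\boldsymbol{\eta}^\top P\boldsymbol{\eta} - \sigma_\eta^2 p_{\mathcal{V}^*})] = 2\sigma_{\epsilon\eta}\sigma_\eta^2 p_{\mathcal{V}^*}$, summing to $\operatorname{Cov}(N,D) = 2\sigma_{\epsilon\eta}\sigma_\eta^2(\mu_n+p_{\mathcal{V}^*})$; similarly the cross term in $\operatorname{Var}(D)$ vanishes, leaving $\operatorname{Var}(D) = 4\sigma_\eta^4\mu_n + 2\sigma_\eta^4 p_{\mathcal{V}^*}$. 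Substituting and clearing to common denominator $(\mu_n+p_{\mathcal{V}^*})^3$, the three leading terms algebraically collapse to
\begin{equation*}
\frac{\sigma_{\epsilon\eta}}{\sigma_\eta^2}\cdot\frac{p_{\mathcal{V}^*}(\mu_n+p_{\mathcal{V}^*})^2 - 2\mu_n^2}{(\mu_n+p_{\mathcal{V}^*})^3} = \frac{\sigma_{\epsilon\eta}}{\sigma_\eta^2}\left(\frac{p_{\mathcal{V}^*}}{\mu_n+p_{\mathcal{V}^*}} - \frac{2\mu_n^2}{(\mu_n+p_{\mathcal{V}^*})^3}\right),
\end{equation*}
which is exactly the target.

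It remains to show $R = o(s^{-1})$. The principal additional piece, $E[\tilde{N}(D-\psi)^2]/\psi^3$, splits into summands; the delicate one is $E[(\boldsymbol{\eta}^\top P\boldsymbol{\epsilon} - \sigma_{\epsilon\eta}p_{\mathcal{V}^*})(\boldsymbol{\eta}^\top P\boldsymbol{\eta} - \sigma_\eta^2 p_{\mathcal{V}^*})^2]$, which I compute either by direct Isserlis enumeration or by the conditional trick $E[\boldsymbol{\eta}^\top P\boldsymbol{\epsilon}\mid\boldsymbol{\eta}] = (\sigma_{\epsilon\eta}/\sigma_\eta^2)\boldsymbol{\eta}^\top P\boldsymbol{\eta}$, obtaining $E[(\boldsymbol{\eta}^\top P\boldsymbol{\epsilon})(\boldsymbol{\eta}^\top P\boldsymbol{\eta})^2] = \sigma_{\epsilon\eta}\sigma_\eta^4(p_{\mathcal{V}^*}^3 + 6p_{\mathcal{V}^*}^2 + 8p_{\mathcal{V}^*})$. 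The apparent leading $p_{\mathcal{V}^*}^3$ and $p_{\mathcal{V}^*}^2$ contributions cancel exactly against the two re-centering subtractions, leaving only $O(p_{\mathcal{V}^*})$; combined with $E[\tilde{N}(2m^\top\boldsymbol{\eta})^2]$ and $2E[\tilde{N}(2m^\top\boldsymbol{\eta})(\boldsymbol{\eta}^\top P\boldsymbol{\eta}-\sigma_\eta^2 p_{\mathcal{V}^*})]$, each of order $O(\mu_n)$, one gets $E[\tilde{N}(D-\psi)^2] = O(s)$, so the ratio is $O(s^{-2})$. Cauchy--Schwarz together with $E[(D-\psi)^{2k}] = O(s^k)$ and $E[N^2] = O(s)$ bounds the $k\ge 3$ tail by $O(s^{-(k+1)/2}) = o(s^{-1})$. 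The main obstacle is accurately tracking the $p_{\mathcal{V}^*}^3$-cancellations, which make the cross cumulant of order $s$ rather than $s^3$; this relies critically on the projection structure of $P$ (so all powers have trace $p_{\mathcal{V}^*}$) and on the vanishing of odd-order Gaussian moments afforded by Assumption 3.
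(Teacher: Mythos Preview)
Your proposal is correct and follows essentially the same approach as the paper: both set up $\operatorname{Bias}(\hat{\beta}^{\text{TSLS}}_{\text{or}}) = c/d$ with $\bar{c}=E[c]=\sigma_{\epsilon\eta}p_{\mathcal{V}^*}$ and $\bar{d}=E[d]=\sigma_\eta^2(\mu_n+p_{\mathcal{V}^*})$, apply the Nagar-type stochastic expansion of a ratio, and take expectations term by term. The paper simply invokes Bun and Windmeijer (2011) for the moment calculations and remainder control, whereas you carry them out explicitly via Isserlis' theorem and the projection identity $\operatorname{tr}(P^k)=p_{\mathcal{V}^*}$; the substance is the same.
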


\begin{remark}
 The rate of concentration parameter $\mu_n$ will affect $T_3$ through  $|\operatorname{Bias}(\hat{\beta}^{\text{TSLS}}_{\text{or}})|$ in the many IVs setting. Suppose Assumption 4 holds, that $\mu_n \overset{p}{\rightarrow}\mu_0 n$, the leading term in (\ref{Exp Bias}) is $\displaystyle{\frac{\sigma_{\epsilon\eta}}{\sigma^2_\eta}\frac{\nu_{p_{\mathcal{V}^*}}}{\mu_0+\nu_{p_{\mathcal{V}^*}}}}	\ll \displaystyle{\frac{\sigma_{\epsilon\eta}}{\sigma^2_\eta}}$ for moderate $\mu_0$ since $0<\nu_{p_{\mathcal{V}^*}}<1$ while $\mu_0$ could be larger than $1$. While in the many weak IVs setting \citep{chao2005consistent,hansen2008estimation,newey2009generalized}, $\mu_n/n\overset{p}{\rightarrow}0$ and the leading term in \eqref{Exp Bias} becomes  $\displaystyle{{\sigma_{\epsilon\eta}}/{\sigma^2_\eta}}$. Thus, the many weak IVs setting imposes some difficulty (a higher $T_3$) for selecting valid IVs in Theorem \ref{Theorem 3}. 
 \end{remark}

The following theorem describes the asymptotic behavior of the WIT estimator for many valid and invalid IVs cases by combining Theorem 1 and invariant likelihood arguments in \cite{kolesar2018minimum}. We further denote two statistics
\begin{equation}
    \boldsymbol{S} = \frac{1}{n-p}(\boldsymbol{Y},\boldsymbol{D})^\top M_{\boldsymbol{Z}}(\boldsymbol{Y},\boldsymbol{D}),\quad \boldsymbol{T} = \frac{1}{n}(\boldsymbol{Y}_{\perp},\boldsymbol{D}_{\perp})^\top M_{\boldsymbol{Z}_{\perp}}(\boldsymbol{Y}_{\perp},\boldsymbol{D}_{\perp}) \label{Two Stat}
\end{equation}
 as the estimates of the covariance matrix of reduced-form error $\boldsymbol{\Omega} = \operatorname{Cov}(\boldsymbol{\epsilon}+\beta^*\boldsymbol{\eta},\boldsymbol{\eta})$ and a variant of concentration parameter, respectively. Also,  let $m_{\text{max}} = \lambda_{\text{max}}( \boldsymbol{S}^{-1}\boldsymbol{T})$, $\hat{\mu}_n = \operatorname{max}(m_{\text{max}}-{p_{\mathcal{V}^*}}/n,0)$ and $\widehat{\boldsymbol{\Omega}} = \frac{n-p}{n-{p_{\mathcal{V}^{c*}}}/n} \boldsymbol{S}+\frac{n}{n-{p_{\mathcal{V}^{c*}}}/n}(\boldsymbol{T}-\frac{\hat{\mu}_n}{\hat{\boldsymbol{a}}^\top \boldsymbol{S}^{-1} \hat{\boldsymbol{a}}} \hat{\boldsymbol{a}} \hat{\boldsymbol{a}}^\top )$, where $\hat{\boldsymbol{a}} = (\hat{\beta}^{\text{WIT}},1)$ and {{ $|\boldsymbol{\Sigma}|$ is the determinant of $\boldsymbol{\Sigma}$}}.

 \begin{theorem}\label{theorem 4}
 Under the same conditions as in Theorem \ref{Theorem 3}, we obtain:
 \begin{enumerate}[itemsep=2pt,topsep=0pt,parsep=0pt,label=(\alph*)]
 \item (Consistency): $\hat{\beta}^{\text{WIT}}\overset{p}{\rightarrow} \beta^*$ with $\hat{\kappa}_{\text{liml}} = \frac{1-\upsilon_{p_{\mathcal{V}^*}}}{1-\upsilon_{p_{\mathcal{V}^{c*}}}-\upsilon_{p_{\mathcal{V}^*}}}+ o_p(1)$.
 \item (Asymptotic normality): $\sqrt{n}(\hat{\beta}^{\text{WIT}}-\beta^*)  \overset{d}{\rightarrow}\mathcal{N}\Big(0,\mu_{0}^{-2}\big[\sigma_\epsilon^2\mu_0+\frac{\upsilon_{p_{\mathcal{V}^{c*}}}(1-\upsilon_{p_{\mathcal{V}^*}})}{1-\upsilon_{p_{\mathcal{V}^{c*}}}-\upsilon_{p_{\mathcal{V}^*}}}|\boldsymbol{\Sigma}|\big]\Big)$.
 \item (Consistent variance estimator):
 \begin{equation*}
     \begin{aligned}
          \widehat{\operatorname{Var}}(\hat{\beta}^{\text{WIT}}) &= \frac{\hat{\boldsymbol{b}}^\top \widehat{\boldsymbol{\Omega}}\hat{\boldsymbol{b}}(\hat{\mu}_n+{p_{\mathcal{V}^*}}/n)}{-\hat{\mu}_n}\Big(\hat{Q}_S\widehat{\boldsymbol{\Omega}}_{22}-\boldsymbol{T}_{22} +\frac{\hat{c}}{1-\hat{c}}\frac{\hat{Q}_S}{\hat{\boldsymbol{a}}^\top \widehat{\boldsymbol{\Omega}}^{-1}\hat{\boldsymbol{a}}}\Big)^{-1}\\
          &\overset{p}{\rightarrow} \mu_{0}^{-2}\big[\sigma_\epsilon^2\mu_0+\frac{\upsilon_{p_{\mathcal{V}^{c*}}}(1-\upsilon_{p_{\mathcal{V}^*}})}{1-\upsilon_{p_{\mathcal{V}^{c*}}}-\upsilon_{p_{\mathcal{V}^*}}}|\boldsymbol{\Sigma}|\big],
     \end{aligned}
 \end{equation*}
 where $\hat{\boldsymbol{b}} = (1,-\hat{\beta}^{\text{WIT}})$ and $\hat{Q}_S = \frac{\hat{\boldsymbol{b}}^\top \boldsymbol{T}\hat{\boldsymbol{b}}}{\hat{\boldsymbol{b}}^\top \widehat{\boldsymbol{\Omega}}\hat{\boldsymbol{b}}}$.
 \end{enumerate}
 \end{theorem}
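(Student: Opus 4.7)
The plan is to leverage the selection consistency established in Theorem \ref{Theorem 3} so that the post-selection estimator can, with probability tending to one, be analyzed as an oracle LIML estimator, and then to borrow the many-IVs, many-covariates LIML asymptotic machinery of \cite{kolesar2018minimum} that is already cited in the paragraph preceding the theorem. Concretely, by \eqref{Selection Consistency} we have $\Pr(\hat{\mathcal{V}} = \mathcal{V}^*)\to 1$, so on the event $\{\hat{\mathcal{V}} = \mathcal{V}^*\}$ all quantities in \eqref{WIT}--\eqref{k_fuller} coincide with their oracle analogues obtained by fixing $\hat{\mathcal{V}}=\mathcal{V}^*$. Hence it suffices to derive parts (a)--(c) for the oracle LIML estimator that uses $\boldsymbol{Z}_{\mathcal{V}^*}$ as excluded instruments while treating $\boldsymbol{Z}_{\mathcal{V}^{c*}}$ as included exogenous regressors, and to invoke a standard continuous-mapping argument to carry the conclusion back to $\hat{\beta}^{\text{WIT}}$.

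For part (a), I would work with the residualized representation \eqref{resmodel}, where $M_{\boldsymbol{Z}_{\mathcal{V}^{c*}}}$ partials out the invalid IVs; this reduces the problem to LIML with $p_{\mathcal{V}^*}$ instruments in a sample of effective size $n - p_{\mathcal{V}^{c*}}$, under Assumption 1 and the concentration-parameter condition $\mu_n/n\to\mu_0$ from Assumption 4. The closed-form expression $\hat{\kappa}_{\text{liml}} = \lambda_{\min}(\boldsymbol{S}^{-1}\boldsymbol{T})$ (after the appropriate rescaling) can then be analyzed through a Bekker-type argument: $\boldsymbol{S}$ converges in probability to $\boldsymbol{\Omega}$ and $\boldsymbol{T}$ concentrates around $\boldsymbol{\Omega}$ plus a rank-one perturbation from the structural signal, so the smallest generalized eigenvalue converges to $(1-\upsilon_{p_{\mathcal{V}^*}})/(1 - \upsilon_{p_{\mathcal{V}^{c*}}} - \upsilon_{p_{\mathcal{V}^*}})$. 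Plugging this limit into the first-order condition yields $\hat{\beta}^{\text{WIT}}\overset{p}{\to}\beta^*$; the key algebraic step is verifying that the many-IV bias of TSLS is annihilated by the $\hat{\kappa}_{\text{liml}} M_{\boldsymbol{Z}_{\mathcal{V}^*\perp}}$ correction.

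For part (b), I would differentiate the LIML first-order condition $\hat{\boldsymbol{b}}^\top(\boldsymbol{T} - \hat{\kappa}_{\text{liml}}\boldsymbol{S})\hat{\boldsymbol{a}}=0$ about $\beta^*$ (with $\boldsymbol{b}=(1,-\beta)^\top$, $\boldsymbol{a}=(\beta,1)^\top$) and expand, obtaining $\sqrt{n}(\hat{\beta}^{\text{WIT}}-\beta^*)$ as a ratio of a score-type term over an information-type term. The denominator converges to $\mu_0$ by Assumption 4 while the numerator decomposes into (i) a standard $n^{-1/2}\sum_i \epsilon_i \boldsymbol{Z}_{i,\mathcal{V}^*}^\top\boldsymbol{\gamma}^*_{\mathcal{V}^*}$ piece with limiting variance $\sigma_\epsilon^2 \mu_0$, and (ii) a many-instruments piece stemming from the $M_{\boldsymbol{Z}_{\mathcal{V}^*\perp}}$ quadratic forms whose limiting variance, after applying a martingale/quadratic-form CLT under Assumption 3 (normality), equals $\upsilon_{p_{\mathcal{V}^{c*}}}(1-\upsilon_{p_{\mathcal{V}^*}})/(1-\upsilon_{p_{\mathcal{V}^{c*}}}-\upsilon_{p_{\mathcal{V}^*}})\cdot|\boldsymbol{\Sigma}|$. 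These two pieces are asymptotically independent by orthogonality of the projection and residual spaces, which gives the stated sandwich form.

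For part (c), the task is to show that $\widehat{\boldsymbol{\Omega}}$ is consistent for $\boldsymbol{\Omega}$ and that $\hat{\boldsymbol{b}}^\top\widehat{\boldsymbol{\Omega}}\hat{\boldsymbol{b}}\overset{p}{\to}\sigma_\epsilon^2$, $\hat{Q}_S\overset{p}{\to}1$, and $\hat{\mu}_n/n\overset{p}{\to}\mu_0$; plugging these into $\widehat{\operatorname{Var}}(\hat{\beta}^{\text{WIT}})$ and using $\hat{c} = p_{\mathcal{V}^{c*}}/n \to \upsilon_{p_{\mathcal{V}^{c*}}}$ recovers the asymptotic variance in (b). The construction of $\widehat{\boldsymbol{\Omega}}$ mirrors the unbiased estimator of $\boldsymbol{\Omega}$ proposed in \cite{kolesar2018minimum}, so once part (a) is in hand the derivation reduces to combining Slutsky with the law of large numbers for the quadratic forms $\boldsymbol{S}$ and $\boldsymbol{T}$ defined in \eqref{Two Stat}. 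The main obstacle throughout is the handling of the score's second (many-instruments) variance component: Assumption 3's normality is what delivers the correct form $|\boldsymbol{\Sigma}|$ via the identity $\operatorname{Var}(\epsilon\eta\mid Z) = \sigma_\epsilon^2\sigma_\eta^2+\sigma_{\epsilon,\eta}^2 = \sigma_\epsilon^2\sigma_\eta^2 + (\sigma_\epsilon^2\sigma_\eta^2 - |\boldsymbol{\Sigma}|)$, and care must be taken in propagating the $o_p(1)$ selection error so that the selection-stage randomness is negligible relative to the $\sqrt{n}$-scaled estimation error.
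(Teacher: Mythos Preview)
Your proposal is correct and follows essentially the same route as the paper: reduce to the oracle LIML via the selection consistency $\Pr(\hat{\mathcal{V}}=\mathcal{V}^*)\to 1$ from Theorem \ref{Theorem 3}, and then invoke the many-instruments/many-covariates LIML asymptotics of \cite{kolesar2015identification} and \cite{kolesar2018minimum}. The paper's own proof is in fact just a two-line citation---part (a) from Corollary~1(iv) of \cite{kolesar2015identification} and parts (b)--(c) from Proposition~1 of \cite{kolesar2018minimum}---so your sketch of the Bekker-type eigenvalue analysis and the score decomposition is simply a more explicit rendering of what those cited results contain rather than a different argument.
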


 Notably, when the number of invalid IVs ${p_{\mathcal{V}^*}}$ is a constant, the variance estimator above is reduced to the one that \cite{bekker1994alternative} derived for the typical many IVs case.  \cite{hansen2008estimation} showed that it is still valid under many weak IVs asymptotics.

\subsection{Special Cases in Low \& High Dimensions}
{  The values of $\kappa(n)$ and $\kappa^c(n)$ outlined in Theorem \ref{Theorem 3} provide the general formulae to confirm Assumption 5. In this subsection, we discuss some  representative cases in low and high dimensions and verify that Assumption 5 holds.

\subsubsection{Finite $p$ Case}\label{Section 331}
First, we show that the WIT estimator is a more powerful tool than the existing methods requiring the majority rule \citep{kang2016instrumental,windmeijer2019use} also under the \emph{finite number of IVs with a mixture of strong and weak IVs} settings. WIT achieves the same asymptotic results as \cite{windmeijer2019use} under more relaxed conditions. Specifically, under finite IVs, Assumption 1 can be reduced to Assumption 1$^\prime$ as follows.

{\bf{Assumption}} 1$^\prime$ (Finite Number of IVs): ${p_{\mathcal{V}^{c*}}}\geq 1$ and ${p_{\mathcal{V}^*}}\geq1$ are fixed constants, and ${p_{\mathcal{V}^{c*}}}+{p_{\mathcal{V}^*}} = p<n$.

{In the finite IVs case, the $T_2$ and $T_3$ terms in Theorem \ref{Theorem 3} for selecting valid IVs go to zero fast and the required separation rate reduces to $\kappa(n)\asymp n^{-1/2}$. We present the asymptotic properties for the WIT estimator here. Consider the following mixed IV strengths case. Let $\gamma_j^* = Cn^{-\tau_j}$ for $j = 1,2,\ldots,p$, $\tau_{\mathcal{V}^*} = {\operatorname{argmax}}_{\tau_j}\{\tau_j:j\in\mathcal{V}^*\}$, $\tau_{\mathcal{I}_c} = {\operatorname{argmax}}_{\tau_j}\{\tau_j:j\in\mathcal{I}_c\}$, and $\tau_{\mathcal{I}_{\tilde{c}}} = {\operatorname{argmax}}_{\tau_j}\{\tau_j:j\in\mathcal{I}_{\tilde{c}}\}$, where $c\neq \tilde{c}$.
\begin{corollary}\label{Corollary 2}
{\bf (Finite $p$ with Mixture of Strong and Weak IVs)} Suppose Assumptions 1$^\prime$, 2-4 and 6 hold. Additionally, we assume each IV is at least a weak one such that $\gamma^*_j = O(n^{-\tau_j})$ and $0\leq\tau_j\leq 1/2$ for $j = 1,2,\ldots,p$. For any fixed ${min}_{\operatorname{j\in \mathcal{V}^{c*}}}\, |{\alpha}^*_j| >0$, if $\tau_{\mathcal{V}^*}+2\tau_{\mathcal{I}_c}<1$, $\tau_{\mathcal{V}^*}+2\tau_{\mathcal{I}_{\tilde{c}}}<1$ and $\tau_{\mathcal{I}_c}+\tau_{\mathcal{I}_{\tilde{c}}} <2/3$, then we have\\
(a) (Selection consistency): $  \widehat{\boldsymbol{\alpha}}^{\text{MCP}} = \underset{\widehat{\boldsymbol{\alpha}}\in \mathscr{B}_0(\lambda,\rho)}{\operatorname{arg min}} \|\widehat{\boldsymbol{\alpha}}\|_0, \Pr(\hat{\mathcal{V}} = \mathcal{V}^*, \widehat{\boldsymbol{\alpha}}^{\text{MCP}}  = \widehat{\boldsymbol{\alpha}}^{\text{or}}) \overset{p}{\rightarrow}1$.\\
 (b) (Consistency \& equivalence of WIT and TSLS): $\hat{\beta}^{\text{WIT}}\overset{p}{\rightarrow} \beta^*$ with $\hat{\kappa}_{\text{liml}} = 1+ o_p(1)$.\\
 (c) (Asymptotic normality): $\sqrt{n}(\hat{\beta}^{\text{WIT}}-\beta^*)  \overset{d}{\rightarrow}\mathcal{N}\Big(0,\mu_{0}^{-1}\sigma_\epsilon^2\Big)$. \\
 (d) (Consistent variance estimator): $$ \widehat{\operatorname{Var}}(\hat{\beta}^{\text{WIT}}) = \frac{\hat{\boldsymbol{b}}^\top \widehat{\boldsymbol{\Omega}}\hat{\boldsymbol{b}}(\hat{\mu}_n+{p_{\mathcal{V}^*}}/n)}{-\hat{\mu}_n}\Big(\hat{Q}_S\widehat{\boldsymbol{\Omega}}_{22}-\boldsymbol{T}_{22} +\frac{\hat{c}}{1-\hat{c}}\frac{\hat{Q}_S}{\hat{\boldsymbol{a}}^\top \widehat{\boldsymbol{\Omega}}^{-1}\hat{\boldsymbol{a}}}\Big)^{-1}\overset{p}{\rightarrow} \mu_{0}^{-1}\sigma_\epsilon^2,$$
 where $\hat{\boldsymbol{b}} = (1,-\hat{\beta}^{\text{WIT}})$ and $\hat{Q}_S = \frac{\hat{\boldsymbol{b}}^\top \boldsymbol{T}\hat{\boldsymbol{b}}}{\hat{\boldsymbol{b}}^\top \widehat{\boldsymbol{\Omega}}\hat{\boldsymbol{b}}}$.
\end{corollary}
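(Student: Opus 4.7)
The strategy is to deduce Corollary \ref{Corollary 2} as the specialization of Theorems \ref{Theorem 3} and \ref{theorem 4} to the limiting regime $\upsilon_{p_{\mathcal{V}^*}} = \upsilon_{p_{\mathcal{V}^{c*}}} = 0$, which is exactly what Assumption 1$^\prime$ enforces inside Assumption 1. Once this reduction is in place, parts (b)--(d) follow by direct substitution into the asymptotic formulas of Theorem \ref{theorem 4}: the LIML discriminant limit $\frac{1-\upsilon_{p_{\mathcal{V}^*}}}{1-\upsilon_{p_{\mathcal{V}^{c*}}}-\upsilon_{p_{\mathcal{V}^*}}}$ collapses to $1$, the asymptotic variance $\mu_0^{-2}\bigl[\sigma_\epsilon^2\mu_0 + \tfrac{\upsilon_{p_{\mathcal{V}^{c*}}}(1-\upsilon_{p_{\mathcal{V}^*}})}{1-\upsilon_{p_{\mathcal{V}^{c*}}}-\upsilon_{p_{\mathcal{V}^*}}}|\boldsymbol{\Sigma}|\bigr]$ reduces to $\mu_0^{-1}\sigma_\epsilon^2$, and the plug-in variance estimator converges to the same scalar by the continuous mapping theorem. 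The substantive work is therefore to verify that Assumption 5 holds with the rates in (\ref{rate of kappa}) and (\ref{rate of kappa^c}), so that Theorem \ref{Theorem 3} can be invoked for part (a).

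I would first bound $\kappa(n)$. With finite $p$, the term $T_1 = \sqrt{\log p_{\mathcal{V}^*}/n}$ is $O(n^{-1/2})$. Assumption 4 gives $\boldsymbol{\gamma}_{\mathcal{V}^*}^{*\top}\Tilde{\Tilde{\boldsymbol{Q}}}_n\boldsymbol{\gamma}_{\mathcal{V}^*}^* = \sigma_\eta^2\mu_0 + o(1)$; combined with $p_{\mathcal{V}^*}/n = O(n^{-1})$ and $\|\Tilde{\Tilde{\boldsymbol{Q}}}_n\boldsymbol{\gamma}_{\mathcal{V}^*}^*\|_\infty = O_p(1)$ under standardization, this yields $T_2 = O(n^{-1})$. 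Proposition \ref{Proposition 3} gives $|\operatorname{Bias}(\hat{\beta}_{\text{or}}^{\text{TSLS}})| = O(n^{-1})$ in the strong-identification regime, so $T_3 = O(n^{-1})$ and hence $\kappa(n) \asymp n^{-1/2}$. Because $\min_{j \in \mathcal{V}^{c*}}|\alpha^*_j|$ is a fixed positive constant by hypothesis, the first clause of Assumption 5 holds automatically for all large $n$.

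The harder part is showing $|\tilde{\alpha}^c_j| > \kappa^c(n)$ for every $\tilde c \neq c$ and every $j \in \mathcal{I}_{\tilde c}$. By (\ref{GAP}), the smallest such quantity is of order $|\tilde c - c|\, n^{-\tau_{\mathcal{I}_{\tilde c}}}$, while (\ref{rate of kappa^c}) decomposes $\kappa^c(n)$ into a stochastic $n^{-1/2}$ piece, a signal-ratio piece whose denominator is the $\mathcal{P}_c$-concentration parameter of order $n^{1-2\tau^0_{\mathcal{I}_c}}$ (with $\tau^0_{\mathcal{I}_c}$ denoting the smallest $\tau_j$ within $\mathcal{I}_c$), and a bias piece controlled by the $\mathcal{P}_c$-analog of Proposition \ref{Proposition 3} applied with $\mathcal{I}_c$ as the valid set and $\beta^*+c$ as the structural coefficient. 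A piecewise rate comparison, using $|\gamma_j^*| \asymp n^{-\tau_j}$ and the observation that $\|\bar{\boldsymbol{\gamma}}^*\|_\infty$ is driven by the strongest (smallest-$\tau$) component of the corresponding block, shows that the three inequalities $\tau_{\mathcal{V}^*} + 2\tau_{\mathcal{I}_c} < 1$, $\tau_{\mathcal{V}^*} + 2\tau_{\mathcal{I}_{\tilde c}} < 1$, and $\tau_{\mathcal{I}_c} + \tau_{\mathcal{I}_{\tilde c}} < 2/3$ are exactly what is required to make each piece of $\kappa^c(n)$ strictly smaller than $n^{-\tau_{\mathcal{I}_{\tilde c}}}$. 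This balancing is the principal obstacle: the $\mathcal{P}_c$-bias computation must be carried out afresh because Proposition \ref{Proposition 3} is stated only for $\mathcal{P}_0$, and the three inequalities must hold simultaneously across all ordered pairs $(c,\tilde c)$ arising in the decomposition $\mathcal{V}^* \cup \bigl\{\cup_{c \neq 0}\mathcal{I}_c\bigr\}$. Once this verification is complete, Theorem \ref{Theorem 3} delivers selection consistency in (a), and parts (b)--(d) emerge from the substitution described in the first paragraph.
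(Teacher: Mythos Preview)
Your strategy matches the paper's: specialize Theorems \ref{Theorem 3}--\ref{theorem 4} to the finite-$p$ regime ($\upsilon_{p_{\mathcal{V}^*}}=\upsilon_{p_{\mathcal{V}^{c*}}}=0$), read off (b)--(d) by direct substitution, and establish (a) by verifying Assumption 5 through term-by-term bounds on $\kappa(n)$ and $\kappa^c(n)$. Your treatment of $\kappa(n)\asymp n^{-1/2}$ is correct and identical to the paper's.

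There is, however, a gap in your $\kappa^c(n)$ bookkeeping. Because $\min_{j\in\mathcal{V}^{c*}}|\alpha_j^*|>0$ is a fixed constant while $\gamma_l^*\asymp n^{-\tau_{\mathcal{I}_c}}$ for $l\in\mathcal{I}_c$, the ratio $c=\alpha_l^*/\gamma_l^*$ \emph{grows} like $n^{\tau_{\mathcal{I}_c}}$. Consequently the $(1+c)$ prefactor in \eqref{rate of kappa^c} is $\asymp n^{\tau_{\mathcal{I}_c}}$, not $O(1)$; the $\mathcal{P}_c$-covariance $\operatorname{Cov}(\tilde{\boldsymbol{\epsilon}}^c,\boldsymbol{\eta})=\sigma_{\epsilon,\eta}-c\sigma_\eta^2$ is also $\asymp n^{\tau_{\mathcal{I}_c}}$, so that (via Proposition \ref{Proposition 3} with $\mu_n^c\asymp n^{1-2\tau_{\mathcal{I}_c}}$) the bias piece is $\asymp n^{3\tau_{\mathcal{I}_c}-1}$; and on the other side $|\tilde c - c|\asymp n^{\max(\tau_{\mathcal{I}_c},\tau_{\mathcal{I}_{\tilde c}})}$, not $O(1)$. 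Your description (``a stochastic $n^{-1/2}$ piece'' and target ``$n^{-\tau_{\mathcal{I}_{\tilde c}}}$'') drops all of these growing factors. They are precisely what produce the cross-terms $2\tau_{\mathcal{I}_c}+\tau_{\mathcal{V}^*}$ and $\tau_{\mathcal{I}_c}+\tau_{\mathcal{I}_{\tilde c}}$ in the stated hypotheses; carrying out the comparison as you wrote it would yield only conditions of the form $\tau_j<1/2$, not the mixed inequalities. The paper makes this explicit by deriving $\kappa^c(n)\asymp n^{\max(\tau_{\mathcal{I}_c}-1/2,\;3\tau_{\mathcal{I}_c}-1)}$ and then splitting the comparison with $|\tilde\alpha^c_j|$ into the two cases $\tilde c=0$ (yielding $\tau_{\mathcal{V}^*}+2\tau_{\mathcal{I}_c}<1$) and $\tilde c\neq 0$ (yielding, after symmetrizing over $c\leftrightarrow\tilde c$, the condition $\tau_{\mathcal{I}_c}+\tau_{\mathcal{I}_{\tilde c}}<2/3$).
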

}
Corollary \ref{Corollary 2} addresses a scenario where the violation of validity, $\alpha^*_j$, is held constant and thus is of the same order when $\tau_j=0$, or exceeds its strength parameter, $\gamma^*_j$, for $j \in \mathcal{V}^*_c$. This scenario frequently arises when conducting robust estimations, particularly in circumstances that inadvertently incorporate weakly relevant but strongly invalid IVs, or when there is no prior information on the candidate IVs set. Dealing with these situations effectively is a crucial component of robust statistical analysis.

Conversely, as noted by an anonymous reviewer, experienced researchers often select IVs that are argued to be both relevant and valid. A violation of IV validity could happen among relatively strong IVs. This corresponds to a scenario where the ratio of $\alpha^*_j/\gamma^*_j$ is small. In the following proposition, we demonstrate that the WIT estimator can effectively manage cases where $\alpha^*_j/\gamma^*_j = o(1)$.

\begin{proposition}\label{Proposition lowdim}
{\bf (Finite $p$ with $\alpha^*_j/\gamma^*_j = o(1)$)} Suppose Assumptions 1$^\prime$, 2, 3 and 6 hold, and Assumption 4 holds for each DGP in $\mathcal{Q}$. Each $\gamma^*_j$ is at least a weak IV such that $\gamma^*_j = O(n^{-\tau_j})$ and $0\leq\tau_j\leq 1/2$ for $j = 1,2,\ldots,p$ and satisfy $\gamma_l^*>\underset{j\in \mathcal{V}^{c*}}{\operatorname{max}}\,O(n^{-1/2} \gamma_j^* /\alpha_j^*)$ for valid $l \in \mathcal{V}^*$. Additionally, we suppose ${min}_{\operatorname{j\in \mathcal{V}^{c*}}}\, |{\alpha}^*_j| >O(n^{-1/2})$,  $\alpha^*_j/\gamma^*_j = o(1)$. Under these conditions, conclusions (a) to (d) in Corollary \ref{Corollary 2} remain valid.
\end{proposition}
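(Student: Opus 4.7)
The plan is to reduce the proposition to verifying Assumption 5 under the stated hypotheses, after which conclusion (a) follows from Theorem \ref{Theorem 3} and conclusions (b)--(d) follow from Theorem \ref{theorem 4} specialized to finite $p$ exactly as in Corollary \ref{Corollary 2}. The novelty relative to Corollary \ref{Corollary 2} is the regime $\alpha^*_j/\gamma^*_j = o(1)$: the asymptotic gaps $|\tilde c - c|$ between distinct DGPs in $\mathcal{Q}$ now vanish, so verifying the second half of Assumption 5 (the $\kappa^c(n)$ inequalities) becomes the delicate step.

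First I would simplify $\kappa(n)$ under Assumption 1$^\prime$. Because $p_{\mathcal{V}^*}$ is fixed, $T_1 \asymp n^{-1/2}$ while $T_2 = O(n^{-1})$. Since Assumption 4 holds for $\mathcal{P}_0$, Proposition \ref{Proposition 3} yields $|\operatorname{Bias}(\hat{\beta}^{\text{TSLS}}_{\text{or}})| = O(n^{-1})$, so $T_3 \lesssim n^{-1}\|\bar{\boldsymbol{\gamma}}^*_{\mathcal{V}^{c*}}\|_\infty = o(n^{-1/2})$. Hence $\kappa(n) \asymp n^{-1/2}$, and the hypothesis $\min_{j\in\mathcal{V}^{c*}}|\alpha^*_j| > O(n^{-1/2})$ delivers $|\boldsymbol{\alpha}^*_{\mathcal{V}^{c*}}|_{\min} > \kappa(n)$. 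The same simplification applied to $\kappa^c(n)$, using Assumption 4 for each alternative $\mathcal{P}_c \in \mathcal{Q}$ to control $|\operatorname{Bias}(\hat{\beta}_{\text{or}}^{c\,\text{TSLS}})|$, gives $\kappa^c(n) \asymp (1+c)\,n^{-1/2}$.

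Next I would invoke Remark \ref{Remark 3}: for any $j$ with $\alpha^*_j/\gamma^*_j = \tilde c \neq c$, $|\tilde\alpha^c_j| = |\tilde c - c|\cdot|\gamma^*_j|$. The critical sub-case is $j \in \mathcal{V}^*$ (so $\tilde c = 0$) for some competing $c = \alpha^*_k/\gamma^*_k$ with $k \in \mathcal{I}_c$, giving $|\tilde\alpha^c_j| = (|\alpha^*_k|/|\gamma^*_k|)\cdot|\gamma^*_j|$. The separation requirement $|\tilde\alpha^c_j| > \kappa^c(n) \asymp (1+c)n^{-1/2} \asymp n^{-1/2}$ (since $c = o(1)$) becomes $|\gamma^*_j| > O(n^{-1/2}|\gamma^*_k|/|\alpha^*_k|)$, which is precisely the assumed lower bound $\gamma^*_l > \max_{j\in\mathcal{V}^{c*}} O(n^{-1/2}\gamma^*_j/\alpha^*_j)$ for valid $l$. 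The remaining sub-case, $j \in \mathcal{I}_{\tilde c}$ for some $\tilde c \neq 0, c$, is handled by bounding $|\tilde c - c|\cdot|\gamma^*_j|$ below by a constant multiple of $\min(|\alpha^*_j|,|\alpha^*_k|) > O(n^{-1/2})$, using the triangle inequality on $\tilde c$ and $c$ together with the hypothesis on the minimum of $|\alpha^*_j|$ and the cross-condition on valid-IV strength.

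With Assumption 5 verified, Theorem \ref{Theorem 3} yields (a). Selection consistency places us in the oracle regime, and since $\upsilon_{p_{\mathcal{V}^*}} = \upsilon_{p_{\mathcal{V}^{c*}}} = 0$ under Assumption 1$^\prime$, Theorem \ref{theorem 4} degenerates: $\hat\kappa_{\text{liml}} = 1 + o_p(1)$, $\sqrt{n}(\hat\beta^{\text{WIT}} - \beta^*) \Rightarrow \mathcal{N}(0, \mu_0^{-1}\sigma_\epsilon^2)$, and $\widehat{\operatorname{Var}}(\hat\beta^{\text{WIT}}) \overset{p}{\to} \mu_0^{-1}\sigma_\epsilon^2$, recovering (b)--(d) verbatim from Corollary \ref{Corollary 2}. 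The main obstacle is the third paragraph: tracking how the vanishing $c$-gaps interact with possibly heterogeneously weak $\gamma^*_j$ so that \emph{every} pair of distinct DGPs in $\mathcal{Q}$ is separated uniformly at the $\kappa^c(n)$ scale. This is exactly where the requirement that Assumption 4 hold for each $\mathcal{P}_c$ (not only $\mathcal{P}_0$) and the asymmetric lower bound on valid-IV strengths become indispensable.
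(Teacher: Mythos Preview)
Your proposal is correct and follows essentially the same route as the paper's own proof: reduce to verifying Assumption 5, simplify $\kappa(n)$ and $\kappa^c(n)$ to order $n^{-1/2}$ via finite $p$ and Assumption 4 (applied to each $\mathcal{P}_c$), then split the $|\tilde{\alpha}^c_j|>\kappa^c(n)$ check into the two cases $\tilde c=0$ and $\tilde c\neq 0$, invoking respectively the valid-IV strength hypothesis and the $\min_{j\in\mathcal{V}^{c*}}|\alpha^*_j|>O(n^{-1/2})$ hypothesis, and finally appeal to Theorems \ref{Theorem 3} and \ref{theorem 4}. If anything, your write-up is slightly more explicit than the paper's about the rate of $|\operatorname{Bias}(\hat{\beta}^{\text{TSLS}}_{\text{or}})|$ and about how Theorem \ref{theorem 4} degenerates under $\upsilon_{p_{\mathcal{V}^*}}=\upsilon_{p_{\mathcal{V}^{c*}}}=0$.
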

Proposition \ref{Proposition lowdim} gives the desired asymptotic results for the WIT estimator for DGPs in the $\alpha^*_j/\gamma^*_j=o(1)$ scenario, which aligns with the aforementioned case that some researchers may pick relatively strong IVs in the design phase, and true $\alpha^*_j$ to satisfy the separation condition in Assumption 5. It only requires that $|\gamma_l^*|>\underset{j\in \mathcal{V}^{c*}}{\operatorname{inf}}O(n^{-1/2} \gamma_j^* /\alpha_j^*)$, which means the valid IVs' strength signals are stronger than the rate of $n^{-1/2}$ over $\alpha^*_j/\gamma_j^*$. It can be easily implied from this condition that a smaller value in ratio of $\alpha^*_j$ to $\gamma^*_j$ helps to reduce the needed strength signal $\gamma_l^*$ of the valid IVs. 
 Therefore, $\gamma^*_l = o(\gamma^*_j)$ for $l \in \mathcal{V}^*$ and $j \in \mathcal{V}^{c*}$ is a sufficient condition to verify it.  For the second part in Assumption 5: $|\tilde{\alpha}_j^c|>\kappa^{c}(n)$ for $j \in \{j:\alpha^*_j/\gamma^*_j = \tilde{c}\neq c\}$ will hold automatically under $\alpha^*_j/\gamma^*_j = o(1)$. Table \ref{tab:my_label} subsequently illustrates a simplified representative example featuring only 4 IVs, two of which are valid, and demonstrates the rates of $\kappa(n)$ and $\kappa^{c}(n)$ in true and transformed DGPs.

As a result, Corollary \ref{Corollary 2} and Proposition \ref{Proposition lowdim} provide easily interpretable examples of common DGPs across two distinct scenarios. These examples affirm the validity of Assumption 5, facilitating a more straightforward understanding for practitioners.

\begin{table}\caption{\label{tab:my_label} Illustration of separation condition in case of finite $p$ with $\alpha^*_j/\gamma^*_j = o(1)$}
\centering
\begin{adjustbox}{width = 12cm}
    \begin{threeparttable}
    \begin{tabular}{c|cccc|c|c}
    \hline\hline
         & \multicolumn{4}{c}{Coefficients}&  \multicolumn{2}{c}{Thresholds}\\\hline
        $\boldsymbol{\alpha}^*$ & 0& 0&$\alpha^*_3$& $\alpha^*_4$&$ \kappa(n)$ &$n^{-1/2}$\\
        
        $\boldsymbol{\gamma}^*$&$\gamma^*_1$ &$\gamma^*_2$& $\gamma^*_3$&$\gamma^*_4$&-&- \\\hline
        $\tilde{\boldsymbol{\alpha}}^{c_3}$&$\gamma^*_1\cdot c_3$ & $\gamma^*_2 \cdot c_3$& 0 & $\alpha^*_4-\gamma^*_4 \cdot c_3$& $\kappa^{c_3}(n)$& $ n^{-1/2}$\\
        $\tilde{\boldsymbol{\alpha}}^{c_4}$&$\gamma^*_1 \cdot c_4$& $\gamma^*_2 \cdot c_4$ & $\alpha^*_3$ - $\gamma^*_3\cdot  c_4$ & 0& $\kappa^{c_4}(n) $ & $  n^{-1/2}$\\
        \hline
    \end{tabular}
    	    	\begin{tablenotes}
\item[]\scriptsize Note: This example illustrates a scenario with four IVs, where the first two are valid, $c_3 = \alpha^*_3/\gamma^*_3$, $c_4 = \alpha^*_4/\gamma^*_4$ and $c_1 \neq c_2$. The first two rows represent the true DGPs. The last two rows, $\tilde{\boldsymbol{\alpha}}^c$ and ${\kappa}^c(n)$, correspond to the transformed DGPs according to equation \eqref{GAP} and the threshold defined in the separation condition: Assumption 5. The corresponding $\kappa^{c}(n)$ has been simplified through the proof presented in Proposition \ref{Proposition lowdim}.
\end{tablenotes}
\end{threeparttable}
\end{adjustbox}
\end{table}

\subsubsection{High Dimensional (Many IVs) Case}
Next, we consider a common scenario \citep{Andrews2019,chao2005consistent} with a large number (can grow with $n$) of individually weak instruments, where $\gamma^*_j = O(\sqrt{\log p/n^{1-\delta}})$ for $j = 1,2,\ldots,p$, and some small enough $\delta>0$ such that $\log n/n^{-\delta}<\infty$ to avoid an exploded variance in the outcome variable $Y_i$. Regarding invalid IVs, we assume the same rate of $|\boldsymbol{\alpha}_{\mathcal{V}^{c*}}^*|_{\text{min}} = O(\sqrt{\log p/n^{1-\delta}})$. 

Consequently, the separation condition $\kappa(n)$ is reduced to:
\begin{equation*}
    \kappa(n) \asymp \sqrt{\frac{\log p_{\mathcal{V}^*}}{n}} + \frac{p_{\mathcal{V}^*}}{n}\cdot\frac{\|\Tilde{\Tilde{\boldsymbol{Q}}}_n\boldsymbol{\gamma}_{\mathcal{V}^*}^*\|_{\infty}}{\boldsymbol{\gamma}_{\mathcal{V}^*}^{*\top}\Tilde{\Tilde{\boldsymbol{Q}}}_n\boldsymbol{\gamma}_{\mathcal{V}^*}^*} + {|\operatorname{Bias}(\hat{\beta}_{\text{or}}^{\text{TSLS}})| \|\bar{\boldsymbol{\gamma}}^*_{\mathcal{V}^{c*}}\|_\infty} = O\Big(\sqrt{\frac{\log p_{\mathcal{V}^*}}{n}}\Big)
\end{equation*}
as per Proposition \eqref{Proposition 2}. This satisfies the first part of Assumption 5, that is, $|\boldsymbol{\alpha}_{\mathcal{V}^{c*}}^*|_{\text{min}}>\kappa(n)$. The second part of Assumption 5, $|\tilde{\alpha}_j^c|>\kappa^{c}(n)$ for $j \in \{j:\alpha^*_j/\gamma^*_j = \tilde{c}\neq c\}$, can be confirmed according to \eqref{GAP},
\begin{equation}
     |\tilde{{\alpha}}^c_j| = |\alpha^*_j-c\gamma_j^*| = |\tilde{c}-c| \cdot |\gamma^*_j|= O\Big(\sqrt{\log p/n^{1-\delta}}\Big),
\end{equation}
where $|\tilde{c}-c|$ has the same order of $\alpha^*_j/\gamma^*_j \asymp  O(1)$ for $j \in \mathcal{V}_c^* $. The threshold of $\kappa^c(n)$ in \eqref{rate of kappa^c} is now specified as 
\begin{equation}
\begin{aligned}
    \kappa^c(n) &\asymp (1+c)\Big\{{\sqrt{\frac{\log |\mathcal{I}_c|}{n}}} + {\frac{|\mathcal{I}_c|}{n}\cdot\frac{\|\Tilde{\Tilde{\boldsymbol{Q}}}^c_n\boldsymbol{\gamma}_{\mathcal{I}_c}^*\|_{\infty}}{\boldsymbol{\gamma}_{\mathcal{I}_c}^{*\top}\Tilde{\Tilde{\boldsymbol{Q}}}^c_n\boldsymbol{\gamma}_{\mathcal{I}_c}^*}}\Big\} + {|\operatorname{Bias}(\hat{\beta}_{\text{or}}^{c~\text{TSLS}})|\|\bar{\boldsymbol{\gamma}}^*_{\mathcal{I}^c_c}\|_\infty}\\
&\asymp\sqrt{\frac{\log |\mathcal{I}_c|}{n}}.
    \end{aligned}
\end{equation}
Therefore, the second part of Assumption 5 holds as well.

We summarize the above discussions as the following proposition.
\begin{proposition}\label{high_dim theorem}
    Suppose Assumptions 1-4 and 6 hold. If we assume the uniform rate of $\gamma_j^* = \alpha_l^* = O(\sqrt{\log p/n^{1-\delta}})$ for $j = 1,2,\ldots,p$ , $l\in \mathcal{V}_c^*$ and small enough $\delta>0$ such that $\log n/n^{-\delta}<\infty$, then Assumption 5 is satisfied and the results of Theorems 3 and 4 hold consequently.
\end{proposition}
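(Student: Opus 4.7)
The plan is to verify directly that Assumption 5 holds under the posited uniform rate $\gamma_j^* = \alpha_l^* = O\!\bigl(\sqrt{\log p/n^{1-\delta}}\bigr)$, and then invoke Theorems \ref{Theorem 3} and \ref{theorem 4} as black boxes. The paper has already done nearly all of the substitution work in the paragraphs preceding the proposition, so the job is essentially to (i) reduce the three-term bound on $\kappa(n)$ coming from \eqref{rate of kappa} to its leading piece $\sqrt{\log p_{\mathcal{V}^*}/n}$, (ii) do the same for $\kappa^c(n)$ from \eqref{rate of kappa^c}, and (iii) confront each with the actual minimum signal size implied by the uniform rate.

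For (i), I would treat the three terms $T_1,T_2,T_3$ in \eqref{rate of kappa} separately. $T_1=\sqrt{\log p_{\mathcal{V}^*}/n}$ is already in the claimed form. For $T_2$, since every coordinate of $\boldsymbol{\gamma}^*_{\mathcal{V}^*}$ has the same order $\sqrt{\log p/n^{1-\delta}}$, no entry dominates the others, so the non-dominance hypothesis of Proposition \ref{Proposition 2} is satisfied (this is exactly the illustration following that proposition with $\boldsymbol{\gamma}^*_{\mathcal{V}^*}\propto C\boldsymbol{1}_{p_{\mathcal{V}^*}}$), yielding $T_2=o(1)$ and in fact $T_2=o(T_1)$. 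For $T_3=|\operatorname{Bias}(\hat\beta^{\text{TSLS}}_{\text{or}})|\,\|\bar{\boldsymbol{\gamma}}^*_{\mathcal{V}^{c*}}\|_\infty$, Proposition \ref{Proposition 3} combined with Assumption 4 ($\mu_n\asymp n$) shows $|\operatorname{Bias}(\hat\beta^{\text{TSLS}}_{\text{or}})|=O(1)$, while $\|\bar{\boldsymbol{\gamma}}^*_{\mathcal{V}^{c*}}\|_\infty$ inherits the entrywise rate $O(\sqrt{\log p/n^{1-\delta}})$ (the orthogonal-projection term added to $\boldsymbol{\gamma}^*_{\mathcal{V}^{c*}}$ does not change the order under the standardization of Assumption 2). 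Concluding, $\kappa(n)\asymp\sqrt{\log p_{\mathcal{V}^*}/n}$ up to absorbed terms, which is strictly smaller than the uniform minimum signal size $|\boldsymbol{\alpha}^*_{\mathcal{V}^{c*}}|_{\min}=\Theta(\sqrt{\log p/n^{1-\delta}})$ since $\delta>0$, verifying the first half of Assumption 5.

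For (ii) and (iii), the same decomposition applies verbatim to $\kappa^c(n)$: each term on the right of \eqref{rate of kappa^c} is a $\mathcal{P}_c$-analogue of the corresponding term in \eqref{rate of kappa}, so Propositions \ref{Proposition 2} and \ref{Proposition 3} applied within the transformed DGP $\mathcal{P}_c$ (which still satisfies Assumptions 1--4 and 6 by construction of $\mathcal{Q}$) give $\kappa^c(n)\asymp(1+c)\sqrt{\log|\mathcal{I}_c|/n}$. For the second half of Assumption 5 I invoke the algebraic identity \eqref{GAP}: for $j\in\{j:\alpha_j^*/\gamma_j^*=\tilde c\ne c\}$,
\[
|\tilde\alpha_j^c|=|\tilde c-c|\cdot|\gamma_j^*|=\Theta\!\Bigl(\sqrt{\log p/n^{1-\delta}}\Bigr),
\]
where $|\tilde c-c|\asymp|\alpha_j^*/\gamma_j^*|\asymp 1$ under the hypothesis that numerator and denominator share the same rate. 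Comparing with $\kappa^c(n)\asymp\sqrt{\log|\mathcal{I}_c|/n}\le\sqrt{\log p/n}$ gives the required strict inequality. Assumption 5 is therefore in force, and Theorems \ref{Theorem 3} and \ref{theorem 4} deliver the stated selection and distributional conclusions with no further work.

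The main obstacle, and the only place where some care is needed, is controlling $T_3$ uniformly over the many-weak-IV regime. The factor $|\operatorname{Bias}(\hat\beta^{\text{TSLS}}_{\text{or}})|$ is only bounded (not vanishing) under Assumption 4, so what saves us is the fact that it multiplies $\|\bar{\boldsymbol{\gamma}}^*_{\mathcal{V}^{c*}}\|_\infty$, a quantity of order $\sqrt{\log p/n^{1-\delta}}$, which is still small compared to $|\boldsymbol{\alpha}^*_{\mathcal{V}^{c*}}|_{\min}$. Verifying that the $\|\cdot\|_\infty$ really does not explode beyond the entrywise rate after the correction term $(\boldsymbol{Z}_{\mathcal{V}^{c*}}^\top\boldsymbol{Z}_{\mathcal{V}^{c*}})^{-1}\boldsymbol{Z}_{\mathcal{V}^{c*}}^\top\boldsymbol{Z}_{\mathcal{V}^*}\boldsymbol{\gamma}^*_{\mathcal{V}^*}$ is added is the one bookkeeping step that is not entirely mechanical; it requires the bounded-eigenvalue consequence of Assumption 2 and the uniform rate of $\boldsymbol{\gamma}^*_{\mathcal{V}^*}$. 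Everything else is direct substitution into results already proved in the paper.
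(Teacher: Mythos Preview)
Your approach is essentially the paper's own: the proposition is proved in the running text of Section~3.3.2 by reducing $\kappa(n)$ and $\kappa^c(n)$ to their leading $\sqrt{\log(\cdot)/n}$ pieces via Proposition~\ref{Proposition 2}, confronting them with the signal sizes using the uniform rate and the identity~\eqref{GAP}, and then invoking Theorems~\ref{Theorem 3} and~\ref{theorem 4}. You have reproduced that outline faithfully and with more commentary than the paper itself offers.

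There is one internal inconsistency worth flagging. You bound $T_3=|\operatorname{Bias}(\hat\beta^{\text{TSLS}}_{\text{or}})|\cdot\|\bar{\boldsymbol{\gamma}}^*_{\mathcal{V}^{c*}}\|_\infty$ as $O(1)\cdot O(\sqrt{\log p/n^{1-\delta}})=O(\sqrt{\log p/n^{1-\delta}})$, but then conclude $\kappa(n)\asymp\sqrt{\log p_{\mathcal{V}^*}/n}$. The latter is strictly smaller than the former by a factor $n^{\delta/2}$, so $T_3$ is not in fact absorbed into $T_1$ on your own accounting; under your bound, $\kappa(n)$ sits at the \emph{same} order as $|\boldsymbol{\alpha}^*_{\mathcal{V}^{c*}}|_{\min}$, and the separation $|\boldsymbol{\alpha}^*_{\mathcal{V}^{c*}}|_{\min}>\kappa(n)$ becomes a constant-level statement rather than a rate comparison. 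The paper's inline argument simply cites Proposition~\ref{Proposition 2} and writes $\kappa(n)=O(\sqrt{\log p_{\mathcal{V}^*}/n})$ without isolating $T_3$, so it does not display this tension, but it does not resolve it either. If you want a clean rate-level argument you would need either a sharper bound on the bias factor (it is $O(1)$ but small, cf.\ the Remark after Proposition~\ref{Proposition 3}) or an explicit constant comparison; as written, both your version and the paper's leave this step at the heuristic level. A related bookkeeping point you correctly flag but do not settle is whether the projection correction in $\bar{\boldsymbol{\gamma}}^*_{\mathcal{V}^{c*}}$ preserves the entrywise rate when $p_{\mathcal{V}^*}$ grows with $n$.
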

}

\subsection{Computational Implementation of WIT Estimator}\label{sec:3.3}
Through Proposition \ref{Proposition 1}, we know that MCP belongs to the surrogate sparsest penalty under Assumptions 1-6, and it ensures the global solution in (\ref{Problem3}) matches the sparsest rule. However, each element in $\mathcal{Q}$ is the local solution of (\ref{Problem3}) and we can only obtain one local solution from one initial value practically. A multiple starting points strategy is needed to achieve the global solution. Enumerating the whole $\boldsymbol{\alpha}^*\in \mathbb{R}^p$ is impossible. Therefore, it is important to develop an efficient algorithm for the MCP penalty.

In light of practical use, we adopt the iterative local adaptive majorize-minimization
(I-LAMM) algorithm \citep{fan2018lamm}, which satisfies (\ref{B_0}) as shown in \cite{feng2019sorted} Section 2.1, with different initial values to achieve the local solution of $\widehat{\boldsymbol{\alpha}}$ in (\ref{Problem3}). See more technical details  and derivation in \ref{sec:A3}.

Motivated by the individual IV estimator \citep{windmeijer2019confidence} such that $\hat{\beta}_j = \hat{\Gamma}_j/\hat{\gamma}_j\overset{p}{\rightarrow}\beta^*+\alpha^*_j/\gamma^*_j$, where $\hat{\boldsymbol{\Gamma}} = (\boldsymbol{Z}^\top \boldsymbol{Z})^{-1}\boldsymbol{Z}^\top \boldsymbol{Y}$ and $\hat{\boldsymbol{\gamma}} = (\boldsymbol{Z}^\top \boldsymbol{Z})^{-1}\boldsymbol{Z}^\top \boldsymbol{D}$, we can construct
\begin{equation}
    \check{\boldsymbol{\alpha}}(\check{\beta}) = \hat{\boldsymbol{\Gamma}}-\check{\beta}\hat{\gamma} = \hat{\boldsymbol{\Gamma}}-\beta^*\hat{\boldsymbol{\gamma}}+(\beta^*-\check{\beta})\hat{\boldsymbol{\gamma}},
\end{equation}
with any initial value $\check{\beta}$. Thus $\| \check{\boldsymbol{\alpha}}(\check{\beta})-\boldsymbol{\alpha}^*\|_1 \overset{p}{\rightarrow}|\beta^*-\check{\beta}| \cdot\|\hat{\boldsymbol{\gamma}}\|_1$. It is valid to replace $(\beta^*,\boldsymbol{\alpha}^*)$ by $(\tilde{\beta}^c,\tilde{\boldsymbol{\alpha}}^c)$, we know that $\| \check{\boldsymbol{\alpha}}(\check{\beta})-\tilde{\boldsymbol{\alpha}}^c\|_1$ is asymptotically controlled by $|\tilde{\beta}^c-\check{\beta}| \cdot\|\hat{\boldsymbol{\gamma}}\|_1$. Thus, varying $\check{\beta}\in\mathbb{R}^1$ is equivalent to varying  $\check{\boldsymbol{\alpha}}(\check{\beta})\in {\mathbb{R}}^{p}$ of a close DGP $\mathcal{Q}$.

Here we provide a simple heuristic procedure to efficiently choose a proper $\check{\beta}$. Let $\hat{\beta}_{[j]}$ be the sorted $\hat{\beta}_j$. A fuzzy MCP regression is conducted to clustering $\hat{\beta}_j$:
\begin{equation}
    \bar{\boldsymbol{\beta}} = \underset{\grave{\boldsymbol{\beta}}}{\operatorname{arg min}} \sum_{j=1}^{p}\|\grave{\beta_j}-\hat{\beta}_{[j]}\|^2_2+\sum_{j=2}^{p}p_{\bar{\lambda}}^{\text{MCP}}(|\grave{\beta}_{j}-\grave{\beta}_{j-1}|),\label{fuzzy}
\end{equation}
where $\bar{\boldsymbol{\beta}}\in \mathbb{R}^p$, $\bar{\lambda}$ is a prespecified penalty level and $\grave{\boldsymbol{\beta}}$ is initialized by $\grave{{\beta}}_j = \hat{\beta}_{[j]}$. Therefore, $\bar{\boldsymbol{\beta}}$ consists of less than $p$ distinct values. Thus we can choose the initial $\widehat{\boldsymbol{\alpha}}$ in \eqref{B_0} such that
\begin{equation}
    \widehat{\boldsymbol{\alpha}}^{(0)}(\check{\beta}) = \check{\boldsymbol{\alpha}}(\check{\beta}),\label{alpha_0}
\end{equation}
where $\check{\beta}$'s are chosen in $\bar{\boldsymbol{\beta}}$ with the priority given to the largest (remaining) cluster and $\widehat{{\alpha}}_j^{(0)}(\check{\beta}) = 0$ for $j$ to be the unsorted index of $\hat{\beta}_j$ such that $\check{\beta} = \bar{{\beta}}_j$.

Following \eqref{B_0}, which provides a theoretical guideline for the tuning parameter, we look for a data-driven tuning procedure that has good performance in practice. From numerical studies, results are found to not be sensitive to the choice of $\rho$, which is fixed to $2$ for most applications. However, $\lambda$ is required to be tuned. Cross-validation is implemented in sisVIVE, but is known to be time-consuming and to select too many IVs as invalid. \cite{windmeijer2019use,windmeijer2019confidence} proposed to use the Sargan test under low dimensions to choose the tuning parameter consistently and obtain a good finite sample performance. However, the Sargan test is designed for fixed $p$ and cannot handle many IVs. Therefore, we propose the modified Cragg-Donald (MCD, \citealp{kolesar2018minimum}) test-based tuning procedure, which extends the Sargan test to allow high-dimensional covariates and IVs.

Specifically, consider a local solution $\widehat{\boldsymbol{\alpha}}$ in \eqref{B_0}, Denote ${{p_{\widehat{\mathcal{V}}}}} = |\{j:\widehat{{\alpha}}_j = 0\}|$ and ${{p_{\widehat{\mathcal{V}}^{c}}}} = |\{j:\widehat{{\alpha}}_j \neq 0\}|$.  Let $m_{\text{min}}$ be the minimum eigenvalue of $\boldsymbol{S}^{-1}\boldsymbol{T}$, where $\boldsymbol{S}$ and $\boldsymbol{T}$ are defined as a function of $\widehat{\boldsymbol{\alpha}}$ in \eqref{Two Stat}. Then the MCD test is given by $nm_{\text{min}}$.  According to \cite{kolesar2018minimum} Proposition 4, the MCD test with asymptotic size $\varrho_n$ would reject the null of ${\boldsymbol{\alpha}}_{\hat{\mathcal{V}}}= \boldsymbol{0}$, when
\begin{equation}
    nm_{\text{min}}>\chi^2_{{{p_{\widehat{\mathcal{V}}^{c}}}} -1}\left\{\Phi\left(\sqrt{\frac{n-{{p_{\widehat{\mathcal{V}}^{c}}}} }{n-{{p_{\widehat{\mathcal{V}}^{c}}}} -{{p_{\widehat{\mathcal{V}}}}} }}\Phi^{-1}(\varrho_n)\right)\right\},\label{MCD test}
\end{equation}
where $\Phi(\cdot)$ denotes the CDF of the standard normal distribution and $\chi^2_{{{p_{\widehat{\mathcal{V}}^{c}}}} -1}(\varrho_n)$ is  $1-\varrho_n$ quantile of $\chi^2_{{{p_{\widehat{\mathcal{V}}^{c}}}} -1}$ distribution. This property holds regardless of whether ${p_{\mathcal{V}^{c*}}}$ is fixed or grows with $n$. For the sake of the model selection consistency, the size of the MCD test needs to be $o(1)$. Following \cite{Belloni2012sparse,windmeijer2019use,windmeijer2019confidence}, we adopt a scaled rate $\varrho_n = 0.5/\log n$ that works well in simulation studies. Thus, with  $\widehat{\boldsymbol{\alpha}}^{(0)}$ in \eqref{alpha_0} and a sequence of $\boldsymbol{\lambda}^{\text{seq}} =\boldsymbol{C} \sqrt{\log p/n}$ where $\boldsymbol{C} = 0.1\times(1,\ldots,20)^\top$, we propose to use the MCD test to select the proper $\lambda\in \boldsymbol{\lambda}^{\text{seq}}$ that is not rejected by \eqref{MCD test} with size $\varrho_n$ and  largest ${{p_{\widehat{\mathcal{V}}}}} $.

To sum up, we provide Algorithm \ref{Algorithm 1} to demonstrate the detailed implementation.

\begin{algorithm}[t!]
\footnotesize
\caption{\label{Algorithm 1}WIT estimator with MCD test tuning strategy}
\hspace*{0.01in} {\bf Input:}
$\boldsymbol{Y},{\boldsymbol{Z}},\boldsymbol{D},\boldsymbol{\lambda}^{\text{seq}},\varrho_n, \text{ and } J$
\begin{algorithmic}[1]
\State  Calculate $\bar{\boldsymbol{\beta}}$ in \eqref{fuzzy}, initialize $\widehat{\boldsymbol{\alpha}}^{\text{MCP}} = \boldsymbol{1}$ and $\boldsymbol{I} = \boldsymbol{1}$
    \For { $\widehat{\boldsymbol{\alpha}}^{(0)} = \boldsymbol{0},\widehat{\boldsymbol{\alpha}}^{(0)}(\check{\beta}_j)$ \eqref{alpha_0}} \Comment{ $\check{\beta}_j \in \bar{\boldsymbol{\beta}}$ in priority of the largest cluster, for $j= 1,\ldots,J$}
    \For {$\lambda$ in $\boldsymbol{\lambda}^{\text{seq}}$}
    \State Calculate $\widehat{\boldsymbol{\alpha}}$ through I-LAMM in Algorithm \ref{algorithm 2}
    \If{$\widehat{\boldsymbol{\alpha}}$ is not rejected by MCD test \eqref{MCD test} with size $\varrho_n$ }
    \State $\boldsymbol{I}[l] = 0$ for $l\in \{l: \widehat{\alpha}_l = 0\}$
    \If{$|\{j:\widehat{{\alpha}}_j = 0\}|>|\{j:\widehat{{\alpha}}^{\text{MCP}}_j = 0\}|$}
    \State $\widehat{\boldsymbol{\alpha}}^{\text{MCP}} =\widehat{\boldsymbol{\alpha}}$
    \EndIf
    \EndIf
    \EndFor
    \If{$\|\boldsymbol{I}\|_1\leq |\{j:\{j:\widehat{{\alpha}}^{\text{MCP}}_j = 0\}|$} \Comment{Impossible for the existence of more sparse $\widehat{\boldsymbol{\alpha}}$}
    \State Break
    \EndIf
    \EndFor
\end{algorithmic}
\hspace*{0.02in} {\bf Output:}
$\widehat{\boldsymbol{\alpha}}^{MCP}$
\end{algorithm}

\section{Numerical Simulations}\label{sec:sim}
In this section, we conduct numerical studies to evaluate the finite sample performance of the proposed WIT estimator. In the design of the simulation experiments, we consider scenarios corresponding to different empirically relevant problems.

We consider the same model in Section \ref{sec:model},
$$
    \boldsymbol{Y} = \boldsymbol{D}\beta^*+\boldsymbol{Z}\boldsymbol{\alpha}^*+\boldsymbol{\epsilon}, \quad
    \boldsymbol{D} = \boldsymbol{Z}\boldsymbol{\gamma}^*+\boldsymbol{\eta}.
$$
Throughout all settings, we fix true treatment effect $\beta^* = 1$. $\boldsymbol{Z}$ is the $n\times p$ potential IV matrix and $\boldsymbol{Z_{i.}}\overset{i.i.d.}{\sim}\mathcal{N}(\boldsymbol{0},\boldsymbol{\Sigma}^{\boldsymbol{Z}})$, where $\boldsymbol{\Sigma}^{\boldsymbol{Z}}_{jj} = 0.3$ and $\boldsymbol{\Sigma}^{\boldsymbol{Z}}_{jk} = 0.3|j-k|^{0.8}$ for $i = 1,\ldots,n$ and $k,j = 1,\ldots,p$.  Denote $\boldsymbol{\epsilon}= (\epsilon_1,\epsilon_2,\ldots,\epsilon_n)^\top$ and $\boldsymbol{\eta} = (\eta_1,\eta_2,\ldots,\eta_n)^\top$ and generate $(\epsilon_i,\eta_i)^\top \overset{\operatorname{i.i.d.}}{\sim}\mathcal{N}\left(\boldsymbol{0},\left(\begin{smallmatrix}
\sigma_\epsilon^2 &\sigma_{\epsilon,\eta} \\
\sigma_{\epsilon,\eta}  & \sigma_\eta^2
\end{smallmatrix}\right)\right)$. We let $\sigma_\epsilon = 0.5$ and $\operatorname{\text{corr}}(\epsilon_i,\eta_i) = 0.6$ in all settings but vary $\sigma_\eta^2$ to get different concentration parameters concerning strong or weak IVs cases.

We compare the WIT estimator with other popular estimators in the literature. Specifically, sisVIVE is computed by R package \texttt{sisVIVE}; Post-ALasso \citep{windmeijer2019use}, TSHT and CIIV are implemented using codes on Github \citep{guo2018confidence,windmeijer2019confidence}. TSLS, LIML, oracle-TSLS and oracle-LIML (the truly valid set $\mathcal{V}^*$ is known a priori) are also included.  Regarding our proposed WIT estimator, the MCD tuning strategy is implemented to determine $\lambda$, and we fix $\rho=2$. In the I-LAMM algorithm, we take $\delta_c = 10^{-3}$ and $\delta_t = 10^{-5}$ as the tolerance levels. We report results based on 500 simulations.

We measure the performance of
all estimators in terms of median absolute deviation (MAD), standard deviation (STD), and coverage probability (CP) based on $95\%$ confidence intervals. Moreover, we provide measurements on the estimation of $\boldsymbol{\alpha}^*$ and IV model selection. Specifically, we measure the performance of invalid IVs selection by false positive rate (FPR) and false negative rate (FNR). To be concrete, denote the number of incorrect selections of valid and invalid IVs as FP and FN, respectively, and the number of correct selections of valid and invalid as TP and TN, respectively. Thus, FPR $=\text{FP} /(\text{FP}$ $+\text{TN})$ and $\text{FNR}=\text{FN} /(\text{FN}+\text{TP})$.

\subsection{Case 1: Low dimension}\label{sec:simu1}
We first consider the low dimension scenario:
\begin{enumerate}[itemsep=2pt,topsep=0pt,parsep=0pt]
    \item[] Case 1(I): $\boldsymbol{\gamma}^* = (\boldsymbol{0.5}_{4},\boldsymbol{0.6}_{6})^{\top}$ and $\boldsymbol{\alpha}^* = (\boldsymbol{0}_{5},\boldsymbol{0.4}_{3},\boldsymbol{0.8}_{2})^{\top}$.
    \item[] Case 1(II): $\boldsymbol{\gamma}^* = (\boldsymbol{0.04}_3,\boldsymbol{0.5}_2,0.2,\boldsymbol{0.1}_4)^\top$ and $\boldsymbol{\alpha}^* = (\boldsymbol{0}_5,1,\boldsymbol{0.7}_4)^\top$.
\end{enumerate}
In the above two cases, we maintain $\sigma_\eta = 1$ and vary the sample size $n = 200$ to $500$. Case 1(I) refers to all strong IVs case, but SAIS \eqref{SAIS} condition holds. Case 1(II) refers to Example \ref{Example 1} with mixed strong and weak IVs.

\begin{table}
\caption{\label{tab:simu1}Simulation results in low dimension}
\centering
\begin{tabular}{lccccccccccc}
\hline\hline
\multicolumn{1}{l}{\multirow{2}{*}{Case}} & \multirow{2}{*}{Approaches} & \multicolumn{4}{c}{$n = 200$}                         && \multicolumn{4}{c}{ $n=500$}                        \\
\multicolumn{1}{l}{}                    &                     & MAD                        &    CP    &FPR&FNR       && MAD             &CP           & FPR  &FNR              \\\cline{3-6} \cline{8-11}
\multirow{9}{*}{{1(I)}}& TSLS& 0.532& 0  &   -  &   - &&0.530 & 0&     -  &   -\\
& LIML &0.952& 0.004  &-&- && 0.978& 0 &-&- \\
& oracle-TSLS& 0.038& 0.938&-&-&&0.023& 0.958&     -  &   -\\
& oracle-LIML&0.038& 0.938  &   -  &   -&&0.023& 0.958&     -  &   -\\
&TSHT&0.060& 0.828& 0.158& 0.018&&0.023& 0.950& 0& 0.002\\
&CIIV&0.045& 0.810& 0.072& 0.009&&0.023 & 0.946& 0.004& 0.003\\
&sisVIVE&0.539&    -& 0.428& 0.957&&0.589&    -& 0.479& 1\\
&Post-Alasso&0.532 &0& 1& 0&&0.530& 0& 0.996& 0\\
&WIT&0.046& 0.818 &0.068& 0.065&&0.024& 0.948 & 0.004& 0.020\\\cline{3-6} \cline{8-11}

\multirow{9}{*}{{1(II)}}& TSLS& 1.098& 0&     -&  -&&1.111&  0&     - &    -\\
&LIML&7.437& 0.292&-&- && 7.798& 0.030&-&-\\
& oracle-TSLS &0.072& 0.938 &   -  &   -&&0.046& 0.948 &    -  &   -\\
& oracle-LIML&0.072& 0.950&     -  &   -&&0.046 & 0.958 &    -  &   -\\
&TSHT& 0.110& 0.914& 0.122& 0.585 &&0.742& 0.598& 0.423& 0.712\\
&CIIV&0.099& 0.724& 0.088 & 0.642 &&4.321& 0.334 & 0.360& 0.824\\
&sisVIVE& 0.259&   -& 0.018& 0.234&&0.154&    - &0& 0.168\\
&Post-Alasso& 1.831& 0.016& 0.429& 0.258&&3.533& 0& 0.560& 0.387\\
&WIT&0.079& 0.914& 0.016& 0.034&& 0.049& 0.920 & 0.016& 0.016\\\hline\hline
\end{tabular}
\end{table}

Table \ref{tab:simu1} presents the detailed results. In Case 1(I), high FPR or FNP indicates that sisVIVE and Post-Alasso mistarget $\boldsymbol{\alpha}^*$ because of lack of majority rule and SAIS holds. Their performances do not improve much with sample size $n$. Due to finite strong IVs, WIT performs similarly to TSHT, CIIV, and oracle-LIML. In low dimension settings, the oracle-TSLS is very close to oracle-LIML. Case 1(II) shows how weak IVs break the strong IVs-based plurality rule. As shown in Example \ref{Example 1}, TSHT and CIIV worsen in terms of all measures though the sample size increase from $n=200$ to $500$. Post-Alasso also fails due to the failure of the majority rule. As the analysis in Example \ref{Example 2}, such a mixture of weak IVs does not impede penalized methods. The WIT estimator outperforms the other estimators when $n=200$ and approaches oracle-LIML when $n=500$. Interestingly, the comparably low FPR and MAD imply sisVIVE correctly target true $\boldsymbol{\alpha}^*$ since the additional requirement of matching objective function \eqref{additional con} happens to hold in this example. However, its FNR and MAD are worse than WIT due to the conservative cross-validation tuning strategy and the non-ignorable bias of Lasso, respectively.

Further, for closer comparison, we present a replication of the simulation design considered in \cite{windmeijer2019confidence} and its weak IVs variant:
\begin{enumerate}[itemsep=2pt,topsep=0pt,parsep=0pt]
    \item[] Case 1(III) : $\boldsymbol{\gamma}^* = (\boldsymbol{0.4}_{21})^{\top}$ and $\boldsymbol{\alpha}^* = (\boldsymbol{0}_{9},\boldsymbol{0.4}_{6},\boldsymbol{0.2}_{6})^{\top}$.
    \item[] Case 1(IV) : $\boldsymbol{\gamma}^* = (\boldsymbol{0.15}_{21})^{\top}$ and $\boldsymbol{\alpha}^* = (\boldsymbol{0}_{9},\boldsymbol{0.4}_{6},\boldsymbol{0.2}_{6})^{\top}$.
\end{enumerate}
We now vary sample size $n = 500$ to $1000$ and fix $\sigma_\eta = 1$ to strictly follow their design. Between them, Case 1(III) corresponds to the exact setting, while Case 1(IV) scales down the magnitude of $\boldsymbol{\gamma}^*$ to introduce small coefficients in the first-stage.

\begin{table}
\caption{\label{tab:simu1(2)}Simulation results in low dimension: A replication of experiment \citep{windmeijer2019confidence}}
\centering
\begin{tabular}{lccccccccccc}
\hline\hline
\multicolumn{1}{l}{\multirow{2}{*}{Case}} & \multirow{2}{*}{Approaches} & \multicolumn{4}{c}{$n = 500$}                         && \multicolumn{4}{c}{ $n=1000$}                        \\
\multicolumn{1}{l}{}                    &                     & MAD                        &    CP    &FPR&FNR       && MAD             &CP           & FPR  &FNR              \\\cline{3-6} \cline{8-11}

\multirow{9}{*}{{1(III)}}& TSLS& 0.436& 0&        -&         - &&0.435& 0&        -   &       -\\
& LIML & 0.729&   0& -&-&&0.739&0&-&-\\
& oracle-TSLS& 0.021& 0.936& -&-&&0.014&  0.942&-&-\\
& oracle-LIML&0.021& 0.932  &      -   &      -&&0.014& 0.944 &        -        &  -\\
&TSHT&0.142& 0.404& 0.398& 0.150&&0.016& 0.924& 0.023& 0.004\\
&CIIV&0.037& 0.710& 0.125& 0.032&&0.017& 0.894& 0.031& 0.002\\
&sisVIVE&0.445&   -& 0.463& 0.972&&0.465&    -& 0.482& 0.999\\
&Post-Alasso&0.436& 0& 1& 0&&0.435& 0& 0.999& 0\\
&WIT&0.036& 0.708& 0.121& 0.099&&0.016& 0.910& 0.020& 0.027\\\cline{3-6} \cline{8-11}

\multirow{9}{*}{{1(IV)}}& TSLS& 1.124& 0&        -   &      - &&1.144& 0&        -  &       -\\
& LIML & 1.952&0&-&-&&1.976&0&-&-\\
& oracle-TSLS& 0.060& 0.936&-&-&&0.044&0.942&-&-\\
& oracle-LIML&0.056& 0.948  &      -  &       -&&0.042& 0.962 &      -     &    -\\
&TSHT&0.532& 0.058 &0.342& 0.457&&0.155& 0.660& 0.310& 0.208\\
&CIIV&1.213& 0.224 & 0.337& 0.670&&0.100& 0.574 &0.300& 0.426\\
&sisVIVE&1.101&    -& 0.392& 0.936&&1.175&    - & 0.428& 0.996\\
&Post-Alasso&1.112& 0& 0.945& 0.010&&1.029& 0& 0.652& 0.205\\
&WIT&0.102& 0.634& 0.198& 0.220&&0.047& 0.898 & 0.051& 0.064\\\hline\hline
\end{tabular}
\end{table}

Table \ref{tab:simu1(2)} shows the results. In Case 1(III), CIIV outperforms TSHT because CIIV can utilize available information better \citep[Section 7]{windmeijer2019confidence}. The WIT estimator performs similar to CIIV and approaches oracle-LIML. sisVIVE and Post-Alasso fail again due to a lack of majority rule. In Case 1(IV), scaling down the first-stage coefficients causes some problems for CIIV and TSHT, since the first-stage selection thresholding $\sigma_\eta\sqrt{2.01\log p/n} = 0.111<0.15$, which might break the plurality rule numerically. TSHT and CIIV perform poorly when $n = 500$ and improve when $n =1000$ when the issue of violating the plurality rule is mitigated. Among penalized methods, sisVIVE and Post-Alasso mistarget and perform like TSLS because an additional requirement for sisVIVE \eqref{additional con} and majority rule fail simultaneously. Distinguished from them, the WIT estimator outperforms with acceptable MAD when $n=500$. The FPR and FNR improve when the sample size increases. Fig. \ref{fig:simulation} presents all replications in Case 1(IV) when $n = 1000$. It shows that WIT is nearly oracle besides the mild number of incorrect selections. By contrast, CIIV and TSHT fail in selections more frequently.

\begin{figure}
    \centering
    \includegraphics[width = 14.5cm]{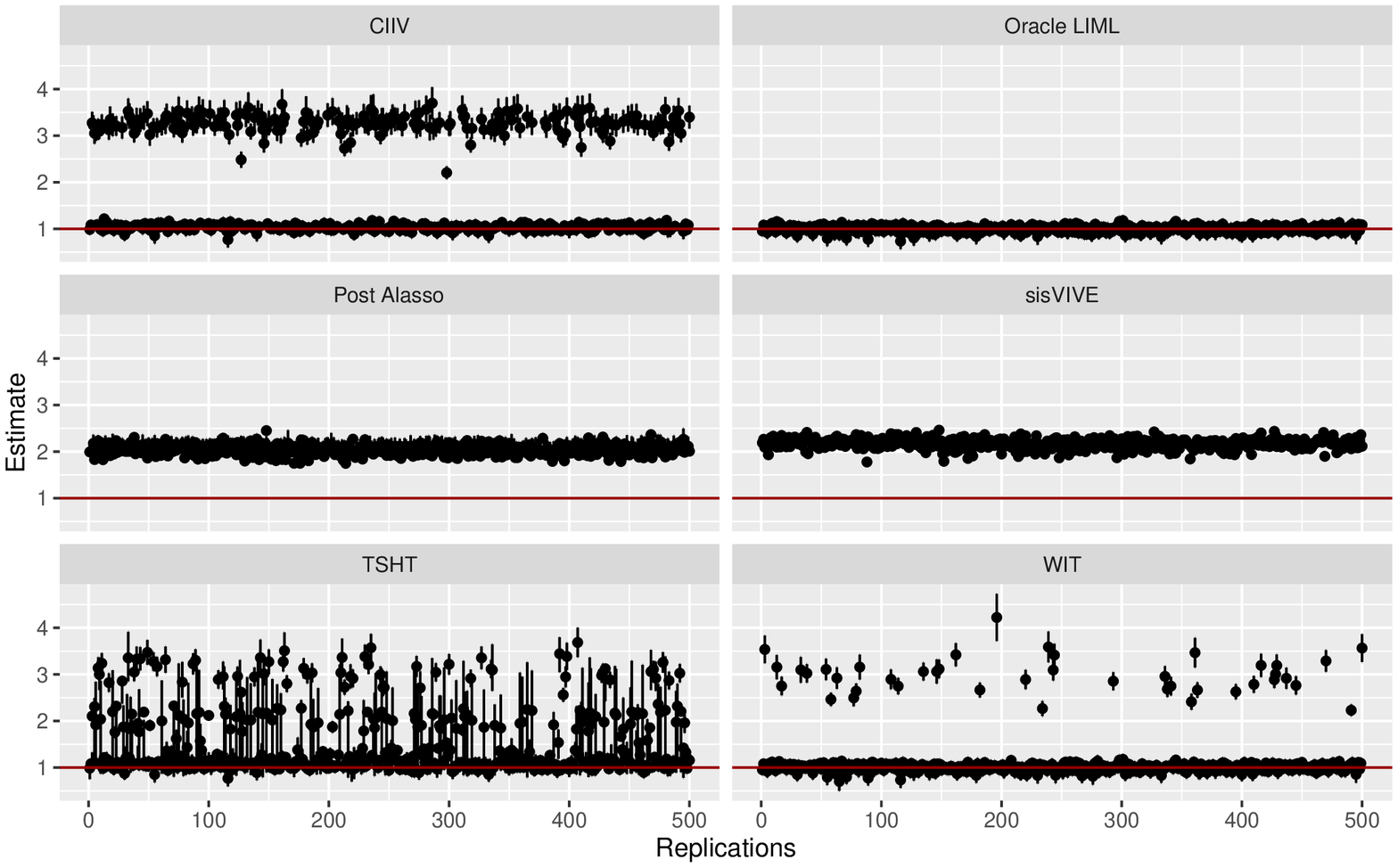}
    \caption{Scatter plot of estimations of $\beta^*$ with confidence intervals of Case 1(IV) for $n = 1000$. The red solid lines show the true value of $\beta^* =1$.}
    \label{fig:simulation}
\end{figure}

\subsection{Case 2: High dimension (many IVs)}
To assess performance in many IVs, we consider the following examples:\\
{  Case 2(I) ($p = 0.5 n$): \\$\boldsymbol{\gamma}^* = (\boldsymbol{2\sqrt{\log p/n^{1-\delta}}})_{p}^{\top}$;
$\boldsymbol{\alpha}^* = (\boldsymbol{0}_{0.6p},\boldsymbol{2\sqrt{\log {p}/n^{1-\delta}}}_{0.4p})^{\top}$.\\
Case 2(II) ($p = 0.6n$):\\ $\boldsymbol{\gamma}^* = (\boldsymbol{2\sqrt{\log{p}/n^{1-\delta}}})_{p}^{\top}$;  $\boldsymbol{\alpha}^* = (\boldsymbol{0}_{0.5p},\boldsymbol{-2\sqrt{\log{p}/n^{1-\delta}}}_{0.2p},\boldsymbol{4\sqrt{\log{p}/n^{1-\delta}}}_{0.3p}$).\\
Among Case 2(I) and (II), $\delta = 0.01$ is taken.} The numbers of valid and invalid IVs are growing with the sample size. To verify the theoretical result, we maintain the ratio of concentration parameter to sample size $n$ at a low constant level, i.e.\ $\mu_n/n = 0.5$, by adjusting $\sigma^2_\eta$. {  We vary the sample size $n$ from $500$ to $1000$, and let the first-stage coefficients in $\boldsymbol{\gamma}$ and elements in $\boldsymbol{\alpha}_{\mathcal{V}^c}$ go to $0$ simultaneously, which aligns with the case in Proposition \ref{high_dim theorem}.} Due to the computational burden in CIIV for many IVs case, we omit its results.

\begin{table}
\caption{\label{tab:simu2}Simulation results in high dimension (many IVs)}
\centering
\begin{tabular}{lccccccccccc}
\hline\hline
\multicolumn{1}{l}{\multirow{2}{*}{Case}} & \multirow{2}{*}{Approaches} & \multicolumn{4}{c}{$n = 500$}                         && \multicolumn{4}{c}{ $n=1000$}                        \\
\multicolumn{1}{l}{}                    &                     & MAD                        &    CP    &FPR&FNR       && MAD             &CP           & FPR  &FNR              \\\cline{3-6} \cline{8-11}

\multirow{8}{*}{{2(I)}}& TSLS&  0.394& 0&     -&           - &&0.385& 0&          -&          -\\
& LIML&1.355& 0&-&-&& 1.299& 0&-&- \\
& oracle-TSLS&0.040& 0.625&-&-&& 0.032& 0.422&-&-\\
& oracle-LIML&0.006& 0.937 &     -&          -&&0.003& 0.953&           -&          -\\
&TSHT&0.262& 0 &0.141& 0.743  &&0.273& 0 &0.162& 0.752\\
&sisVIVE&0.133&    -& 0& 0.143 &&0.125&    -& 0& 0.040\\
&Post-Alasso&0.394& 0& 1& 0&&0.403& 0& 0.083& 0\\
&WIT&0.004& 0.936& 0& 0.007 &&0.003& 0.945& 0& 0.002 \\\cline{3-6} \cline{8-11}

\multirow{8}{*}{{2(II)}}& TSLS&  0.421& 0 &         -&        -&&0.389 & 0&           -&       -\\
& LIML &10.538& 0.307 &-&-&&10.425& 0.269&-&-\\
&oracle-TSLS& 0.054& 0.307&-&-&&0.012& 0.192&-&-\\
& oracle-LIML&0.004& 0.923    &     -&      -&&0.003& 0.934 &          -&        -\\
&TSHT&0.104& 0.384& 0.027& 0.828 &&0.048 &0.461& 0.030& 0.841\\
&sisVIVE&0.051&    -& 0& 0.054&& 0.049 &    -& 0& 0.031\\
&Post-Alasso&0.420& 0& 1& 0&&0.389& 0& 1& 0\\
&WIT&0.006& 0.769& 0.003& 0.017 &&0.005& 0.882 &0.002& 0.018\\\hline\hline
\end{tabular}
\end{table}

{  Table \ref{tab:simu2} provides the detailed estimation results. Case 2(I) satisfies the majority rule, and only two groups are present. All weak IVs narrow available choices in TSHT. When $n =1000$, a low FPR but high FNR indicates that TSHT only selected a limited number of valid IVs. Both low FPR and FNR imply sisVIVE targets the correct solutions. Meanwhile, a high MAD shows that the TSLS-based method is biased in many IVs. The majority rule holds to ensure Post-Alasso is consistent. However, when $n=500$, Post-Alasso is severely biased with relatively high FPR. It may result from the sensitivity problem of the initial estimator of Post-Alasso with weak IVs. The WIT estimator behaves similarly to oracle-LIML and achieves the best performance in every measure, even when $n=500$. On the contrary, oracle-TSLS has a larger bias than LIML in many IVs cases, and the coverage rate is poor.

Case 2(II) has more invalid IVs with the sparsest rule. In terms of valid IV selection, TSHT has subpar performance in this case since the strong IVs-based plurality rule is unlikely to hold. sisVIVE correctly identifies solutions by chance in this example. However, its estimate suffers from the bias of Lasso and TSLS-based bias in many IVs. Without the majority rule, Post-Alasso targets the wrong solution as the initial median estimator is inconsistent. On the other hand, WIT works when the majority rule does not hold and circumvents the bias from TSLS. Thus, WIT outperforms the other methods in many IVs cases and approaches oracle-LIML when the sample size increases. The numerical results demonstrate the capacity that WIT can work in many IV cases under the conditions that $\alpha_j^*$ and $\gamma_j^*$ share the same rate going to zero. These simulation studies strongly support the conclusion in Proposition \ref{high_dim theorem}}.

\section{The Effect of BMI on Diastolic Blood Pressure}\label{sec:real}
This section illustrates the usefulness of the proposed WIT estimator for the method of Mendelian Randomization. We implement the WIT estimator to obtain an estimate of the causal effect of BMI on diastolic blood pressure and compare it to the other estimators, OLS, sisVIVE, Post-Alasso, TSHT and CIIV. This comparative analysis is designed to exemplify the efficiency and robustness of the WIT estimator in various research scenarios.

{
\subsection{Data Description}

In Mendelian Randomization genetic markers, called single nucleotide polymorphisms (SNPs), function as IVs for the identification and estimation of causal effects of modifiable phenotypes on outcomes \citep{von2016genetic}. This research design utilizes the random distribution of alleles at conception \citep{locke2015genetic}.

Nonetheless, SNPs can be invalid IVs. This is primarily due to pleiotropy, which refers to the potential of genetic variants to be associated with multiple phenotypes, but can also be due to influences such as linkage disequilibrium and population stratification. Unfortunately, these violations are typically unidentified prior to their selection. Furthermore, the correlation between SNPs and treatments is often weak, and so we are dealing with the problem of potentially many weak instruments.

In our study, we analyzed data from 105,276 individuals from the UK Biobank to investigate the effect of body mass index (BMI) on diastolic blood pressure (DBP). As suggested by \cite{windmeijer2019use,windmeijer2019confidence}, we used 96 SNPs as potential IVs for BMI. To account for skewness, we applied log-transformations to both BMI and DBP. We consider the same model specification detailed in Section 8 of \cite{windmeijer2019confidence}.  The model also included age, age squared, and sex as explanatory variables, along with 15 principal components of the genetic relatedness matrix. 

\subsection{Result and Analysis}
Table \ref{Table real} provides the estimation results of the effect of $\log$(BMI) on $\log$(DBP). 

\begin{table}\caption{\label{Table real}Empirical results, the effect  of $\log$(BMI) on $\log$(DBP)}
\centering
\begin{adjustbox}{max width = 15cm}
    \begin{threeparttable}
    \begin{tabular}{ccccccc}
    \hline\hline
         & $\hat{\beta}$ &  SE$(\hat{\beta})$& 95\% CI &  \# Valid IVs $\widehat{\mathcal{V}}$& \# Relevant IVs selected & p-value Sargan Test\\\cline{2-7}
        OLS & 0.206 & 0.002 & (0.202, 0.210)&-&-&-\\
        
        TSLS&0.087 &  0.016 &(0.055, 0.119)&96&-&2.05e-19\\
        TSHT$^*$&0.087 &  0.016 &(0.055, 0.119)&96&-&2.05e-19\\
        TSHT&0.098 &  0.016 &(0.066, 0.130)& 61&62&5.29e-14\\
        CIIV$^*$&0.140 &  0.019 &(0.103, 0.177)&83&-&0.011\\
        CIIV&0.174 &  0.020 &(0.135, 0.213) &49&62&0.014\\
        sisVIVE& 0.111 & n.a. &n.a.&76&-&0.064\\
        Post-ALasso&0.163 & 0.018 & (0.128, 0.198)&85&-&0.013\\
        WIT&0.123 &  0.020 &(0.083,0.163)& 81&-&0.140\\
        \hline
    \end{tabular}
    	    	\begin{tablenotes}
\item[]\scriptsize Note: Sample size $n = 105,276$; $ p = 96$ potential SNPs (IVs). Two-stage methods were used for both TSHT and CIIV: these methods first select strong IVs and then pick valid IVs from this subset. TSHT$^*$ and CIIV$^*$ represent the methods without employing a first-stage thresholding process. The sisVIVE does not report standard error or confidence interval. The symbol ``-'' denotes that no first-stage selection was performed, and all the original $96$ potential IVs were directly used.
\end{tablenotes}
\end{threeparttable}
\end{adjustbox}
\end{table}

The OLS estimate of 0.206 is potentially severely biased due to endogeneity issues, such as reverse causality or latent confounders.

Regarding the IV methods, we first denote TSHT$^*$ and CIIV$^*$ to indicate the estimation results without first-stage thresholding. For TSLS and TSHT$^*$, both without first-stage thresholding, they yield identical estimates of 0.087. However, the near-zero Sargan test p-value strongly rejects the model, implying TSHT's inability to detect potentially invalid IVs among weak ones. Despite the slight improvement of TSHT in discerning invalid IVs through first-stage thresholding, it still leaves some possibly invalid instruments in the model, as indicated by the slightly increased but still very small Sargan test p-values.

Compared to TSHT, CIIV identified thirteen strong and invalid IVs, leading to an estimate 0f 0.174 and a Sargan p-value of 0.011. CIIV$^*$ detected the same (as CIIV) strong invalid IVs with a similar Sargan p-value. CIIV$^*$ adjusted the estimate to 0.140.

The sisVIVE approach yielded an estimate of 0.111 which is lower than that obtained with the CIIV method. First, we observe that sisVIVE picks 20 invalid IVs (the highest count among comparison methods). Second, a Sargan p-value of 0.064 indicates that it overly penalizes invalid IVs while missing some true targets. In contrast, the Post-Alasso estimate is equal to 0.163, with the minimum number of 11 identified invalid IVs, and a Sargan p-value of 0.013.

The WIT estimator produced an estimate of 0.123, markedly lower than that of CIIV (and a little lower than CIIV$^*$), accompanied by the highest Sargan test p-value of 0.140. The invalid IVs detected by WIT encompassed all the relevant and invalid IVs identified by CIIV without first-stage thresholding. Moreover, compared to CIIV$^*$, WIT effectively identified two additional individually weak and invalid IVs. It thus significantly improved the Sargan p-value and estimation while capturing all valid yet weak IVs' information. In hindsight, sisVIVE penalized too many weak and valid IVs, leading to a loss in efficiency. Notably, the minimum validity violation $|\hat{\boldsymbol{\alpha}}^{MCP}|$ is 0.0014, aligned with the average magnitude of first-stage coefficients. In summary, the WIT method minimizes the risk of including invalid IVs while fully utilizing all valid IVs.

To summarize, the WIT estimator serves as a powerful tool for estimating treatment effects in biomedical research using SNPs as potential IVs. Its robustness to invalid and weak IVs makes it highly suitable for Mendelian Randomization applications, where there are potentially many weak IVs with uncertain validity.  

}

\section{Conclusion}\label{sec:con}
We extended the study of IV models with unknown invalid IVs to allow for many weak IVs. We provided a complete framework to investigate the identification issue of such models. Sticking to the sparsest rule, we proposed the surrogate sparsest penalty that fits the identification condition. We proposed a novel WIT estimator that addresses the issues that can lead to poor performance of sisVIVE and Post-Alasso, and can outperform the plurality rule-based TSHT and CIIV. Simulations and real data analysis support the theoretical findings and the advantages of the proposed method over existing approaches.

\section*{Acknowledgements}

\bibliographystyle{rss}

\bibliography{manuscript}

\newpage
\appendix
\renewcommand{\theequation}{\Alph{section}\arabic{equation}} 
\renewcommand{\thesection}{Appendix \Alph{section}}
\renewcommand{\thesubsection}{Appendix \Alph{section}\arabic{subsection}}
\renewcommand{\thetable}{A\arabic{table}}
\renewcommand{\thelemma}{A\arabic{lemma}}
\setcounter{page}{1}
\setcounter{footnote}{0}
\setcounter{section}{0}%
\setcounter{subsection}{0}%
\setcounter{equation}{0} %
\setcounter{table}{0}%
\setcounter{lemma}{0}

\begin{center}
	
	{\bf \large Web-based supporting materials for ``On the instrumental variable estimation with many weak and  invalid instruments''}\\
	\hspace{.2cm}\\
	Yiqi Lin\textsuperscript{a}, Frank Windmeijer\textsuperscript{b}, Xinyuan Song\textsuperscript{a}, Qingliang Fan\textsuperscript{c}\\
	\hspace{.2cm}\\
	
	\textsuperscript{a}Department of Statistics, The Chinese University of Hong Kong, Hong Kong.\\
	\textsuperscript{b}Department of Statistics, University of Oxford, Oxford, U.K.\\
	\textsuperscript{c}Department of Economics, The Chinese University of Hong Kong, Hong Kong.\\
	
\end{center}

This supplementary material mainly includes the following parts: \ref{A} provides additional details of the main paper. 	\ref{B} contains all technical proofs. Throughout the online supplementary material, we allow constant $C$ to be a generic positive constant that may differ in different cases.	
\section{Additional demonstrations}\label{A}

\subsection{Discussion of $\boldsymbol{Z}$ and its non-linear transformation}\label{append:Z}
In Section \ref{sec:model_1}, it is mentioned that $\boldsymbol{Z}_{i.}$ can be non-linear transformations of original variables such as polynomials and B-splines to form a high-dimensional model and provide flexible model fitting. In general, Eq. \eqref{Structure} can be reformulated as
\begin{eqnarray}
Y_{i} &=&D_{i} \beta+w_1(\boldsymbol{Z}_{i.})+\epsilon_{i} \approx  D_{i} \beta+\boldsymbol{\phi}(\boldsymbol{Z}_{i .})^{\top} \boldsymbol{\alpha}+\epsilon_{i} ,\label{basis expansion} \\
D_i &=& w_2(\boldsymbol{Z}_{i.})+ \eta_i \approx\boldsymbol{\phi}(\boldsymbol{Z}_{i.})^{\top}\boldsymbol{\gamma}+\eta_i, \label{first-stage transform}
\end{eqnarray}
where $w_1(\boldsymbol{Z}_{i.})\neq 0$ and $w_2(\boldsymbol{Z}_{i.})\neq0$ are different unknown functions that can be approximated by the same basis expansion through $\boldsymbol{\phi}(\cdot)$. With the same argument in \eqref{Structure}, the $\boldsymbol{\alpha}$ in \eqref{basis expansion} should consist of zero and non-zero terms to ensure the existence of an identifiable sub-model.

A subtle yet important point here is that we should use the same basis $\boldsymbol{\phi}(\cdot)$ to approximate different $w_1$ and $w_2$. Otherwise, using different $\boldsymbol{\phi}_1$ and $\boldsymbol{\phi}_2$ in \eqref{basis expansion} and \eqref{first-stage transform}, respectively, requires  strong prior information to justify why the components in $\{\boldsymbol{\phi}_2(\boldsymbol{Z}_{i.})\} \backslash \{\boldsymbol{\phi}_1(\boldsymbol{Z}_{i.})\}$ can be treated as valid IVs. In the following we show that this could be a stringent assumption for model identification. For simplicity of illustrating this point, we consider $\{\boldsymbol{\phi}_1(\boldsymbol{Z}_{i.})\}\subseteq \{\boldsymbol{\phi}_2(\boldsymbol{Z}_{i.})\}$ firstly. It can be extended to general $\boldsymbol{\phi}_1 \neq \boldsymbol{\phi}_2$ since coefficients of $\{\boldsymbol{\phi}_1(\boldsymbol{Z}_{i.})\} \backslash \{\boldsymbol{\phi}_2(\boldsymbol{Z}_{i.})\}$ can be estimated directly by their moment conditions. Nonetheless, in the simplified case, it is easy to see that
\begin{eqnarray*}
Y_{i} &=&  D_{i} \beta^*+\boldsymbol{\phi}_1(\boldsymbol{Z}_{i .})^{\top} \boldsymbol{\alpha}^*+\epsilon_{i}\notag\\
&=& D_{i} \beta^*+\boldsymbol{\phi}_1(\boldsymbol{Z}_{i .})^{\top} \boldsymbol{\alpha}^*+(\{\boldsymbol{\phi}_2(\boldsymbol{Z}_{i.})\} \backslash \{\boldsymbol{\phi}_1(\boldsymbol{Z}_{i.})\})^{\top} \boldsymbol{0}+\epsilon_{i} \notag \\
&=&D_{i} \beta^* + \boldsymbol{\phi}_2(\boldsymbol{Z}_{i.})^{\top} (\boldsymbol{\alpha}^{*\top},\boldsymbol{0}^\top)^\top +\epsilon_{i},
\\
D_i &=& \boldsymbol{\phi}_2(\boldsymbol{Z}_{i.})^{\top}\boldsymbol{\gamma}^*+\eta_i,
\end{eqnarray*}
where we denote the $\boldsymbol{\alpha}^{\text{new}*} = (\boldsymbol{\alpha}^{*\top},\boldsymbol{0}^\top)^\top$.
Therefore, by Theorem \ref{Theorem 1}, we immediately know that all other possible $\tilde{\boldsymbol{\alpha}}^{\text{new}}$ in the remaining valid DGPs must have non-zero coefficients in $\{\boldsymbol{\phi}_2(\boldsymbol{Z}_{i.})\} \backslash \{\boldsymbol{\phi}_1(\boldsymbol{Z}_{i.})\}$ but are impossible to be selected because zero coefficients for $\{(\boldsymbol{\phi}_2(\boldsymbol{Z}_{i.})\} \backslash \{\boldsymbol{\phi}_1(\boldsymbol{Z}_{i.})\}$ are set by default  in \eqref{basis expansion}. The identification condition essentially assumes the fixed functional form of $\boldsymbol{\phi}_1$ in \eqref{basis expansion}.

\subsection{SAIS condition and intuition of why MCP can circumvent it}\label{sec:A2}

Correct selection of valid IVs is a more subtle and important issue in IV content. Recall \cite{windmeijer2019use} indicated that failure of consistent variable selection of sisVIVE is assured if the SAIS condition holds. The SAIS condition was first proposed in \cite{windmeijer2019use} derived from Irrepresentable Condition (IRC) directly. 
IRC is known as (almost) necessary and sufficient condition for variable selection consistency of Lasso \citep{zhao2006model} for $n^{-1}\tilde{\boldsymbol{Z}}^\prime \tilde{\boldsymbol{Z}}$, i.e.,
 \begin{equation}
    \max _{j \in \mathcal{V}^{*}}\left\|\left(\tilde{\boldsymbol{Z}}_{\mathcal{V}^{c*}}^{\top} \tilde{\boldsymbol{Z}}_{\mathcal{V}^{c*}}\right)^{-1} \tilde{\boldsymbol{Z}}_{\mathcal{V}^{c*}}^\top \tilde{\boldsymbol{Z}}_{j}\right\|_{1} \leq \xi <1,~\text{for some}\,\, \xi \in [0,1)\label{irrepre}.
 \end{equation}

 In the standard Lasso regression problem, the IRC only relates to the design matrix and holds for many standard designs (see corollaries in \citealp{zhao2006model}). However, in the context of IV model, the IRC on $ \tilde{\boldsymbol{Z}}$ (instead of $\boldsymbol{Z}$) involves the first-stage signal estimate $\widehat{\boldsymbol{\gamma}}$, which further complicates the verifiability of the SAIS condition.
Typically, two-stage IV regression modeling exacerbates the difficulty of detecting valid IVs through penalized methods than support recovery problem in a simple linear model. Moreover, among the penalty functions, Lasso penalty aggravates the problem if the first-stage coefficients related SAIS condition hold.

The MCP penalty inherits a much weaker condition for oracle property than IRC that Lasso required \citep{zhao2006model}. Theorem 6 proposed in \citep{zhang2012general} has generalized the IRC in Lasso to a concave penalty in a linear regression problem. We briefly state the key result by defining two quantities:
 \begin{eqnarray*}
     \theta_{\text{select}} = \operatorname{inf}\Big\{\theta:\Big\|\Big(\frac{\boldsymbol{\tilde{Z}}_{\mathcal{V}^{c*}}^\top\boldsymbol{\tilde{Z}}_{\mathcal{V}^{c*}}}{n}\Big)^{-1}p_\lambda^{\prime}(\boldsymbol{\varphi}_{\mathcal{V}^{c*}}+\widehat{\boldsymbol{\alpha}}_{\mathcal{V}^{c*}}^{\text{or}})\Big\|_\infty \leq \theta \lambda,\forall\left\|\boldsymbol{\varphi}_{\mathcal{V}^{c*}}\right\|_{\infty} \leq \theta \lambda\Big\},\\
 \kappa_{\text{select}} = \operatorname{sup}  \Big\{ \|\boldsymbol{\tilde{Z}}_{\mathcal{V}^*}^\top\boldsymbol{\tilde{Z}}_{\mathcal{V}^{c*}}(\boldsymbol{\tilde{Z}}_{\mathcal{V}^{c*}}^\top\boldsymbol{\tilde{Z}}_{\mathcal{V}^{c*}})^{-1}p_\lambda^{\prime}(\boldsymbol{\varphi}_{\mathcal{V}^{c*}}+\widehat{\boldsymbol{\alpha}}_{\mathcal{V}^{c*}}^\text{or})\|_\infty/\lambda: \left\|\boldsymbol{\varphi}_{\mathcal{V}^{c*}}\right\|_{\infty} \leq  \theta_{\text{select}}\lambda\Big\},
 \end{eqnarray*}
where $\boldsymbol{\varphi}_{\mathcal{V}^{c*}}$ is a $|\mathcal{V}^{c*}|$-vector, and let the $\widehat{\boldsymbol{\alpha}}^{\text{or}}$ to be the oracle estimate, i.e., $\widehat{\boldsymbol{\alpha}}_{\mathcal{V}^{c*}}^{\text{or}} = (\boldsymbol{\tilde{Z}}_{\mathcal{V}^{c*}}^\top\boldsymbol{\tilde{Z}}_{\mathcal{V}^{c*}})^{-1}\boldsymbol{\tilde{Z}}_{\mathcal{V}^{c*}}\boldsymbol{Y}$ and $\widehat{\boldsymbol{\alpha}}_{\mathcal{V}^*}^{\text{or}} = \boldsymbol{0}$. The extended IRC for concave penalty required $\kappa_{\text{select}}<1$
 as the most crucial one to achieve selection consistency. When replacing MCP penalty $p_\lambda^{\text{MCP}}(\boldsymbol{\alpha})$ with Lasso penalty $\lambda\|\boldsymbol{\alpha}\|_1$ whose coordinate sub-derivative lies in $[-\lambda,\lambda]$, extended condition will be reduced to  $\kappa_{\text{select}}(\boldsymbol{\alpha}) =  \|\boldsymbol{\tilde{Z}}_{\mathcal{V}^{*}}^\top\boldsymbol{\tilde{Z}}_{\mathcal{V}^{c*}}(\boldsymbol{\tilde{Z}}_{\mathcal{V}^{c*}}^\top\boldsymbol{\tilde{Z}}_{\mathcal{V}^{c*}})^{-1}\|_\infty <1$ as identical to IRC of Lasso \eqref{irrepre}. By the feature of nearly unbiasedness characteristic of MCP, $p_n^\prime(t) = 0 \,\,\forall |t|>\lambda\rho$. Once the mild condition $\min_{j\in\mathcal{V}^{c*}} |\widehat{{\alpha}}_j^{\text{or}}| > \lambda\rho$ holds, it implies $\theta_{\text{select}} = 0$ with $\boldsymbol{\varphi}_{\mathcal{V}^{c*}} = \boldsymbol{0}$ and $\kappa_{\text{select}} = 0$ sequentially. Thus the extended IRC $\kappa_{\text{select}}<1$ holds automatically for MCP but does not for Lasso. Consequently, this property is desirable that MCP achieves exact support recovery regardless of the constraint of SAIS condition.

\subsection{I-LAMM Algorithm for MCP penalty}\label{sec:A3}
Theoretically, the proposed WIT estimator enjoys better performance under weaker regulation conditions for low- and high-dimension cases with weak IVs. \cite{fan2018lamm} proposed the iterative local adaptive majorize-minimization (I-LAMM) algorithm for non-convex regularized model. I-LAMM first contracts the initial values in the neighborhood of the optimum solutions to serve as a better sparse coarse initial $\widehat{\boldsymbol{\alpha}}^{(1)}$, then tightens it to the solution under precision tolerance. The computation is in polynomial time.

To be concrete, the I-LAMM algorithm combines the adaptive Local Linear Approximation (LLA) method and proximal gradient method or iterative shrinkage-thresholding (ISTA) algorithm \citep{daubechies2004iterative}. We adopt the I-LAMM to solve a sequence of optimization problems through LLA,
\begin{eqnarray}
\widehat{\boldsymbol{\alpha}}^{(t)} = \underset{\boldsymbol{\alpha}}{\operatorname{arg min}}\,\, \frac{1}{2n} \|\boldsymbol{Y}-\boldsymbol{\widetilde{Z}}\boldsymbol{\alpha}\|_2^2+\sum_{j=1}^p \left[p_\lambda^\prime(|\widehat{{\alpha}}_j^{(t-1)}|)|\alpha_j|\right].\label{iteration}
\end{eqnarray}
For $t = 1$ refers to contraction stage to obtain the better initial estimators which fall in the contraction region. And $t = 2,3,\ldots$, is the tightening stage, until it converges. {It is worth noting \cite{zou2008one} used one-step LLA iteration to achieve the oracle estimator based on OLS estimate as initial. Nevertheless, in our case, the OLS estimate is not achievable under the perfect multicollinearity of design matrix $\boldsymbol{\widetilde{Z}}$. \cite{windmeijer2019use} used the adaptive Lasso with embedded information in the median estimator as the one-step LLA to achieve the oracle
property.}  Compared with the one-step method, iterations of $\boldsymbol{\lambda}^{(t-1)}$ (defined as below) in I-LAMM circumvent the stringent condition (namely, the majority rule) to obtain the root $n$ initial estimator as the proper adaptive weight for ALasso.

For each iteration (\ref{iteration}), the ISTA method is implemented to achieve the closed-form updating formula, which is the reduced model in \cite{fan2018lamm}'s derivation. For a given iteration $t$, let $k = 0,1,2,\ldots$ denotes the iteration in the proximal gradient updating, thus,
\begin{eqnarray}
    \hat{\boldsymbol{\alpha}}^{(t,k)} &=& \underset{\boldsymbol{\alpha}}{\operatorname{arg min}}\,\Bigg\{ \frac{1}{2n} \left\|\boldsymbol{Y}-\boldsymbol{\widetilde{Z}}\widehat{\boldsymbol{\alpha}}^{(t,k-1)}\right\|_2^2-\frac{1}{n}\Big[\widetilde{\boldsymbol{Z}}(\boldsymbol{\alpha}-\widehat{\boldsymbol{\alpha}}^{(t,k-1)})\Big]^\top(\boldsymbol{Y}-\boldsymbol{\widetilde{Z}}\widehat{\boldsymbol{\alpha}}^{(t,k-1)})\notag\\
    &&\quad\quad\quad\quad+\frac{\phi}{2}\|\boldsymbol{\alpha}-\widehat{\boldsymbol{\alpha}}^{(t,k-1)}\|_2^2+
     \sum_{j=1}^p \left[p_\lambda^\prime(|\widehat{{\alpha}}_j^{(t-1)}|)|\alpha_j|\right]\Bigg\}\notag\\
       &=&  \underset{\boldsymbol{\alpha}}{\operatorname{arg min}} \,\Bigg\{ \frac{\phi}{2}\left\|\boldsymbol{\alpha}-\Big(\widehat{\boldsymbol{\alpha}}^{(t,k-1)}+\frac{1}{\phi n}\boldsymbol{\widetilde{Z}}^\top(\boldsymbol{Y}-\boldsymbol{\widetilde{Z}}\widehat{\boldsymbol{\alpha}}^{(t,k-1)})\Big)\right\|_2^2+ \sum_{j=1}^p \left[p_\lambda^\prime(|\widehat{{\alpha}}_j^{(t-1)}|)|\alpha_j|\right]\Bigg\}\notag\\
       &=& S\Big(\widehat{\boldsymbol{\alpha}}^{(t,k-1)}+\frac{1}{\phi n}\boldsymbol{\widetilde{Z}}^\top(\boldsymbol{Y}-\boldsymbol{\widetilde{Z}}\widehat{\boldsymbol{\alpha}}^{(t,k-1)}),\frac{1}{\phi}\boldsymbol{\lambda}^{(t-1)})\Big),
\end{eqnarray}
where $\phi$ should be no smaller than  the largest  eigenvalue of $\widetilde{\boldsymbol{Z}}^\top\widetilde{\boldsymbol{Z}}$, or simply put, to ensure the majorization and $S(\boldsymbol{x},\boldsymbol{a})$ denotes the component-wise soft-threshoding operator, i.e., $S(\boldsymbol{x},\boldsymbol{a})_j = \operatorname{sgn}(x_j)(|x_j|-a_j)_+$, with $\boldsymbol{\lambda}^{(t-1)} = \Big(p_n^\prime(|\widehat{{\alpha}}_1^{(t-1)}|),\ldots,p_n^\prime(|\widehat{{\alpha}}_p^{(t-1)}|)\Big)^\top$.

\begin{algorithm}[hbt]
\footnotesize
\caption{{\label{algorithm 2}}I-LAMM algorithm for $\widehat{\boldsymbol{\alpha}}$ with MCP penalty}
\hspace*{0.01in} {\bf Input:}
$\boldsymbol{Y},\widetilde{\boldsymbol{Z}},\widehat{\boldsymbol{\alpha}}^{(0)},\lambda,\phi = \lambda_{\text{max}}(\widetilde{\boldsymbol{Z}}^{\top}\widetilde{\boldsymbol{Z}}),\delta_c = 10^{-3},\delta_t = 10^{-5}$
\begin{algorithmic}[1]

   \For{$t = 1,2,\ldots$}
       \State $\boldsymbol{\lambda}^{(t-1)} = (p_\lambda^\prime(|\widehat{{\alpha}}_1^{(t-1)}|),\ldots,p_\lambda^\prime(|\widehat{{\alpha}}_p^{(t-1)}|))^\top$ \Comment{$p_n^{\top}$ is the derivative of MCP penalty}
       \State $\widehat{\boldsymbol{\alpha}}^{(t,0)} = \widehat{\boldsymbol{\alpha}}^{(t-1)}$

            \For{$k= 1,2,\ldots$}
                \State $\widehat{\boldsymbol{\alpha}}^{(t,k)} = S(\widehat{\boldsymbol{\alpha}}^{(t,k-1)}+\frac{1}{n\phi}\boldsymbol{\widetilde{Z}}^\top(\boldsymbol{Y}-\boldsymbol{\widetilde{Z}}\widehat{\boldsymbol{\alpha}}^{(t,k-1)}),\frac{1}{\phi}\boldsymbol{\lambda}^{(t-1)})) $ \Comment{LAMM updating}
                \State $\omega_{\boldsymbol{\lambda}^{(t-1)}}(\widehat{\boldsymbol{\alpha}}^{(t,k)}) =\min _{\boldsymbol{\xi} \in \partial|\widehat{\boldsymbol{\alpha}}^{(t,k)}|}\left\{\|-\boldsymbol{\widetilde{Z}}^\top(\boldsymbol{Y}-\boldsymbol{\widetilde{Z}}\widehat{\boldsymbol{\alpha}}^{(t,k)})+\boldsymbol{\lambda}^{(t-1)} \odot \boldsymbol{\xi}\|_{\infty}\right\}$
                \If{$\omega_{\boldsymbol{\lambda}^{(t-1)}}(\widehat{\boldsymbol{\alpha}}^{(t,k)})\leq \mathbf{1}(t=1)\delta_c+\mathbf{1}(t\neq 1)\delta_t$}
                \State $\widehat{\boldsymbol{\alpha}}^{(t)} = \widehat{\boldsymbol{\alpha}}^{(t,k)}$
                \State \textbf{break}
                \EndIf
            \EndFor
        \If{$\|\widehat{\boldsymbol{\alpha}}^{(t)}-\widehat{\boldsymbol{\alpha}}^{(t-1)}\|_{\infty} \leq \delta_t$}
        \State $\widehat{\boldsymbol{\alpha}} = \widehat{\boldsymbol{\alpha}}^{(t)}$
        \State \textbf{break}
        \EndIf
    \EndFor
\end{algorithmic}
\hspace*{0.02in} {\bf Output:}
$\widehat{\boldsymbol{\alpha}}$
\end{algorithm}

We adopt the ﬁrst order optimality condition as a stopping criterion in the sub-problem. Let
$$
\omega_{\boldsymbol{\lambda}^{(t-1)}}(\boldsymbol{\alpha})=\min _{\boldsymbol{\xi} \in \partial|\boldsymbol{\alpha}|}\left\{\|-\boldsymbol{\widetilde{Z}}^\top(\boldsymbol{Y}-\boldsymbol{\widetilde{Z}}{\boldsymbol{\alpha}})+\boldsymbol{\lambda}^{(t-1)} \odot \boldsymbol{\xi}\|_{\infty}\right\}
$$
as a natural measure of suboptimality of $\boldsymbol{\alpha}$, where $\odot$ is the Hadamard product. Once $\omega_{\hat{\boldsymbol{\lambda}}^{(t-1)}}(\boldsymbol{\alpha}^{(t,k)})\leq \delta$, where $\delta$ is a pre-determined tolerance level and is assigned $\delta_c$ and $\delta_t$ for contraction and tightening stages, respectively, we stop the inner iteration and take $\widehat{\boldsymbol{\alpha}}^{(t)} = \widehat{\boldsymbol{\alpha}}^{(t,k)}$ as the $\delta$-optimal solution in the sub-problem. Notably, it is an early stopped variant of ISTA method in each sub-problem to obtain $\widehat{\boldsymbol{\alpha}}^{(t)}$ from $\widehat{\boldsymbol{\alpha}}^{(t-1)}$. 
The following Algorithm \ref{algorithm 2} demonstrates the details of I-LAMM algorithm for WIT.

{  \subsection{Additional simulations for comparison}\label{sec: Add simu}
In this subsection, we further compare the performance of our proposed WIT estimator using the MCP penalty against the SCAD penalty \citep{fan2001variable}, under the same simulation settings. The setup of MCP penalty is demonstrated in the main text. For SCAD penalty, its derivative is defined as
\begin{equation*}
    p^\prime(\alpha) = \lambda \left[\mathbb{I}(\alpha\leq \lambda )+\frac{(a \lambda - \alpha)_+}{(a-1)\lambda}\mathbb{I}(\alpha>\lambda)\right],
\end{equation*}
where $a$ controls the convexity and is not sensitive. As suggested by \cite{fan2001variable}, we take the default $a = 3.7$, while $\lambda$ is to be tuned the same way as MCP. Their performances are shown in the table \ref{tab:simuA}.

\begin{table}
\caption{\label{tab:simuA} Simulation results comparisons of WIT using MCP and SCAD}
\centering
\begin{tabular}{lccccccccccc}
\hline\hline
\multicolumn{1}{l}{\multirow{2}{*}{Case}} & \multirow{2}{*}{Approaches} & \multicolumn{4}{c}{$n = 200$}                         && \multicolumn{4}{c}{ $n=500$}                        \\
\multicolumn{1}{l}{}                    &                     & MAD                        &    CP    &FPR&FNR       && MAD             &CP           & FPR  &FNR             \\ \cline{3-6} \cline{8-11}

\multirow{2}{*}{{1(I)}}& WIT(MCP)&  0.046& 0.818&     0.068&           0.065 &&0.024& 0.948&          0.004&          0.020\\
& WIT(SCAD)&0.050& 0.800&0.082&0.074&& 0.028& 0.926&0.008&0.022 \\
\cline{3-6} \cline{8-11}
\multirow{2}{*}{{1(II)}}& WIT(MCP)&  0.079& 0.914&     0.016&           0.034 &&0.049& 0.920&          0.016&          0.016\\
& WIT(SCAD)&0.077& 0.885 &0.014&0.030&& 0.052& 0.916 &0.014&0.018 \\
\cline{3-6} 
\hline\hline
\multicolumn{1}{l}{\multirow{2}{*}{Case}} & \multirow{2}{*}{Approaches} & \multicolumn{4}{c}{$n = 500$}                         && \multicolumn{4}{c}{ $n=1000$}                        \\ \multicolumn{1}{l}{}                    &                     & MAD                        &    CP    &FPR&FNR       && MAD             &CP           & FPR  &FNR             \\ \cline{3-6} \cline{8-11}
\multirow{2}{*}{{1(III)}}& WIT(MCP)&  0.036& 0.708&     0.121&           0.099 &&0.016& 0.910&          0.020&          0.027\\
& WIT(SCAD)&0.035& 0.715&0.114&0.092&& 0.018& 0.894&0.022&0.032 \\
\cline{3-6} \cline{8-11}
\multirow{2}{*}{{1(IV)}}& WIT(MCP)&  0.102& 0.634&     0.198&           0.220 &&0.047& 0.898&          0.051&          0.064\\
& WIT(SCAD)& 0.114&0.0550&0.254&0.226&& 0.047& 0.876&0.049&0.068 \\\cline{3-6} \cline{8-11}
\multirow{2}{*}{{2(I)}}& WIT(MCP)&  0.004& 0.936&     0&           0.007 &&0.003& 0.945&          0&          0.002\\
& WIT(SCAD)&0.008& 0.915&0.003&0.016&& 0.005& 0.958&0.001&0.001 \\
\cline{3-6} \cline{8-11}
\multirow{2}{*}{{2(II)}}& WIT(MCP)&  0.006& 0.769&     0.003&           0.017 &&0.005& 0.882&          0.002&          0.018\\
& WIT(SCAD)&0.007& 0.782&0.004&0.008&& 0.004& 0.911&0&0.013 \\
\hline\hline

\end{tabular}
\end{table}

In most cases, WIT with MCP is slightly better than the counterpart with SCAD penalty. Therefore, we suggest that when using WIT, the default choice of MCP is sufficient for most cases. The SCAD penalty can be used as an alternative, but there is no significant advantage in terms of performance.
}
\section{Proofs} \label{B}
In this section we provide the proofs of the theoretical results in the main text. To proceed, some lemmas from previous literature are needed. We restate some of those results for the convenience of the readers.
\subsection{Ancillary Lemmas}
\begin{lemma}\label{Lemma A1}
(Lemmas 1 and 2 in \citealp{bekker1994alternative} and Lemma A.1 in \citealp{kolesar2015identification}). Consider the quadratic form $Q=(M+U)^{\top} C(M+$ $U)$, where $M \in \mathbb{R}^{n \times S}, C \in \mathbb{R}^{n \times n}$ are non-stochastic, $C$ is symmetric and idempotent with rank $J_{n}$ which may depend on $n$, and $U=\left(u_{1}, \ldots, u_{n}\right)^{\top}$, with $u_{i} \sim_{i.i.d.}[0, \Omega]$. Let $a \in \mathbb{R}^{S}$ be a non-stochastic vector. Then,\\
(a) If $u_{i}$ has finite fourth moment:
$$
\begin{aligned}
\mathbb{E}[Q \mid C]=& M^{\top} C M+J_{N} \Omega \\
\operatorname{var}(Q a \mid C)=& a^{\top} \Omega a M^{\top} C M+a^{\top} M^{\top} C M a \Omega+\Omega a a^{\top} M^{\top} C M+M C M a a^{\top} \Omega +J_{N}\left(a^{\top} \Omega a \Omega+\Omega a a^{\top} \Omega\right) \\
&+d_{C}^{\top} d_{C}\left[\mathbb{E}\left(a^{\top} u\right)^{2} u u^{\top}-a^{\top} \Omega a a^{\top} \Omega-a^{\top} \Omega a \Omega\right]+2 d_{C}^{\top} C M a \mathbb{E}\left[\left(a^{\top} u\right) u u^{\top}\right] \\
&+M^{\top} C d_{C} \mathbb{E}\left[\left(a^{\top} u\right)^{2} u^{\top}\right]+\mathbb{E}\left[\left(a^{\top} u\right)^{2} u\right] d_{C}^{\top} C M
\end{aligned}
$$
where $d_{C}=\operatorname{diag}(C)$. If the distribution of $u_{i}$ is normal, the last two lines of the variance component equal zero.\\
(b) Suppose that the distribution of $u_{i}$ is normal, and as $n \rightarrow \infty,$
$$
M^{\top} C M / N \rightarrow Q_{C M}
, J_{n} / n \rightarrow \alpha_{r}
$$
where the elements $c_{i s}$ of $C$ may depend on $N$. Then,
$$
\sqrt{n}(Q a / n-\mathbb{E} Q a / n) \stackrel{d}{\rightarrow} \mathcal{N}(0, V),
$$
where $
V=a^{\top} \Omega a Q_{C M}+a^{\top} Q_{C M} a \Omega+\Omega a a^{\top} Q_{C M}+Q_{C M} a a^{\top} \Omega+\alpha_{r}\left(a^{\top} \Omega a \Omega+\Omega a a^{\top} \Omega\right).$
\end{lemma}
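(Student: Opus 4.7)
The plan is to expand the quadratic form $Q = (M+U)^\top C (M+U) = M^\top C M + M^\top C U + U^\top C M + U^\top C U$ and handle each piece using standard moment calculations for i.i.d. random vectors. For the mean in part (a), the two cross terms vanish because $\mathbb{E}[U]=0$, and for the fully quadratic piece I would compute entry-wise $\mathbb{E}[(U^\top C U)_{kl}] = \sum_{i,j} C_{ij}\,\mathbb{E}[u_{ik} u_{jl}] = \sum_i C_{ii}\Omega_{kl} = \mathrm{tr}(C)\Omega_{kl} = J_n \Omega_{kl}$, using idempotency of $C$ to identify the trace with its rank.

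For the variance of $Qa$, I would project onto $a \in \mathbb{R}^S$ and decompose $\mathrm{var}(Qa)$ into variances and covariances of the three random pieces $M^\top C U a$, $U^\top C M a$, $U^\top C U a$. The linear terms give the block $a^\top\Omega a\, M^\top C M + \cdots + M^\top C M a a^\top \Omega$ directly from $\mathbb{E}[U a a^\top U^\top] = a^\top \Omega a\, I_n$-type identities. The quadratic-in-$U$ term requires the general variance of $U^\top B U$; expanding $\mathbb{E}[u_{ik}u_{jl}u_{mp}u_{nq}]$ into three Isserlis/Wick pairings plus a residual fourth-cumulant term, the three pairings yield the $J_n\bigl(a^\top\Omega a\,\Omega+\Omega a a^\top\Omega\bigr)$ contribution (again via $\mathrm{tr}(C^2)=\mathrm{tr}(C)=J_n$ from idempotency), while the fourth-cumulant residual survives only on the diagonal $i=j=m=n$, producing the $d_C^\top d_C$ term. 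The remaining cross-cumulant contributions $\mathbb{E}[(a^\top u)u u^\top]$ originate from $\mathrm{cov}(M^\top C U a,\, U^\top C U a)$, where only the diagonal of $C$ pairs the linear and quadratic factors. Under Gaussianity, all fourth cumulants vanish and the $d_C$ lines drop out.

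For part (b), I would apply a central limit theorem to $Q a/n$ after centering. The cleanest route under the stated normality is to write $(Q a - \mathbb{E}[Qa])/\sqrt{n}$ as a sum of martingale differences with respect to the filtration generated by $u_1,\dots,u_n$, then invoke the Hall–Heyde martingale CLT; the conditional variance converges to the claimed $V$ by the assumed limits $M^\top C M/n \to Q_{CM}$ and $J_n/n \to \alpha_r$, and the Lindeberg condition follows from uniform bounds on $\max_i |c_{ii}|$ together with finiteness of the Gaussian moments. The limiting variance is then obtained by passing to the limit in the exact formula from part (a), using the Gaussian simplification.

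The main obstacle will be the bookkeeping in the non-Gaussian variance in part (a): cleanly isolating the three Wick pairings from the fourth-cumulant remainder so that the $d_C^\top d_C$ term and the two $\mathbb{E}[(a^\top u)^2 u]$ boundary terms appear in the stated symmetric form. A secondary subtlety in part (b) is ensuring the Lindeberg/negligibility step holds uniformly in $n$ when $J_n$ is allowed to grow with $n$, but this follows from the boundedness of $\max_i c_{ii}$ implied by idempotency ($0\le c_{ii}\le 1$) together with the convergence of $M^\top C M/n$.
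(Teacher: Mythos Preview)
Your proposal is a reasonable and essentially standard sketch, but there is nothing to compare it to in the paper: this lemma is explicitly a restatement of Lemmas~1--2 of \cite{bekker1994alternative} and Lemma~A.1 of \cite{kolesar2015identification}, and the paper gives no proof of its own. Your expansion of $Q$ into linear and quadratic pieces in $U$, the Isserlis/Wick decomposition for the fourth moments (with the diagonal fourth-cumulant remainder producing the $d_C^\top d_C$ terms), and the martingale CLT for part~(b) are exactly the ingredients used in those original references, so your route is the canonical one rather than a genuine alternative.
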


\subsection{Proof of Theorem \ref{Theorem 1}}
\begin{proof}
Firstly, we prove that procedure (\ref{procedure}) can generate the $G$ different groups of $\mathcal{P}_c$ satisfying the requirements. With direct calculation, $\mathcal{P}_c = \{\tilde{\beta}^{c},\tilde{\boldsymbol{\alpha}}^{c},\tilde{\boldsymbol{\epsilon}}^{c}\} = \{\beta^*+c,\boldsymbol{\alpha}^*-c\boldsymbol{\gamma}^*,\boldsymbol{\epsilon}-c\boldsymbol{\eta}\}$ and $E(\tilde{\boldsymbol{\epsilon}}^c) = E(\boldsymbol{\epsilon})-cE(\boldsymbol{\eta} ) = \boldsymbol{0}$ for $c = {\alpha^*_j}/{\gamma^*_j}, j\in \mathcal{I}_c$. Therefore, $\tilde{\boldsymbol{\alpha}}^c_{\mathcal{I}_c} = \boldsymbol{\alpha}_{\mathcal{I}_c}^*-c\boldsymbol{\gamma}^*_{\mathcal{I}_c} =  \boldsymbol{\alpha}_{\mathcal{I}_c}^*- \boldsymbol{\alpha}_{\mathcal{I}_c}^* = \boldsymbol{0}$ and $\tilde{\alpha}^c_j = \alpha^*_j-c\gamma^*_j \neq 0$ for $j \notin \mathcal{I}_c$. Going through all possible $c = \{\alpha_j^*/\gamma^*_j:j \notin \mathcal{V}^*\}$, we conclude that procedure (\ref{procedure}) has generated $G$ groups additional DGP$_c$. The exhaustive and mutually exclusive property of $\mathcal{I}_c$ and $\mathcal{V}^*$ is also guaranteed by its construction.

Second, we show the proof by contradiction that there is no other possible DGP with the same sparse structure and zero mean structural error.

Assume there is an additional DGP $\{\breve{\beta},\breve{\boldsymbol{\alpha}},\breve{\epsilon}\}$ differentiating with $\mathcal{P}_c$ and $\mathcal{P}_0$ but still has $E(\breve{\epsilon}) = \boldsymbol{0}$ and zero(s) component in $\breve{\alpha}$, i.e., $\breve{\mathcal{I}} = \{j:\breve{\boldsymbol{\alpha}_j} = 0\}\neq\varnothing$. By the property of exhaustive and mutually exclusiveness of $\mathcal{V}^*$ and $\{\mathcal{I}_c\}_{c\neq 0}$, WLOG, we assume $\mathcal{I}_g \cap \breve{\mathcal{I}}\neq \varnothing$ for some $g$ and DGP$_g =  \{\tilde{\beta}^{g},\tilde{\boldsymbol{\alpha}}^{g},\tilde{\boldsymbol{\epsilon}}^{g}\}$. Since $E(\breve{\boldsymbol{\epsilon}}) = E(\tilde{\boldsymbol{\epsilon}}^g) = \boldsymbol{0}$, it suffices to show the contradiction in terms of moment condition (\ref{moment}). Hence, $\{\tilde{\beta}^{g},\tilde{\boldsymbol{\alpha}}^{g}\}$ and $\{\breve{\beta},\breve{\boldsymbol{\alpha}}\}$ are solutions of $$\boldsymbol{\Gamma}^* = \boldsymbol{\alpha}+ \beta\boldsymbol{\gamma}^*.$$
For $j \in \mathcal{I}_g\cap \breve{\mathcal{I}}$, $\Gamma^*_j = \tilde{\beta}^g\gamma^*_j = \breve{\beta}\gamma_j^*$ derives $\tilde{\beta}^g = \breve{\beta}$. In turn, $\forall j \notin \mathcal{I}_g\cap \breve{\mathcal{I}}$,
$$\Gamma_j^*-\breve{\beta}\gamma_j^* = \tilde{{\alpha}}_j = \breve{\alpha}_j.$$ Thus, $\{\tilde{\beta}^{g},\tilde{\boldsymbol{\alpha}}^{g}\}$ and $\{\breve{\beta},\breve{\boldsymbol{\alpha}}\}$ are equivalent. So as $\tilde{\boldsymbol{\epsilon}}^g$ and $\breve{\boldsymbol{\epsilon}}$ since $$\tilde{\boldsymbol{\epsilon}}^c = \boldsymbol{Y}-\boldsymbol{D}\tilde{\beta}^g-\boldsymbol{Z}\tilde{\boldsymbol{\alpha}}^g = \boldsymbol{Y}-\boldsymbol{D}\breve{\beta}-\boldsymbol{Z}\breve{\boldsymbol{\alpha}} = \breve{\boldsymbol{\epsilon}}.$$

Hence DGP $\{\breve{\beta}, \breve{\boldsymbol{\alpha}}, \breve{\epsilon}\}$ is equivalent to DGP$_g$ and forms a contradiction. It concludes that procedure (\ref{procedure}) can produce all possible DGPs.\qed
\end{proof}

\subsection{Proof of Proposition \ref{Proposition 1}}
\begin{proof}
\begin{equation}
\begin{aligned}
     &\boldsymbol{\alpha}^* = \underset{{\mathcal{P}=\{\beta,\boldsymbol{\boldsymbol{\alpha}},\boldsymbol{\epsilon}\}\in \mathcal{Q}}}{\operatorname{arg min}} p_\lambda^{\text{pen}}(\boldsymbol{\alpha}),\\
    \iff& \sum_{j=1}^p p_\lambda^{\text{pen}}({\alpha_j^*})< \sum_{j=1}^p p_\lambda^{\text{pen}}({\tilde{\alpha}_j^c})\\
    \iff& \sum_{j\in \mathcal{V}^{c*}} p_\lambda^{\text{pen}}({\alpha_j^*})< \sum_{j\in {\mathcal{I}}_c^{c}} p_\lambda^{\text{pen}}({\tilde{\alpha}_j^c}),\label{iff_1}
\end{aligned}
\end{equation}
where $\mathcal{I}_c = \{j: \alpha^*_j/\gamma^*_j = c,c\neq 0\}$ and $\mathcal{I}_c^c$ is the complement of $\mathcal{I}_c$. By the Assumption 6:
$\boldsymbol{\alpha}^* = {\operatorname{arg min}}_{{\mathcal{P}=\{\beta,\boldsymbol{\boldsymbol{\alpha}},\boldsymbol{\epsilon}\}\in \mathcal{Q}}} \|\boldsymbol{\alpha}\|_0$, we have $|\mathcal{V}^{c*}|<|\mathcal{I}_c^c|$.

Define  $\pi(\mathcal{I}_c^c,\mathcal{V}^{c*})$
 as $|\mathcal{V}^{c*}|$-combination of $\mathcal{I}_c^c$. Now we rewrite (\ref{iff_1}) as
 \begin{equation}
     \sum_{j\in \mathcal{V}^{c*}} p_\lambda^{\text{pen}}({\alpha_j^*})-\sum_{k \in \pi(\mathcal{I}_c^c,\mathcal{V}^{c*})}p_\lambda^{\text{pen}}(\tilde{\alpha}^c_k)< \sum_{l \in\mathcal{I}_c^c/ \pi(\mathcal{I}_c^c,\mathcal{V}^{c*})}p_\lambda^{\text{pen}}(\tilde{\alpha}^c_l),\label{iff_2}
 \end{equation}
 and it should hold for any $|\mathcal{V}^{c*}|$-combination and $\mathcal{P}_0$.
 Note that the RHS in (\ref{iff_2}) are non-negative by definition of the 	penalty function and $|\mathcal{I}_c^c/ \pi(\mathcal{I}_c^c,\mathcal{V}^{c*})|>0$.

 For arbitrary $\pi(\mathcal{I}_c^c,\mathcal{V}^{c*})$ and $\mathcal{P}_0$, we consider the worst case where
 \begin{equation*}
    |\tilde{\alpha}_l^c| <|\tilde{\alpha}_k^c|\,\, \text{and}\,\, |\tilde{\alpha}^c_k|<|\alpha^*_j|,
 \end{equation*}
 for $\forall k \in \pi(\mathcal{I}_c^c,\mathcal{V}^{c*}),l \in\mathcal{I}_c^c/ \pi(\mathcal{I}_c^c,\mathcal{V}^{c*})$ and for each pair $(j,k)\in( \mathcal{V}^{c*}, \pi(\mathcal{I}_c^c,\mathcal{V}^{c*}))$. Using Taylor expansion and mean value theorem, we obtain
 \begin{equation*}
 \begin{aligned}
   &\Big\{ \sum_{(j,k)\in( \mathcal{V}^{c*}, \pi(\mathcal{I}_c^c,\mathcal{V}^{c*}))} |\alpha^*_j|-|\tilde{\alpha}^c_k|\Big\} \underset{(j,k)\in( \mathcal{V}^{c*}, \pi(\mathcal{I}_c^c,\mathcal{V}^{c*}))}{\operatorname{min}}[{p_\lambda^{\text{pen}}}^\prime({\bar{\xi}_{(j,k)}})] \\
   <& \sum_{j\in \mathcal{V}^{c*}} p_\lambda^{\text{pen}}({\alpha_j^*})-\sum_{k \in \pi(\mathcal{I}_c^c,\mathcal{V}^{c*})}p_\lambda^{\text{pen}}(\tilde{\alpha}^c_k)\\
     <& \sum_{l \in\mathcal{I}_c^c/ \pi(\mathcal{I}_c^c,\mathcal{V}^{c*})}p_\lambda^{\text{pen}}(\tilde{\alpha}^c_l)\\
     <& \|\tilde{\boldsymbol{\alpha}}^c_{\mathcal{I}_c^c/ \pi(\mathcal{I}_c^c,\mathcal{V}^{c*})}\|_1  \underset{l \in \mathcal{I}_c^c/ \pi(\mathcal{I}_c^c,\mathcal{V}^{c*})}{\operatorname{max}}[{p_\lambda^{\text{pen}}}^\prime({\bar{\xi}_l})] ,
\end{aligned}
 \end{equation*}
 where $\bar{\xi}_{(j,k)}\in (|\tilde{\alpha}^c_k|,|\alpha^*_j|)$ and $\bar{\xi}_l\in (0,|\tilde{\alpha}^c_l|)$. Thus, it leads to
 \begin{equation}
     \frac{ \underset{(j,k)\in( \mathcal{V}^{c*}, \pi(\mathcal{I}_c^c,\mathcal{V}^{c*}))}{\operatorname{min}}[{p_\lambda^{\text{pen}}}^\prime({\bar{\xi}_{(j,k)}})]}{ \underset{l \in \mathcal{I}_c^c/ \pi(\mathcal{I}_c^c,\mathcal{V}^{c*})}{\operatorname{max}}[{p_\lambda^{\text{pen}}}^\prime({\bar{\xi}_l})] }<\frac{\|\tilde{\boldsymbol{\alpha}}^c_{\mathcal{I}_c^c/ \pi(\mathcal{I}_c^c,\mathcal{V}^{c*})}\|_1 }{\Big\{ \sum_{(j,k)\in( \mathcal{V}^{c*}, \pi(\mathcal{I}_c^c,\mathcal{V}^{c*}))} |\alpha^*_j|-|\tilde{\alpha}^c_k|\Big\} }.\label{iff_3}
 \end{equation}
Because ${p_\lambda^{\text{pen}}}^\prime $ is free of $\mathcal{P}_0$,  the RHS in (\ref{iff_3}) could be much smaller than 1 in some extreme case leads to ${p_\lambda^{\text{pen}}}^\prime $ being a bounded monotone decreasing function.  That is to say, $p_\lambda^{\text{pen}}$ should be a concave penalty.

Consider another case where we keep $\boldsymbol{\alpha}^*$ fixed, but vary $\gamma^*$ to make $\tilde{\boldsymbol{\alpha}}^{c}$ to have the same order with $\kappa(n)$ defined in Assumption 5.
Again by (\ref{iff_3}),
\begin{equation*}
\begin{aligned}
     { \underset{(j,k)\in( \mathcal{V}^{c*}, \pi(\mathcal{I}_c^c,\mathcal{V}^{c*}))}{\operatorname{min}}[{p_\lambda^{\text{pen}}}^\prime({\bar{\xi}_{(j,k)}})]}<&\frac{\|\tilde{\boldsymbol{\alpha}}^c_{\mathcal{I}_c^c/ \pi(\mathcal{I}_c^c,\mathcal{V}^{c*})}\|_1 }{\Big\{ \sum_{(j,k)\in( \mathcal{V}^{c*}, \pi(\mathcal{I}_c^c,\mathcal{V}^{c*}))} |\alpha^*_j|-|\tilde{\alpha}^c_k|\Big\} }\cdot { \underset{l \in \mathcal{I}_c^c/ \pi(\mathcal{I}_c^c,\mathcal{V}^{c*})}{\operatorname{max}}[{p_\lambda^{\text{pen}}}^\prime({\bar{\xi}_l})] }\\
     <&\frac{\lambda\|\tilde{\boldsymbol{\alpha}}^c_{\mathcal{I}_c^c/ \pi(\mathcal{I}_c^c,\mathcal{V}^{c*})}\|_1 }{\Big\{ \sum_{(j,k)\in( \mathcal{V}^{c*}, \pi(\mathcal{I}_c^c,\mathcal{V}^{c*}))} |\alpha^*_j|-|\tilde{\alpha}^c_k|\Big\} }\\
     \asymp&\frac{\lambda \kappa(n) (|\mathcal{I}_c^c|-|\mathcal{V}^{c*}|)}{\Big\{ \sum_{(j,k)\in( \mathcal{V}^{c*}, \pi(\mathcal{I}_c^c,\mathcal{V}^{c*}))} |\alpha^*_j|-|\tilde{\alpha}^c_k|\Big\} }
     \asymp C\lambda\kappa(n).
\end{aligned}
 \end{equation*}
Thus, it concludes that ${p_\lambda^{\text{pen}}}^{\prime}(t) = O(\lambda \kappa(n))$ for any $t>\kappa(n)$.\qed
\end{proof}
\subsection{Proof of Lemma \ref{Lemma 1}}
\begin{proof}
$\tilde{\boldsymbol{Z}}_j = M_{{\widehat{\boldsymbol{D}}}}\boldsymbol{Z}_j$, $\boldsymbol{Z}_j$ is the $j$-th instrument. The target is to analyze that the rate of $\tilde{\boldsymbol{Z}}_j^\top \boldsymbol{\epsilon}/n$ is not inflated. To proceed, $|\tilde{\boldsymbol{Z}}_j^\top \boldsymbol{\epsilon}/n| \leq |\boldsymbol{Z}_j^\top \boldsymbol{\epsilon} /n |+|\boldsymbol{Z}_j^\top P_{\widehat{\boldsymbol{D}}}\boldsymbol{\epsilon}/n|$. The first term is known to be $O_p(n^{-1/2})$. For the second term, we have,
\begin{equation}
        |\boldsymbol{Z}_j^\top P_{\widehat{\boldsymbol{D}}}\boldsymbol{\epsilon}/n|
         = \underbrace{\frac{|\boldsymbol{Z}_j^\top\widehat{\boldsymbol{D}}|}{n}}_{(I)}\cdot  \underbrace{\frac{|\boldsymbol{\epsilon}^\top\widehat{\boldsymbol{D}}|}{n}}_{(II)}\cdot \left(\underbrace{\frac{|\widehat{\boldsymbol{D}}^\top \widehat{\boldsymbol{D}}|}{n}}_{(III)}\right)^{-1}.
\end{equation}
Notably, $\widehat{\boldsymbol{D}} = \boldsymbol{Z}\boldsymbol{\gamma}^* + P_{\boldsymbol{Z}}\boldsymbol{\eta}$ is a projected endogenous variable. It consists of the random and non-random parts. Then we analyze the size of the above three terms sequentially.
For $(I)$, we have
\begin{equation}
    \frac{\boldsymbol{Z}_j^\top\widehat{\boldsymbol{D}}}{n} = \frac{\boldsymbol{Z}_j^{\top} (\boldsymbol{Z}\boldsymbol{\gamma}^*+\boldsymbol{\eta})}{n} =(\boldsymbol{Q}_{nj})^\top \boldsymbol{\gamma}^* +\frac{\boldsymbol{Z}_j^\top\boldsymbol{\epsilon}}{n}=\boldsymbol{Q}_{nj}^\top \boldsymbol{\gamma}^* +O_p(n^{-1/2}).
\end{equation}

With regard to $(II)$,
\begin{equation}
    \frac{\boldsymbol{\epsilon}^\top \widehat{\boldsymbol{D}}}{n} = \frac{\boldsymbol{\epsilon}^\top P_{\boldsymbol{Z}} (\boldsymbol{Z}\boldsymbol{\gamma}^*+\boldsymbol{\eta})}{n} = \frac{\boldsymbol{\epsilon}^\top  \boldsymbol{Z}\boldsymbol{\gamma}^*}{n}+\frac{\boldsymbol{\epsilon} P_{\boldsymbol{Z}}\boldsymbol{\eta}}{n}.
\end{equation}
Within this decomposition, we first have $E({\boldsymbol{\epsilon}^\top  \boldsymbol{Z}\boldsymbol{\gamma}^*}/{n}) = E(E(\boldsymbol{\epsilon}^\top |\boldsymbol{Z})\boldsymbol{Z}\boldsymbol{\gamma}^* ) = 0$ and $$\operatorname{var}({\boldsymbol{\epsilon}^\top  \boldsymbol{Z}\boldsymbol{\gamma}^*}/{n}) = E(\operatorname{var}({\boldsymbol{\epsilon}^\top  \boldsymbol{Z}\boldsymbol{\gamma}^*}/{n}|\boldsymbol{Z})) = E(\sigma_{\epsilon}^2\boldsymbol{\gamma}^{*\top}\boldsymbol{Z}^\top \boldsymbol{Z}\boldsymbol{\gamma}^*/n^2) = O(\boldsymbol{\gamma}^{*\top}\boldsymbol{Q}_n\boldsymbol{\gamma}^*/n).$$
Thus we obtain ${\boldsymbol{\epsilon}^\top  \boldsymbol{Z}\boldsymbol{\gamma}^*}/{n} = O_P(\sqrt{\boldsymbol{\gamma}^{*\top}\boldsymbol{Q}_n\boldsymbol{\gamma}^*/n})$.
Regarding the second term, we have $E(\boldsymbol{\epsilon}^\top P_{\boldsymbol{Z}} \boldsymbol{\eta}/n) = E(\operatorname{tr}(P_{\boldsymbol{Z}} \boldsymbol{\eta}\boldsymbol{\epsilon}^\top)/n) = \operatorname{tr}(E(P_{\boldsymbol{Z}} E(\boldsymbol{\eta}\boldsymbol{\epsilon}^\top|\boldsymbol{Z}))/n) = \sigma_{\epsilon,\eta}^2 p/n$ and
\begin{equation*}
\begin{aligned}
        \operatorname{var}(\boldsymbol{\epsilon}^\top P_{\boldsymbol{Z}} \boldsymbol{\eta}/n) = E(\operatorname{var}(\boldsymbol{\epsilon}^\top P_{\boldsymbol{Z}} \boldsymbol{\eta}/n|\boldsymbol{Z})) &= E(2\sigma_{\epsilon}^2 \sigma_{\eta}^2[\operatorname{tr}(P_{\boldsymbol{Z}})/n^2]) +  E([n^{-2}\sum_{i}(P_{\boldsymbol{Z}})^2_{ii}])([\sigma_{\epsilon,\eta}^2]^2-2\sigma_\epsilon^2\sigma_\eta^2)\\
        &\leq O(p/n^2)+O(p/n^2) = O(p/n^2),
\end{aligned}
\end{equation*}
where the inequality holds since $\operatorname{tr}(P_{\boldsymbol{Z}}) = p$, $(P_{\boldsymbol{Z}})_{ii} \in (0,1)$ and $\sum_{i}(P_{\boldsymbol{Z}})^2_{ii}\leq \sum_{i}(P_{\boldsymbol{Z}})_{ii} = p$. Thus, we conclude the size of $(II)$ is $O_P(\sqrt{\boldsymbol{\gamma}^{*\top}\boldsymbol{Q}\boldsymbol{\gamma}^*/n}\lor \sqrt{p/n^2})+ \sigma_{\epsilon,\eta}^2 p/n $.

Now we turn to $(III)$, with similar argument,
\begin{equation*}
\begin{aligned}
        \frac{\widehat{\boldsymbol{D}}^\top \widehat{\boldsymbol{D}}}{n} = \frac{\boldsymbol{D}^\top P_{\boldsymbol{Z}}\boldsymbol{D}}{n} &= \frac{\boldsymbol{\gamma}^{*\top}\boldsymbol{Z}^\top\boldsymbol{Z}\boldsymbol{\gamma}^*}{n}+2\frac{\boldsymbol{\eta}^\top \boldsymbol{Z}\boldsymbol{\gamma}^*}{n}+\frac{\boldsymbol{\eta}^\top P_{\boldsymbol{Z}}\boldsymbol{\eta}}{n}\\
        &= \boldsymbol{\gamma}^{*\top}\boldsymbol{Q}_n\boldsymbol{\gamma}^*+o_p(1)+O_P(\sqrt{\boldsymbol{\gamma}^{*\top}\boldsymbol{Q}_n\boldsymbol{\gamma}^*/n} \lor \sqrt{p/n^2})+\sigma_{\eta}^2 p/n\\
        & = \boldsymbol{\gamma}^{*\top}\boldsymbol{Q}_n\boldsymbol{\gamma}^*+\sigma_{\eta}^2 p/n +O_P(\sqrt{\boldsymbol{\gamma}^{*\top}\boldsymbol{Q}_n\boldsymbol{\gamma}^*/n }\lor \sqrt{p/n^2}).
\end{aligned}
\end{equation*}

Therefore, together with above three terms, we are able to derive the size of $|\boldsymbol{Z}_j^\top P_{\widehat{\boldsymbol{D}}}\boldsymbol{\epsilon}/n| $,
\begin{equation}
\begin{aligned}
        \boldsymbol{Z}_j^\top P_{\widehat{\boldsymbol{D}}}\boldsymbol{\epsilon}/n  &= \frac{\left[\boldsymbol{Q}_{nj}^\top \boldsymbol{\gamma}^* +O_p(n^{-1/2})\right]\cdot\left[\sigma_{\epsilon,\eta}^2 p/n+ O_P(\sqrt{\boldsymbol{\gamma}^{*\top}\boldsymbol{Q}_n\boldsymbol{\gamma}^*/n}\lor \sqrt{p/n^2})
     \right]}{\boldsymbol{\gamma}^{*\top}\boldsymbol{Q}_n\boldsymbol{\gamma}^*+\sigma_{\eta}^2 p/n +O_P(\sqrt{\boldsymbol{\gamma}^{*\top}\boldsymbol{Q}_n\boldsymbol{\gamma}^*/n }\lor \sqrt{p/n^2})}\\
     &=\sigma_{\epsilon,\eta}^2 p/n\cdot \frac{\boldsymbol{Q}_{nj}^\top \boldsymbol{\gamma}^*}{\boldsymbol{\gamma}^{*\top}\boldsymbol{Q}_n\boldsymbol{\gamma}^*+\sigma_{\eta}^2 p/n} +O_p(n^{-1/2})
\end{aligned}
\end{equation}\qed

\end{proof}

\subsection{Proof of Lemma \ref{Lemma 2}}
\begin{proof}
We have the transformed $\tilde{\boldsymbol{Z}} = M_{\widehat{\boldsymbol{D}}}\boldsymbol{Z}$, where $\widehat{\boldsymbol{D}} = P_{\boldsymbol{Z}} \boldsymbol{D} = \boldsymbol{Z}\widehat{\boldsymbol{\gamma}}$. Thus,
\begin{equation}
\begin{aligned}
         \tilde{\boldsymbol{Z}} &= \boldsymbol{Z}-\boldsymbol{Z}\widehat{\boldsymbol{\gamma}}\left(\widehat{\boldsymbol{\gamma}}^\top\boldsymbol{Z}^\top \boldsymbol{Z}\widehat{\boldsymbol{\gamma}}\right)^{-1}\widehat{\boldsymbol{\gamma}}^\top\boldsymbol{Z}^\top\boldsymbol{Z},\\
        \boldsymbol{C}_n =  \frac{\tilde{\boldsymbol{Z}}^\top \tilde{\boldsymbol{Z}}}{n} &=\left(\frac{\boldsymbol{Z}^\top\boldsymbol{Z}}{n}\right)- \left(\frac{\boldsymbol{Z}^\top\boldsymbol{Z}}{n}\right)\widehat{\boldsymbol{\gamma}}\left(\widehat{\boldsymbol{\gamma}}^\top\left(\frac{\boldsymbol{Z}^\top \boldsymbol{Z}}{n}\right)\widehat{\boldsymbol{\gamma}}\right)^{-1}\widehat{\boldsymbol{\gamma}}^\top\left(\frac{\boldsymbol{Z}^\top\boldsymbol{Z}}{n}\right).
\end{aligned}
\end{equation}
 Denote $\boldsymbol{Q}_n = {\boldsymbol{Z}^\top\boldsymbol{Z}}/{n}$, consider the square of restricted eigenvalue of  $\tilde{\boldsymbol{Z}}$, $K_{\mathscr{C}}^2$, we have
\begin{equation}
\begin{aligned}
     K_{\mathscr{C}}^2(\mathcal{V}^*,\xi) =& \,\, \underset{\boldsymbol{u}}{\operatorname{inf}}\{\|\boldsymbol{u}^\top (n^{-1}\tilde{\boldsymbol{Z}}^\top\tilde{\boldsymbol{Z}})\boldsymbol{u}/\|\boldsymbol{u}\|_2^2; \boldsymbol{u}\in \mathscr{C}(\mathcal{V}^* ; \xi) \}\\
     =& \underset{ \boldsymbol{u}\in \mathscr{C}(\mathcal{V}^* ; \xi) }{\operatorname{inf}} \,\, \{\|\boldsymbol{u}^\top \boldsymbol{C}_n\boldsymbol{u}/\|\boldsymbol{u}\|_2^2\}\\
     =& \underset{ \boldsymbol{u}\in \mathscr{C}(\mathcal{V}^* ; \xi) }{\operatorname{inf}} \,\, \frac{\boldsymbol{u}^\top ( \boldsymbol{Q}_n-\boldsymbol{Q}_n\widehat{\boldsymbol{\gamma}}(\widehat{\boldsymbol{\gamma}}^{\top}\boldsymbol{Q}_n\widehat{\boldsymbol{\gamma}})^{-1}\widehat{\boldsymbol{\gamma}}^{\top}\boldsymbol{Q}_n)\boldsymbol{u}}{\|\boldsymbol{u}\|_2^2}\\
         =& \underset{ \boldsymbol{u}\in \mathscr{C}(\mathcal{V}^* ; \xi) }{\operatorname{inf}} \,\,
         \frac{\boldsymbol{u}^\top\boldsymbol{Q}_n\boldsymbol{u}\widehat{\boldsymbol{\gamma}}^{\top}\boldsymbol{Q}_n\widehat{\boldsymbol{\gamma}}^{\top}-(\widehat{\boldsymbol{\gamma}}^{\top}\boldsymbol{Q}_n\boldsymbol{u})^2}{\|\boldsymbol{u}\|_2^2\widehat{\boldsymbol{\gamma}}^{\top}\boldsymbol{Q}_n\widehat{\boldsymbol{\gamma}}}
     \label{RE_Appendix 1}
\end{aligned}
\end{equation}

Notice the denominator $\boldsymbol{u}^\top\boldsymbol{Q}_n\boldsymbol{u}\widehat{\boldsymbol{\gamma}}^{\top}\boldsymbol{Q}_n\widehat{\boldsymbol{\gamma}}^{\top}-(\widehat{\boldsymbol{\gamma}}^{\top}\boldsymbol{Q}_n\boldsymbol{u})^2\geq 0$ by Cauchy–Schwarz inequality and the equality holds if and only if $\boldsymbol{u} = k \widehat{\boldsymbol{\gamma}}$ for any $k \neq 0$. Furthermore, the exact difference arises from this equation can be determined by the Lagrange's identity. One can always take $\xi \in (0, \|\widehat{\boldsymbol{\gamma}}_{\mathcal{V}^{*}}\|_1/\|\widehat{\boldsymbol{\gamma}}_{\mathcal{V}^{c*}}\|_1)$ such that the cone $\mathscr{C}(\mathcal{V}^* ; \xi)=\left\{\boldsymbol{u}:\left\|\boldsymbol{u}_{\mathcal{V}^{*}}\right\|_{1} \leq \xi\left\|\boldsymbol{u}_{\mathcal{V}^{c*}}\right\|_{1}\right\}$ excludes the membership of $k\widehat{\boldsymbol{\gamma}}$ because the cone $\mathscr{C}(\mathcal{V}^* ; \xi)$ is invariant of scale. Therefore, the denominator  $\boldsymbol{u}^\top\boldsymbol{Q}_n\boldsymbol{u}\widehat{\boldsymbol{\gamma}}^{\top}\boldsymbol{Q}_n\widehat{\boldsymbol{\gamma}}^{\top}-(\widehat{\boldsymbol{\gamma}}^{\top}\boldsymbol{Q}_n\boldsymbol{u})^2> 0$ and $  K_{\mathscr{C}}^2(\mathcal{V}^*,\xi) > 0$ holds strictly.\qed

\end{proof}

\subsection{Proof of Lemma \ref{Lemma 3}}
\begin{proof}
Let $\boldsymbol{R^*} = \widetilde{\boldsymbol{Z}}^{\top}(\boldsymbol{Y}-\widetilde{\boldsymbol{Z}}{\boldsymbol{\alpha}}^*)/n$ and $\tilde{\boldsymbol{Z}} = M_{\widehat{\boldsymbol{D}}}\boldsymbol{Z}$,$\boldsymbol{D} = \boldsymbol{Z}\boldsymbol{\gamma}^*+\boldsymbol{\eta}$, we have
\begin{equation}
    \begin{aligned}
        \boldsymbol{R}^* &= \widetilde{\boldsymbol{Z}}^{\top}(\boldsymbol{Y}-\widetilde{\boldsymbol{Z}}{\boldsymbol{\alpha}}^*)/n
         = \boldsymbol{Z}^\top M_{\widehat{\boldsymbol{D}}} (\boldsymbol{D}\beta^*+\boldsymbol{\epsilon})/n = \boldsymbol{Z}^\top \Bigg(\boldsymbol{I}-\frac{\widehat{\boldsymbol{D}}\widehat{\boldsymbol{D}}^\top}{\widehat{\boldsymbol{D}}^\top\widehat{\boldsymbol{D}}}\Bigg)(\boldsymbol{D}\beta^*+\boldsymbol{\epsilon})/n\\
         &=\beta^*\frac{\boldsymbol{Z}^\top\boldsymbol{D}}{n}+\frac{\boldsymbol{\boldsymbol{Z}^\top\boldsymbol{\epsilon}}}{n}-\frac{\frac{\boldsymbol{Z}^\top\widehat{\boldsymbol{D}}}{n}\cdot\frac{\widehat{\boldsymbol{D}}^\top(\boldsymbol{D}\beta^*+\boldsymbol{\epsilon})}{n}}{\frac{\widehat{\boldsymbol{D}}^\top\widehat{\boldsymbol{D}}}{n}}\\
         &=\beta^*\frac{\boldsymbol{Z}^\top\boldsymbol{D}}{n}+\frac{\boldsymbol{\boldsymbol{Z}^\top\boldsymbol{\epsilon}}}{n}-\beta^*\frac{\boldsymbol{Z}^\top\widehat{\boldsymbol{D}}}{n}-\frac{\frac{\boldsymbol{Z}^\top\widehat{\boldsymbol{D}}}{n}\cdot\frac{\widehat{\boldsymbol{D}}^\top\boldsymbol{\epsilon}}{n}}{\frac{\widehat{\boldsymbol{D}}^\top\widehat{\boldsymbol{D}}}{n}}\\
         &=\frac{\boldsymbol{\boldsymbol{Z}^\top\boldsymbol{\epsilon}}}{n}-\frac{\frac{\boldsymbol{Z}^\top\widehat{\boldsymbol{D}}}{n}\cdot\frac{\widehat{\boldsymbol{D}}^\top\boldsymbol{\epsilon}}{n}}{\frac{\widehat{\boldsymbol{D}}^\top\widehat{\boldsymbol{D}}}{n}}\label{R^*}.
    \end{aligned}
\end{equation}
By direct algebra, using $\widehat{\boldsymbol{D}} = \boldsymbol{Z}\boldsymbol{\gamma}^*+P_{\boldsymbol{Z}}\boldsymbol{\eta}$ and  $\widehat{\boldsymbol{D}}^\top\widehat{\boldsymbol{D}}/n = \boldsymbol{\gamma}^{*\top}\boldsymbol{Q}_n\boldsymbol{\gamma}^*+2\boldsymbol{\eta}^\top\boldsymbol{Z}\boldsymbol{\gamma}^*/n+\boldsymbol{\eta}^\top P_{\boldsymbol{Z}}\boldsymbol{\eta}/n$,
we obtain
\begin{equation*}
\begin{aligned}
     \boldsymbol{R}^* = \frac{\boldsymbol{\boldsymbol{Z}^\top\boldsymbol{\epsilon}}}{n}-\frac{\frac{\boldsymbol{Z}^\top\widehat{\boldsymbol{D}}}{n}\cdot\frac{\widehat{\boldsymbol{D}}^\top\boldsymbol{\epsilon}}{n}}{\frac{\widehat{\boldsymbol{D}}^\top\widehat{\boldsymbol{D}}}{n}}=\frac{\boldsymbol{Z}^\top \boldsymbol{\epsilon}}{n}-\frac{\Big(\boldsymbol{Q}_n\boldsymbol{\gamma}^*+\boldsymbol{Z}^\top\boldsymbol{\eta}/n\Big)\Big(\boldsymbol{\epsilon}^\top \boldsymbol{Z}\boldsymbol{\gamma}^*/n+\boldsymbol{\epsilon}^\top P_{\boldsymbol{Z}}\boldsymbol{\eta}/n\Big)}{\boldsymbol{\gamma}^{*\top}\boldsymbol{Q}_n\boldsymbol{\gamma}^*+2\boldsymbol{\eta}^\top\boldsymbol{Z}\boldsymbol{\gamma}^*/n+\boldsymbol{\eta}^\top P_{\boldsymbol{Z}}\boldsymbol{\eta}/n},
\end{aligned}
\end{equation*}
in which the expression is free of $\beta^*$. Thus, by triangle inequality, we attain the bound
     \begin{equation}
         \|\boldsymbol{R}\|_\infty \leq \left\|\frac{\boldsymbol{Z}^\top \boldsymbol{\epsilon}}{n}\right\|_\infty+ \Bigg\|\frac{\Big(\boldsymbol{Q}_n\boldsymbol{\gamma}^*+\boldsymbol{Z}^\top\boldsymbol{\eta}/n\Big)\Big(\boldsymbol{\epsilon}^\top \boldsymbol{Z}\boldsymbol{\gamma}^*/n+\boldsymbol{\epsilon}^\top P_{\boldsymbol{Z}}\boldsymbol{\eta}/n\Big)}{\boldsymbol{\gamma}^{*\top}\boldsymbol{Q}_n\boldsymbol{\gamma}^*+2\boldsymbol{\eta}^\top\boldsymbol{Z}\boldsymbol{\gamma}^*/n+\boldsymbol{\eta}^\top P_{\boldsymbol{Z}}\boldsymbol{\eta}/n}\Bigg\|_\infty\label{Inf_norm_residual}.
     \end{equation}
Under standard argument through concentration inequality,
\begin{equation*}
    Pr\Bigg(\left\|\frac{\boldsymbol{Z}^\top \boldsymbol{\epsilon}}{n}\right\|_\infty\geq t\Bigg) = Pr\Big(\underset{ 1\leq j \leq p}{\operatorname{max}}|\boldsymbol{Z}_j^\top\boldsymbol{\epsilon}|\geq nt\Big)\leq \sum_{1\leq j \leq p}Pr(|\boldsymbol{Z}_j^\top \boldsymbol{\epsilon}|\geq nt)\leq 2p\exp(-\frac{nt^2}{2\sigma^2_\epsilon}).
\end{equation*}
Let $t = \sigma_\epsilon\sqrt{\frac{2}{n}\log (2p)}$, we obtain $\left\|{\boldsymbol{Z}^\top \boldsymbol{\epsilon}}/{n}\right\|_\infty  = O_p\Big(\sigma_\epsilon\sqrt{\frac{2}{n}\log (2p)}\Big) $.

For the second term,
\begin{equation*}
    \Bigg\|\frac{\Big(\boldsymbol{Q}_n\boldsymbol{\gamma}^*+\boldsymbol{Z}^\top\boldsymbol{\eta}/n\Big)\Big(\boldsymbol{\epsilon}^\top \boldsymbol{Z}\boldsymbol{\gamma}^*/n+\boldsymbol{\epsilon}^\top P_{\boldsymbol{Z}}\boldsymbol{\eta}/n\Big)}{\boldsymbol{\gamma}^{*\top}\boldsymbol{Q}_n\boldsymbol{\gamma}^*+2\boldsymbol{\eta}^\top\boldsymbol{Z}\boldsymbol{\gamma}^*/n+\boldsymbol{\eta}^\top P_{\boldsymbol{Z}}\boldsymbol{\eta}/n}\Bigg\|_\infty  = \frac{\Big|\Big(\boldsymbol{\epsilon}^\top \boldsymbol{Z}\boldsymbol{\gamma}^*/n+\boldsymbol{\epsilon}^\top P_{\boldsymbol{Z}}\boldsymbol{\eta}/n\Big)\Big|\Big\|\boldsymbol{Q}_n\boldsymbol{\gamma}^*+\boldsymbol{Z}^\top\boldsymbol{\eta}/n\Big\|_\infty}{\Big|\boldsymbol{\gamma}^{*\top}\boldsymbol{Q}_n\boldsymbol{\gamma}^*+2\boldsymbol{\eta}^\top\boldsymbol{Z}\boldsymbol{\gamma}^*/n+\boldsymbol{\eta}^\top P_{\boldsymbol{Z}}\boldsymbol{\eta}/n\Big|}
\end{equation*}
Similarly, we attain $\big\|\boldsymbol{Q}_n\boldsymbol{\gamma}^*+\boldsymbol{Z}^\top\boldsymbol{\eta}/n\big\|_\infty\leq \|\boldsymbol{Q}_n\boldsymbol{\gamma}^*\|_{\infty} + O_p\Big(\sigma_\eta\sqrt{\frac{2}{n}\log (2p)}\Big)$.

Also using the ancillary lemmas \ref{Lemma A1} and similar analyses in proof of Lemma \ref{Lemma 1}, we know the second term in (\ref{Inf_norm_residual}) is upper-bounded by:
\begin{equation}
\begin{aligned}
     &\frac{\Big(\|\boldsymbol{Q}_n\boldsymbol{\gamma}^*\|_{\infty} + O_p\Big(\sigma_\eta\sqrt{\frac{2}{n}\log (2p)}\Big)\Big)\Big( \sigma_{\epsilon,\eta} p/n +O_P(\sqrt{\boldsymbol{\gamma}^{*\top}\boldsymbol{Q}_n\boldsymbol{\gamma}^*/n}\lor \sqrt{p/n^2}) \Big)}{\boldsymbol{\gamma}^{*\top}\boldsymbol{Q}_n\boldsymbol{\gamma}^*+\sigma_{\eta}^2 p/n +O_P(\sqrt{\boldsymbol{\gamma}^{*\top}\boldsymbol{Q}_n\boldsymbol{\gamma}^*/n }\lor \sqrt{p/n^2}) }\\
    =&\sigma_{\epsilon,\eta} p/n\frac{\|\boldsymbol{Q}_n\boldsymbol{\gamma}^*\|_{\infty}}{\boldsymbol{\gamma}^{*\top}\boldsymbol{Q}_n\boldsymbol{\gamma}^*+\sigma_{\eta}^2 p/n}+O_p\Big(\sigma_\eta\sqrt{\frac{2}{n}\log (2p)}\Big).
\end{aligned}
\end{equation}
Hence, together with above results, we obtain the final upper bound of $\|\boldsymbol{R}^*\|_{\infty}$:
\begin{equation}
\begin{aligned}
    \|\boldsymbol{R}^*\|_{\infty} &\leq \sigma_{\epsilon,\eta} p/n\frac{\|\boldsymbol{Q}_n\boldsymbol{\gamma}^*\|_{\infty}}{\boldsymbol{\gamma}^{*\top}\boldsymbol{Q}_n\boldsymbol{\gamma}^*+\sigma_{\eta}^2 p/n}+O_p\Big((\sigma_\eta+\sigma_\epsilon)\sqrt{\frac{2}{n}\log (2p)}\Big)\\
    &= O_p\Big(\frac{p}{n}\cdot\frac{\|\boldsymbol{Q}_n\boldsymbol{\gamma}^*\|_{\infty}}{\boldsymbol{\gamma}^{*\top}\boldsymbol{Q}_n\boldsymbol{\gamma}^*}+\sqrt{\frac{\log p}{n}}\Big),\label{B10}
\end{aligned}
\end{equation}
where the second line holds provided $\boldsymbol{\gamma}^{*\top}\boldsymbol{Q}_n\boldsymbol{\gamma}^*$ dominates $\sigma_{\epsilon,\eta}p/n$.

Similarly, $\boldsymbol{R}^{\text{or}} = \widetilde{\boldsymbol{Z}}^{\top}(\boldsymbol{Y}-\widetilde{\boldsymbol{Z}}{\widehat{\boldsymbol{\alpha}}}^{\text{or}})/n =\tilde{\boldsymbol{Z}}^\top \big[\boldsymbol{Y}-\tilde{\boldsymbol{Z}}_{\mathcal{V}^{c*}}(\tilde{\boldsymbol{Z}}^\top_{\mathcal{V}^{c*}} \tilde{\boldsymbol{Z}}_{\mathcal{V}^{c*}})^{-1}\tilde{\boldsymbol{Z}}^\top_{\mathcal{V}^{c*}} \boldsymbol{Y}\big]/n$. Therefore, for the sake of controlling the supremum norm of $\boldsymbol{R}^{\text{or}}$, we only need to consider the valid IV $\mathcal{V}^{*}$ since $\boldsymbol{R}^{\text{or}}_{j} = 0$ for $j \in \mathcal{V}^{c*}$. Recall $\hat{\beta}_{\text{or}}^{\text{TSLS}} = [\boldsymbol{D}^\top(P_{\boldsymbol{Z}}-P_{\boldsymbol{Z}_{\mathcal{V}^{c*}}})\boldsymbol{D}]^{-1}[\boldsymbol{D}^\top(P_{\boldsymbol{Z}}-P_{\boldsymbol{Z}_{\mathcal{V}^{c*}}})\boldsymbol{Y}]$, we have
\begin{equation}
\begin{aligned}
        \widehat{\boldsymbol{\alpha}}_{\mathcal{V}^{c*}}^{\text{or}} &= (\boldsymbol{Z}_{\mathcal{V}^{c*}}^\top \boldsymbol{Z}_{\mathcal{V}^{c*}})^{-1}[\boldsymbol{Z}_{\mathcal{V}^{c*}}^\top(\boldsymbol{Y}-\widehat{\boldsymbol{D}}\hat{\beta}_{\text{or}}^{\text{TSLS}})]\\
        & = (\boldsymbol{Z}_{\mathcal{V}^{c*}}^\top \boldsymbol{Z}_{\mathcal{V}^{c*}})^{-1}\boldsymbol{Z}_{\mathcal{V}^{c*}}^\top\boldsymbol{Y} - (\boldsymbol{Z}_{\mathcal{V}^{c*}}^\top \boldsymbol{Z}_{\mathcal{V}^{c*}})^{-1}\boldsymbol{Z}_{\mathcal{V}^{c*}}^\top \boldsymbol{D} [\boldsymbol{D}^\top(P_{\boldsymbol{Z}}-P_{\boldsymbol{Z}_{\mathcal{V}^{c*}}})\boldsymbol{D}]^{-1}[\boldsymbol{D}^\top(P_{\boldsymbol{Z}}-P_{\boldsymbol{Z}_{\mathcal{V}^{c*}}})\boldsymbol{Y}]\\
        &= \boldsymbol{\alpha}^*_{\mathcal{V}^{c*}}+ (\boldsymbol{Z}_{\mathcal{V}^{c*}}^\top \boldsymbol{Z}_{\mathcal{V}^{c*}})^{-1}\boldsymbol{Z}_{\mathcal{V}^{c*}}^\top\Big[ \boldsymbol{I}- \boldsymbol{D} [\boldsymbol{D}^\top(P_{\boldsymbol{Z}}-P_{\boldsymbol{Z}_{\mathcal{V}^{c*}}})\boldsymbol{D}]^{-1}\boldsymbol{D}^\top(P_{\boldsymbol{Z}}-P_{\boldsymbol{Z}_{\mathcal{V}^{c*}}})\Big]\boldsymbol{\epsilon}.\label{TSLS_bias}
\end{aligned}
\end{equation}
Thus, for any $j\in \mathcal{V}^{*}$, we obtain
\begin{equation}
    \begin{aligned}
          \boldsymbol{R}^{\text{or}}_{\mathcal{V}^{*}} &= \tilde{\boldsymbol{Z}}_{\mathcal{V}^*}^\top \Big\{\boldsymbol{Y}-\tilde{\boldsymbol{Z}}(\widehat{\boldsymbol{\alpha}}^{\text{or}}-\boldsymbol{\alpha}^*+\boldsymbol{\alpha}^*)\Big\}/n \\
         &= \tilde{\boldsymbol{Z}}_{\mathcal{V}^*}^\top (\boldsymbol{Y}-\tilde{\boldsymbol{Z}}{\boldsymbol{\alpha}}^{*})/n+\tilde{\boldsymbol{Z}}_{\mathcal{V}^*}^\top\tilde{\boldsymbol{Z}}_{\mathcal{V}^{c*}}(\boldsymbol{\alpha}_{\mathcal{V}^{c*}}^*-\widehat{\boldsymbol{\alpha}}_{\mathcal{V}^{c*}}^{\text{or}})/n\\
         &= \boldsymbol{R}^{*}_{\mathcal{V}^*} + \boldsymbol{Z}_{\mathcal{V}^*}^\top M_{\widehat{\boldsymbol{D}}}P_{\boldsymbol{Z}_{\mathcal{V}^{c*}}}\Big(\boldsymbol{D}\cdot\operatorname{Bias}(\hat{\beta}_{\text{or}}^{\text{TSLS}})-\boldsymbol{\epsilon}\Big)/n, \label{R_or}
    \end{aligned}
\end{equation}
where $\operatorname{Bias}(\hat{\beta}_{\text{or}}^{\text{TSLS}}) =  \frac{\boldsymbol{D}^\top(P_{\boldsymbol{Z}}-P_{\boldsymbol{Z}_{\mathcal{V}^{c*}}})\boldsymbol{\epsilon}}{\boldsymbol{D}^\top(P_{\boldsymbol{Z}}-P_{\boldsymbol{Z}_{\mathcal{V}^{c*}}})\boldsymbol{D}} $.

To explore the second term, we denote $\bar{\boldsymbol{D}} = P_{\boldsymbol{Z}_{\mathcal{V}^{c*}}}\boldsymbol{D}$, $\bar{\boldsymbol{\epsilon}} = P_{\boldsymbol{Z}_{\mathcal{V}^{c*}}}\boldsymbol{\epsilon}$.  Due to the blockwise formula for the projection matrix, we have
\begin{equation*}
    P_{\boldsymbol{Z}}-P_{\boldsymbol{Z}_{\mathcal{V}^{c*}}} = P_{M_{\boldsymbol{Z}_{\mathcal{V}^{c*}}}\boldsymbol{Z}_{\mathcal{V}^{*}}},
\end{equation*}
which is also a projection matrix.
Thus, we denote $\TTD = P_{M_{\boldsymbol{Z}_{\mathcal{V}^{c*}}}\boldsymbol{Z}_{\mathcal{V}^{*}}}\boldsymbol{D}$, $\TTE = P_{M_{\boldsymbol{Z}_{\mathcal{V}^{c*}}}\boldsymbol{Z}_{\mathcal{V}^{*}}}\boldsymbol{\epsilon}$ and $\operatorname{Bias}(\hat{\beta}_{\text{or}}^{\text{TSLS}}) =  \frac{\boldsymbol{D}^\top(P_{\boldsymbol{Z}}-P_{\boldsymbol{Z}_{\mathcal{V}^{c*}}})\boldsymbol{\epsilon}}{\boldsymbol{D}^\top(P_{\boldsymbol{Z}}-P_{\boldsymbol{Z}_{\mathcal{V}^{c*}}})\boldsymbol{D}}  = \frac{\TTD^\top \TTE}{\TTD^\top\TTD}$. Thus we have
\begin{equation}
    \widehat{\boldsymbol{D}}-\bar{\boldsymbol{D}} = (P_{\boldsymbol{Z}}-P_{\boldsymbol{Z}_{\mathcal{V}^{c*}}})\boldsymbol{D} =  \TTD.\label{D_iden}
\end{equation}
Therefore, the second term inside the RHS of \eqref{R_or} can be reformulated as:
\begin{equation}
\begin{aligned}
      &\boldsymbol{Z}_{\mathcal{V}^*}^\top M_{\widehat{\boldsymbol{D}}}P_{\boldsymbol{Z}_{\mathcal{V}^{c*}}}\Big(\boldsymbol{D}\cdot\operatorname{Bias}(\hat{\beta}_{\text{or}}^{\text{TSLS}})-\boldsymbol{\epsilon}\Big)/n \\
      =&\boldsymbol{Z}_{\mathcal{V}^*}^\top \Bigg(\boldsymbol{I}-\frac{\widehat{\boldsymbol{D}}\widehat{\boldsymbol{D}}^\top}{\widehat{\boldsymbol{D}}^\top\widehat{\boldsymbol{D}}}\Bigg)(\bar{\boldsymbol{D}}\operatorname{Bias}(\hat{\beta}_{\text{or}}^{\text{TSLS}})-\bar{\boldsymbol{\epsilon}})/n\\
      =&\frac{(I)+(II)+(III)+(IV)}{\widehat{\boldsymbol{D}}^\top\widehat{\boldsymbol{D}}/n}, \label{zong}
\end{aligned}
\end{equation}
where
\begin{equation*}
 \begin{aligned}
     (I) =& \operatorname{Bias}(\hat{\beta}_{\text{or}}^{\text{TSLS}})\cdot\frac{\boldsymbol{Z}_{\mathcal{V}^*}^\top\widehat{\boldsymbol{D}}^\top}{n}\cdot\frac{\widehat{\boldsymbol{D}}^\top \bar{\boldsymbol{D}}}{n} = \frac{\TTD^\top \TTE}{\TTD^\top\TTD}\cdot \frac{\widehat{\boldsymbol{D}}^\top\widehat{\boldsymbol{D}}}{n}\cdot\frac{\boldsymbol{Z}_{\mathcal{V}^*}^\top\bar{\boldsymbol{D}}}{n},\\
     (II) =& -\frac{\boldsymbol{Z}_{\mathcal{V}^*}^\top\widehat{\boldsymbol{D}}^\top}{n}\cdot \frac{\widehat{\boldsymbol{D}}\bar{\boldsymbol{\epsilon}}}{n} = -\frac{\widehat{\boldsymbol{D}}^\top\widehat{\boldsymbol{D}}}{n}\cdot\frac{\boldsymbol{Z}_{\mathcal{V}^*}^\top\bar{\boldsymbol{\epsilon}}}{n},\\
     (III) =&-\operatorname{Bias}(\hat{\beta}_{\text{or}}^{\text{TSLS}})\cdot\frac{\boldsymbol{Z}_{\mathcal{V}^*}^\top \widehat{\boldsymbol{D}}}{n}\cdot\frac{\widehat{\boldsymbol{D}}^\top \bar{\boldsymbol{D}}}{n} =-\frac{\TTD^\top \TTE}{\TTD^\top\TTD}\cdot \frac{\boldsymbol{Z}_{\mathcal{V}^*}^\top \widehat{\boldsymbol{D}}}{n}\cdot\frac{\widehat{\boldsymbol{D}}^\top \bar{\boldsymbol{D}}}{n},\\
     (IV) =& \frac{\boldsymbol{Z}_{\mathcal{V}^*}^\top \widehat{\boldsymbol{D}}}{n}\cdot\frac{\widehat{\boldsymbol{D}}^\top \bar{\boldsymbol{\epsilon}}}{n} .
\end{aligned}
\end{equation*}

Now, using identity \eqref{D_iden}, we replace $\frac{\widehat{\boldsymbol{D}}^\top \bar{\boldsymbol{D}}}{n}$ in (III) with $\frac{\widehat{\boldsymbol{D}}^\top (\widehat{\boldsymbol{D}}-\TTD)}{n}$. Therefore we obtain
\begin{equation}
\begin{aligned}
    (I)+(III) =& \frac{\TTD^\top \TTE}{\TTD^\top\TTD}\cdot \frac{\widehat{\boldsymbol{D}}^\top\widehat{\boldsymbol{D}}}{n}\cdot\frac{\boldsymbol{Z}_{\mathcal{V}^*}^\top(\bar{\boldsymbol{D}}-\widehat{\boldsymbol{D}})}{n}+ \frac{\TTD^\top \TTE}{\TTD^\top\TTD}\cdot \frac{\widehat{\boldsymbol{D}}^\top\TTD}{n}\cdot\frac{\boldsymbol{Z}_{\mathcal{V}^*}^\top\widehat{\boldsymbol{D}}}{n}\\
    =&\frac{\TTD^\top \TTE}{\TTD^\top\TTD}\cdot \frac{\widehat{\boldsymbol{D}}^\top\widehat{\boldsymbol{D}}}{n}\cdot\frac{-\boldsymbol{Z}_{\mathcal{V}^*}^\top\TTD}{n}+ \frac{\TTD^\top \TTE}{\TTD^\top\TTD}\cdot \frac{\widehat{\boldsymbol{D}}^\top\TTD}{n}\cdot\frac{\boldsymbol{Z}_{\mathcal{V}^*}^\top\widehat{\boldsymbol{D}}}{n}.
    \end{aligned}
\end{equation}
Hence, \eqref{zong} becomes
\begin{equation}
    \begin{aligned}
        & \frac{(I)+(II)+(III)+(IV)}{\widehat{\boldsymbol{D}}^\top\widehat{\boldsymbol{D}}/n}\\
         =&\frac{-\frac{\widehat{\boldsymbol{D}}^\top\widehat{\boldsymbol{D}}}{n} \Big(\frac{\boldsymbol{Z}_{\mathcal{V}^*}^\top\TTD}{n}\cdot\frac{\TTD^\top \TTE}{\TTD^\top\TTD} +\frac{\boldsymbol{Z}_{\mathcal{V}^*}^\top\bar{\boldsymbol{\epsilon}}}{n}\Big) + \frac{\boldsymbol{Z}_{\mathcal{V}^*}^\top \widehat{\boldsymbol{D}}}{n}\Big(\frac{\widehat{\boldsymbol{D}}^\top \bar{\boldsymbol{\epsilon}}}{n} +\frac{\widehat{\boldsymbol{D}}^\top\TTD}{n}\cdot
        \frac{\TTD^\top \TTE}{\TTD^\top\TTD} \Big) }{\frac{\widehat{\boldsymbol{D}}^\top\widehat{\boldsymbol{D}}}{n}}\\
        =&-\frac{\boldsymbol{Z}_{\mathcal{V}^*}^\top}{n}(P_{\TTD}\TTE+\bar{\boldsymbol{\epsilon}})+\frac{\frac{\boldsymbol{Z}_{\mathcal{V}^*}^\top \widehat{\boldsymbol{D}}}{n}\cdot \frac{\widehat{\boldsymbol{D}}^\top}{n}(P_{\TTD}\TTE+\bar{\boldsymbol{\epsilon}}) }{\frac{\widehat{\boldsymbol{D}}^\top\widehat{\boldsymbol{D}}}{n}},\label{B26}
    \end{aligned}
\end{equation}
where $P_{\TTD} =\TTD\TTD^\top /\TTD^\top \TTD$ is a projection matrix of $\TTD$. Notice
\begin{equation}
\begin{aligned}
         P_{\TTD}\TTE+\bar{\boldsymbol{\epsilon}} =  \{P_{\TTD}(P_{\boldsymbol{Z}} -P_{\boldsymbol{Z}_{\mathcal{V}^{c*}}})+P_{\boldsymbol{Z}_{\mathcal{V}^{c*}}}\}\boldsymbol{\epsilon} = (P_{\TTD}+P_{\boldsymbol{Z}_{\mathcal{V}^{c*}}})\boldsymbol{\epsilon},\\
         \widehat{\boldsymbol{D}}^\top\TTD = \boldsymbol{D}^\top P_{\boldsymbol{Z}}(P_{\boldsymbol{Z}} -P_{\boldsymbol{Z}_{\mathcal{V}^{c*}}})\boldsymbol{D} = \boldsymbol{D}^\top (P_{\boldsymbol{Z}} -P_{\boldsymbol{Z}_{\mathcal{V}^{c*}}})\boldsymbol{D}=\TTD^\top\TTD.
\end{aligned}
\end{equation}
Thus, we are able to further simplify \eqref{B26} as
\begin{equation}
    \begin{aligned}
         &-\frac{\boldsymbol{Z}_{\mathcal{V}^*}^\top}{n}(P_{\TTD}\TTE+\bar{\boldsymbol{\epsilon}})+\frac{\frac{\boldsymbol{Z}_{\mathcal{V}^*}^\top \widehat{\boldsymbol{D}}}{n}\cdot \frac{\widehat{\boldsymbol{D}}^\top}{n}(P_{\TTD}\TTE+\bar{\boldsymbol{\epsilon}}) }{\frac{\widehat{\boldsymbol{D}}^\top\widehat{\boldsymbol{D}}}{n}}\\
         =&\frac{-\boldsymbol{Z}_{\mathcal{V}^*}^\top(P_{\TTD}+P_{\boldsymbol{Z}_{\mathcal{V}^{c*}}})\boldsymbol{\epsilon}}{n}+\frac{\frac{\boldsymbol{Z}_{\mathcal{V}^*}^\top \widehat{\boldsymbol{D}}}{n}\cdot \frac{\widehat{\boldsymbol{D}}^\top}{n}(P_{\TTD}+P_{\boldsymbol{Z}_{\mathcal{V}^{c*}}})\boldsymbol{\epsilon} }{\frac{\widehat{\boldsymbol{D}}^\top\widehat{\boldsymbol{D}}}{n}}\\
         =& \frac{-\frac{\boldsymbol{Z}_{\mathcal{V}^*}^\top \TTD}{n}\cdot \frac{\TTD^\top\boldsymbol{\epsilon}}{n}}{\frac{\TTD^\top\TTD}{n}}+\frac{-\boldsymbol{Z}_{\mathcal{V}^*}^\top P_{\boldsymbol{Z}_{\mathcal{V}^{c*}}}\boldsymbol{\epsilon}}{n} +\frac{\frac{\boldsymbol{Z}_{\mathcal{V}^*}^\top \widehat{\boldsymbol{D}}}{n}}{\frac{\widehat{\boldsymbol{D}}^\top\widehat{\boldsymbol{D}}}{n}}\cdot \Bigg\{\frac{\frac{\widehat{\boldsymbol{D}}^\top \TTD}{n}\cdot \frac{\TTD^\top\boldsymbol{\epsilon}}{n}}{\frac{\TTD^\top\TTD}{n}}+\frac{\widehat{\boldsymbol{D}}^\top P_{\boldsymbol{Z}_{\mathcal{V}^{c*}}}\boldsymbol{\epsilon}}{n}\Bigg\}\\
         =& \frac{-\frac{\boldsymbol{Z}_{\mathcal{V}^*}^\top \TTD}{n}\cdot \frac{\TTD^\top\boldsymbol{\epsilon}}{n}}{\frac{\TTD^\top\TTD}{n}}+\frac{-\boldsymbol{Z}_{\mathcal{V}^*}^\top P_{\boldsymbol{Z}_{\mathcal{V}^{c*}}}\boldsymbol{\epsilon}}{n}+\frac{\frac{\boldsymbol{Z}_{\mathcal{V}^*}^\top \widehat{\boldsymbol{D}}}{n}}{\frac{\widehat{\boldsymbol{D}}^\top\widehat{\boldsymbol{D}}}{n}}\cdot\frac{(\TTD^\top+\widehat{\boldsymbol{D}}^\top P_{\boldsymbol{Z}_{\mathcal{V}^{c*}}})\boldsymbol{\epsilon}}{n}\\
         =&\frac{-\frac{\boldsymbol{Z}_{\mathcal{V}^*}^\top \TTD}{n}\cdot \frac{\TTD^\top\boldsymbol{\epsilon}}{n}}{\frac{\TTD^\top\TTD}{n}}+\frac{-\boldsymbol{Z}_{\mathcal{V}^*}^\top P_{\boldsymbol{Z}_{\mathcal{V}^{c*}}}\boldsymbol{\epsilon}}{n}+\frac{\frac{\boldsymbol{Z}_{\mathcal{V}^*}^\top \widehat{\boldsymbol{D}}}{n} \cdot\frac{\widehat{\boldsymbol{D}}^\top \boldsymbol{\epsilon}}{n}}{\frac{\widehat{\boldsymbol{D}}^\top\widehat{\boldsymbol{D}}}{n} } \label{Fianl_1234}
    \end{aligned}
\end{equation}

Combining \eqref{Fianl_1234}, \eqref{R^*} and \eqref{R_or}, we obtain
\begin{equation}
    R_{\mathcal{V}^*}^{\text{or}} = \frac{\boldsymbol{Z}_{\mathcal{V}^*}^\top \TTE}{n}-\frac{\frac{\boldsymbol{Z}_{\mathcal{V}^*}^\top \TTD}{n}\cdot \frac{\TTD^\top\boldsymbol{\epsilon}}{n}}{\frac{\TTD^\top\TTD}{n}},
\end{equation}
which has a similar structure to \eqref{R^*}. Consequently, with an analogous argument, we derive
\begin{equation}
    \|\boldsymbol{R}^{\text{or}}\|_\infty = \|\boldsymbol{R}_{\mathcal{V}^*}^{\text{or}}\|_\infty=O_p\Big(\frac{p_{\mathcal{V}^*}}{n}\cdot\frac{\|\Tilde{\Tilde{\boldsymbol{Q}}}_n\boldsymbol{\gamma}_{\mathcal{V}^*}^*\|_{\infty}}{\boldsymbol{\gamma}_{\mathcal{V}^*}^{*\top}\Tilde{\Tilde{\boldsymbol{Q}}}_n\boldsymbol{\gamma}_{\mathcal{V}^*}^*}+\sqrt{\frac{\log p_{\mathcal{V}^*}}{n}}\Big),
\end{equation}
where $\Tilde{\Tilde{\boldsymbol{Q}}}_n = \boldsymbol{Z}_{\mathcal{V}^*}^\top (  P_{\boldsymbol{Z}}-P_{\boldsymbol{Z}_{\mathcal{V}^{c*}}})\boldsymbol{Z}_{\mathcal{V}^*}/n$.\qed

\end{proof}

\subsection{Proof of Theorem \ref{Theorem 3}}\label{proof of theorem 3}
The proof of Theorem \ref{Theorem 3} is similar to the technique of \cite{feng2019sorted}. However, we pay more careful attention to the issue of endogeneity of $\tilde{\boldsymbol{Z}}$ and $\xi$, which is discussed in Lemma \ref{Lemma 2}.

To proceed, we first prove a useful inequality, known as basic inequality in sparse regression literature. Denote $\boldsymbol{\alpha}^0$ as $\boldsymbol{\alpha}^*$ or $\widehat{\boldsymbol{\alpha}}^{\text{or}}$, sharing the same support on $\mathcal{V}^{c*}$, and  $\boldsymbol{R}^0$ is $\boldsymbol{R}^*$ or $\boldsymbol{R}^{\text{or}}$  defined in Lemma \ref{Lemma 2} upon the choice of $\boldsymbol{\alpha}^0$.
\begin{lemma}\label{Lemma A2}
Suppose $\widehat{\boldsymbol{\alpha}}$ is a solution of (\ref{form 2}) and denote $\boldsymbol{\Delta} = \widehat{\boldsymbol{\alpha}}-\boldsymbol{\alpha}^0$ and let
\begin{equation}
\boldsymbol{\omega}(\boldsymbol{\alpha}) = \Big[\frac{\partial}{\partial \boldsymbol{t}}p_\lambda^{\text{MCP}} (\boldsymbol{t})|_{\boldsymbol{t} = \boldsymbol{\alpha}} - \tilde{\boldsymbol{Z}}^\top (\boldsymbol{Y}-\tilde{\boldsymbol{Z}}\boldsymbol{\alpha})/n\Big]/\lambda
\end{equation}
to measure the scaled violation of first order condition in (\ref{form 1}), then
\begin{equation}
    \boldsymbol{\Delta}^\top\boldsymbol{C}_n\boldsymbol{\Delta}\leq -\lambda \boldsymbol{\Delta}^\top \boldsymbol{\omega}(\boldsymbol{\alpha}^0)+1/\rho \|\boldsymbol{\Delta}\|_2^2. \label{basic_ine_1}
\end{equation}
Further, choose a proper sub-derivative at the origin: $\frac{\partial}{\partial {t}}p_\lambda^{\text{MCP}} ({t})|_{{t} =\alpha^0_j} =\lambda sgn(\Delta_j), \forall j \in \mathcal{V}^*$,
\begin{equation}
    \boldsymbol{\Delta}^\top\boldsymbol{C}_n\boldsymbol{\Delta}+   (\lambda-\|\boldsymbol{R}_{\mathcal{V}^*}\|_{\infty} )\|\boldsymbol{\Delta}_{\mathcal{V}^{*}}\|_1  \leq   -\lambda \boldsymbol{\Delta}_{\mathcal{V}^{c*}}^\top \boldsymbol{\omega}_{\mathcal{V}^{c*}}(\boldsymbol{\alpha}^0)+1/\rho \|\boldsymbol{\Delta}\|_2^2.
\end{equation}
\end{lemma}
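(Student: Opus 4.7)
The plan is to derive both inequalities from the KKT condition (\ref{form 1}) characterizing $\widehat{\boldsymbol{\alpha}}$, combined with a monotonicity property tailored to the MCP penalty. The crucial property is that the function $t \mapsto p_\lambda^{\text{MCP}}(t) + \frac{t^2}{2\rho}$ is convex on $\mathbb{R}$ (this is what the convexity level $1/\rho$ refers to), so its derivative is monotone nondecreasing. Coordinate-wise this gives $(x-y)\bigl({p_\lambda^{\text{MCP}}}^{\prime}(x) - {p_\lambda^{\text{MCP}}}^{\prime}(y)\bigr) \geq -\frac{1}{\rho}(x-y)^2$, and summing over components yields
\begin{equation*}
\boldsymbol{\Delta}^\top\Bigl[\tfrac{\partial}{\partial \boldsymbol{t}}p_\lambda^{\text{MCP}}(\boldsymbol{t})\big|_{\boldsymbol{t}=\widehat{\boldsymbol{\alpha}}} - \tfrac{\partial}{\partial \boldsymbol{t}}p_\lambda^{\text{MCP}}(\boldsymbol{t})\big|_{\boldsymbol{t}=\boldsymbol{\alpha}^0}\Bigr] \geq -\tfrac{1}{\rho}\|\boldsymbol{\Delta}\|_2^2.
\end{equation*}

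For part (a), I will evaluate the KKT condition at $\widehat{\boldsymbol{\alpha}}$, namely $\tilde{\boldsymbol{Z}}^\top(\boldsymbol{Y}-\tilde{\boldsymbol{Z}}\widehat{\boldsymbol{\alpha}})/n = \tfrac{\partial}{\partial \boldsymbol{t}}p_\lambda^{\text{MCP}}(\boldsymbol{t})|_{\boldsymbol{t}=\widehat{\boldsymbol{\alpha}}}$, and subtract $\tilde{\boldsymbol{Z}}^\top(\boldsymbol{Y}-\tilde{\boldsymbol{Z}}\boldsymbol{\alpha}^0)/n$ from both sides to obtain $\boldsymbol{C}_n\boldsymbol{\Delta} = -\tfrac{\partial}{\partial \boldsymbol{t}}p_\lambda^{\text{MCP}}|_{\widehat{\boldsymbol{\alpha}}} + \tilde{\boldsymbol{Z}}^\top(\boldsymbol{Y}-\tilde{\boldsymbol{Z}}\boldsymbol{\alpha}^0)/n$. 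Using the definition $\lambda\boldsymbol{\omega}(\boldsymbol{\alpha}^0) = \tfrac{\partial}{\partial \boldsymbol{t}}p_\lambda^{\text{MCP}}|_{\boldsymbol{\alpha}^0} - \tilde{\boldsymbol{Z}}^\top(\boldsymbol{Y}-\tilde{\boldsymbol{Z}}\boldsymbol{\alpha}^0)/n$ to replace the residual term and then multiplying on the left by $\boldsymbol{\Delta}^\top$, the bracketed difference of penalty gradients is bounded below using the monotonicity inequality above, and (\ref{basic_ine_1}) follows.

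For part (b), I will decompose $-\lambda\boldsymbol{\Delta}^\top\boldsymbol{\omega}(\boldsymbol{\alpha}^0)$ according to $\mathcal{V}^*$ and $\mathcal{V}^{c*}$. For $j\in\mathcal{V}^*$ we have $\alpha^0_j=0$, at which point the subdifferential of $p_\lambda^{\text{MCP}}$ is the interval $[-\lambda,\lambda]$; selecting the element $\lambda\operatorname{sgn}(\Delta_j)$ gives $\lambda\omega_j(\boldsymbol{\alpha}^0)=\lambda\operatorname{sgn}(\Delta_j)-R^0_j$, hence $\Delta_j\lambda\omega_j(\boldsymbol{\alpha}^0)=\lambda|\Delta_j|-\Delta_j R^0_j$. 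Summing over $\mathcal{V}^*$ and applying H\"older's inequality, $|\boldsymbol{\Delta}_{\mathcal{V}^*}^\top \boldsymbol{R}^0_{\mathcal{V}^*}|\leq \|\boldsymbol{R}_{\mathcal{V}^*}\|_\infty\|\boldsymbol{\Delta}_{\mathcal{V}^*}\|_1$, so $-\lambda\boldsymbol{\Delta}_{\mathcal{V}^*}^\top\boldsymbol{\omega}_{\mathcal{V}^*}(\boldsymbol{\alpha}^0)\leq -(\lambda-\|\boldsymbol{R}_{\mathcal{V}^*}\|_\infty)\|\boldsymbol{\Delta}_{\mathcal{V}^*}\|_1$. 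Substituting this into (\ref{basic_ine_1}) and rearranging gives the desired inequality.

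The calculations are essentially algebraic once the MCP concavity correction is in place; the only delicate point is the choice of sub-derivative at the origin for $j\in\mathcal{V}^*$, which is exactly what produces the clean $\ell_1$-margin term $(\lambda-\|\boldsymbol{R}_{\mathcal{V}^*}\|_\infty)\|\boldsymbol{\Delta}_{\mathcal{V}^*}\|_1$ needed later to invoke the restricted-cone membership of $\boldsymbol{\Delta}$ via Lemma \ref{Lemma 2}. I anticipate no further obstacles.
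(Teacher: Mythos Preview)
Your proposal is correct and follows essentially the same route as the paper's proof: derive $C_n\boldsymbol{\Delta}=-\lambda\boldsymbol{\omega}(\boldsymbol{\alpha}^0)+\bigl[\partial p_\lambda^{\text{MCP}}|_{\boldsymbol{\alpha}^0}-\partial p_\lambda^{\text{MCP}}|_{\widehat{\boldsymbol{\alpha}}}\bigr]$ from the KKT identity, bound the bracketed term by $1/\rho\,\|\boldsymbol{\Delta}\|_2^2$ via the MCP concavity level, and then split $\boldsymbol{\omega}(\boldsymbol{\alpha}^0)$ over $\mathcal{V}^*$ and $\mathcal{V}^{c*}$ using the sub-derivative choice $\lambda\,\operatorname{sgn}(\Delta_j)$ together with H\"older's inequality on $\boldsymbol{\Delta}_{\mathcal{V}^*}^\top\boldsymbol{R}^0_{\mathcal{V}^*}$. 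The only cosmetic difference is that the paper phrases the concavity bound as ``convexity level up to $1/\rho$'' rather than your explicit monotonicity of $t\mapsto p_\lambda^{\text{MCP}}(t)+t^2/(2\rho)$, but these are the same fact.
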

\begin{proof}
Because $\boldsymbol{\omega}(\widehat{\boldsymbol{\alpha}}) = \boldsymbol{0}$ in (\ref{form 1}), we have
$\tilde{\boldsymbol{Z}}^\top \boldsymbol{Y}/n    =  \frac{\partial}{\partial \boldsymbol{t}}p_\lambda^{\text{MCP}} (\boldsymbol{t})|_{\boldsymbol{t} = \widehat{\boldsymbol{\alpha}}}+\tilde{\boldsymbol{Z}}^\top \tilde{\boldsymbol{Z}}\widehat{\boldsymbol{\alpha}}/n
$. Recall $\boldsymbol{C}_n = \tilde{\boldsymbol{Z}}^\top \tilde{\boldsymbol{Z}}/n$. Thus, we replace $\tilde{\boldsymbol{Z}}^\top \boldsymbol{Y}/n$ in $\boldsymbol{\omega}(\boldsymbol{\alpha}^*)$ and obtain
\begin{equation*}
    \boldsymbol{C}_n\boldsymbol{\Delta} = -\lambda \boldsymbol{\omega}(\boldsymbol{\alpha}^0)+ \frac{\partial}{\partial \boldsymbol{t}}p_\lambda^{\text{MCP}} (\boldsymbol{t})|_{\boldsymbol{t} = \boldsymbol{\alpha}^0} -\frac{\partial}{\partial \boldsymbol{t}}p_\lambda^{\text{MCP}} (\boldsymbol{t})|_{\boldsymbol{t} = \widehat{\boldsymbol{\alpha}}}.
\end{equation*}
Multiply $\boldsymbol{\Delta}^\top$ on both sides, we have
\begin{equation}
    \begin{aligned}
        \boldsymbol{\Delta}^\top\boldsymbol{C}_n\boldsymbol{\Delta} &= -\lambda\boldsymbol{\Delta}^\top \boldsymbol{\omega}(\boldsymbol{\alpha}^0)+\boldsymbol{\Delta}^\top\Big( \frac{\partial}{\partial \boldsymbol{t}}p_\lambda^{\text{MCP}} (\boldsymbol{t})|_{\boldsymbol{t} = \boldsymbol{\alpha}^0} -\frac{\partial}{\partial \boldsymbol{t}}p_\lambda^{\text{MCP}} (\boldsymbol{t})|_{\boldsymbol{t} = \widehat{\boldsymbol{\alpha}}}\Big)\\
        & \leq -\lambda\boldsymbol{\Delta}^\top \boldsymbol{\omega}(\boldsymbol{\alpha}^0)+1/\rho \|\boldsymbol{\Delta}\|_2^2,
    \end{aligned}
\end{equation}
where the second line follows the convexity level of MCP up to $1/\rho$ and concludes (\ref{basic_ine_1}).
Moreover, we further examine the terms in $\boldsymbol{\omega}(\boldsymbol{\alpha}^0)$ with respect to $\mathcal{V}^*$. We obtain,
\begin{equation}
\begin{aligned}
    \boldsymbol{\omega}_{\mathcal{V}^*}(\boldsymbol{\alpha}^0) &= \Big[\frac{\partial}{\partial \boldsymbol{t}}p_\lambda^{\text{MCP}} (\boldsymbol{t})|_{\boldsymbol{t} = \boldsymbol{\alpha}^0_{\mathcal{V}^*}} - \tilde{\boldsymbol{Z}}^\top_{\mathcal{V}^*} (\boldsymbol{Y}-\tilde{\boldsymbol{Z}}\boldsymbol{\alpha}^0)/n\Big]/\lambda \\
    & = \Big[ \frac{\partial}{\partial \boldsymbol{t}}p_\lambda^{\text{MCP}} (\boldsymbol{t})|_{\boldsymbol{t} = \boldsymbol{0}_{\mathcal{V}^*}}-\boldsymbol{R}^0_{\mathcal{V}^*}\Big]/\lambda.
\end{aligned}
\end{equation}
Thus, we rewrite (\ref{basic_ine_1}) as
\begin{equation}
\begin{aligned}
     -\lambda \boldsymbol{\Delta}_{\mathcal{V}^{c*}}^\top \boldsymbol{\omega}_{\mathcal{V}^{c*}}(\boldsymbol{\alpha}^0)+1/\rho \|\boldsymbol{\Delta}\|_2^2
      &\geq \boldsymbol{\Delta}^\top\boldsymbol{C}_n\boldsymbol{\Delta} + \lambda \boldsymbol{\Delta}_{\mathcal{V}^{*}}^\top \boldsymbol{\omega}_{\mathcal{V}^{*}}(\boldsymbol{\alpha}^0) \\
      & = \boldsymbol{\Delta}^\top\boldsymbol{C}_n\boldsymbol{\Delta}+  \lambda \boldsymbol{\Delta}_{\mathcal{V}^{*}}^\top \Big[ \frac{\partial}{\partial \boldsymbol{t}}p_\lambda^{\text{MCP}} (\boldsymbol{t})|_{\boldsymbol{t} = \boldsymbol{0}_{\mathcal{V}^*}}-\boldsymbol{R}^0_{\mathcal{V}^*}\Big]/\lambda \\
    & \geq  \boldsymbol{\Delta}^\top\boldsymbol{C}_n\boldsymbol{\Delta}+   \boldsymbol{\Delta}_{\mathcal{V}^{*}}^\top \Big[ \frac{\partial}{\partial \boldsymbol{t}}p_\lambda^{\text{MCP}} (\boldsymbol{t})|_{\boldsymbol{t} = \boldsymbol{0}_{\mathcal{V}^*}}\Big] -\|\boldsymbol{\Delta}_{\mathcal{V}^{*}}\|_1\|\boldsymbol{R}^0_{\mathcal{V}^*}\|_{\infty} \\
    & = \boldsymbol{\Delta}^\top\boldsymbol{C}_n\boldsymbol{\Delta}+   (\lambda-\|\boldsymbol{R}^0_{\mathcal{V}^*}\|_{\infty} )\|\boldsymbol{\Delta}_{\mathcal{V}^{*}}\|_1,
\end{aligned}
\end{equation}
where the last equality holds for a proper sub-derivative at the origin, i.e., for $j \in \mathcal{V}^*$, $\frac{\partial}{\partial {t}}p_\lambda^{\text{MCP}} ({t})|_{{t} =\alpha^0_j} =\lambda sgn(\Delta_j).$\qed
\end{proof}

Under the event $\Omega =\{ \boldsymbol{\omega}_{\mathcal{V}^{c*}}(\widehat{\boldsymbol{\alpha}}^{\text{or}}) = \boldsymbol{0}\}$, we now prove the estimation error $\boldsymbol{\Delta}$ belongs to the cone $\mathscr{C}(\mathcal{V}^* ; \xi)$, where $\xi$ is chosen in Lemma \ref{Lemma 2}. Recall $\mathscr{B}(\lambda,\rho) = \{\widehat{\boldsymbol{\alpha}} \text{ in} \,\,(\ref{form 2}):  \lambda \geq \zeta, \rho > K^{-2}_{\mathscr{C}}(\mathcal{V}^*,\xi)\lor 1 \}$ as a collection of $\widehat{\boldsymbol{\alpha}}$ computed in (\ref{form 2}) through a broad class of MCP, in which $\zeta$ is define in (\ref{margin}).

\begin{lemma}\label{Lemma A3}
Under the event $\Omega =\{ \boldsymbol{\omega}_{\mathcal{V}^{c*}}(\widehat{\boldsymbol{\alpha}}^{\text{or}}) = \boldsymbol{0}\}$, consider $\widehat{\boldsymbol{\alpha}}_{\lambda_1}, \widehat{\boldsymbol{\alpha}}_{\lambda_2} \in \mathscr{B}(\lambda,\rho)$ with different penalty levels $\lambda_1$ and  $\lambda_2$, and denote their estimation errors as $\boldsymbol{\Delta}_1 =\widehat{\boldsymbol{\alpha}}_{\lambda_1}-\widehat{\boldsymbol{\alpha}}^{\text{or}}$ and $\boldsymbol{\Delta}_2 =\widehat{\boldsymbol{\alpha}}_{\lambda_2}-\widehat{\boldsymbol{\alpha}}^{\text{or}}$, respectively. Define $a_1 =1 -  \|\boldsymbol{R^{\text{or}}}_{\mathcal{V}^{*}}\|_\infty/\lambda$, $a_2 = a_1\xi\rho/[2(\xi+1)]$,  $a_{3}=a_{1} \xi /\left(\xi+1+a_{1}\right) $, and $a_0 = a_2 \wedge \{a_2a_3/(1\lor \xi)\}$. Then, once $\boldsymbol{\Delta}_1 \in \mathscr{C}(\mathcal{V}^* ; \xi)$ and $\|\boldsymbol{\Delta}_1-\boldsymbol{\Delta}_2\|_1\leq a_0 \lambda $ hold, we conclude $\boldsymbol{\Delta}_2 \in \mathscr{C}(\mathcal{V}^* ; \xi)$.
\end{lemma}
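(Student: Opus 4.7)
The plan is to proceed by contradiction. Suppose $\boldsymbol{\Delta}_2 \notin \mathscr{C}(\mathcal{V}^*; \xi)$, i.e., $\|\boldsymbol{\Delta}_{2,\mathcal{V}^*}\|_1 > \xi \|\boldsymbol{\Delta}_{2,\mathcal{V}^{c*}}\|_1$. The goal is to show this conflicts with both (i) the first-order optimality of $\widehat{\boldsymbol{\alpha}}_{\lambda_2}$, which forces a \emph{lower} bound on $\|\boldsymbol{\Delta}_{2,\mathcal{V}^*}\|_1$ of order $a_2\lambda$, and (ii) the $\ell_1$-closeness to $\boldsymbol{\Delta}_1 \in \mathscr{C}(\mathcal{V}^*;\xi)$, which forces an \emph{upper} bound. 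The calibration of $a_0$ ensures the two bounds are incompatible.

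For the lower bound, I would invoke the refined basic inequality of Lemma \ref{Lemma A2} at the reference $\boldsymbol{\alpha}^0=\widehat{\boldsymbol{\alpha}}^{\text{or}}$. On $\mathcal{V}^{c*}$ the residual vanishes by the OLS first-order condition defining $\widehat{\boldsymbol{\alpha}}^{\text{or}}$, while under the event $\Omega$ the MCP sub-derivative at $\widehat{\boldsymbol{\alpha}}^{\text{or}}_{\mathcal{V}^{c*}}$ is zero because $|\widehat{\alpha}^{\text{or}}_j| > \lambda\rho$ puts those coordinates in the flat region of the MCP; together these give $\boldsymbol{\omega}_{\mathcal{V}^{c*}}(\widehat{\boldsymbol{\alpha}}^{\text{or}})=\boldsymbol{0}$. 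Combined with $\lambda \geq \zeta \geq \|\boldsymbol{R}^{\text{or}}_{\mathcal{V}^*}\|_\infty/(1-a_1)$, the inequality reduces to
\[
\boldsymbol{\Delta}_2^\top \boldsymbol{C}_n \boldsymbol{\Delta}_2 + a_1 \lambda \|\boldsymbol{\Delta}_{2,\mathcal{V}^*}\|_1 \;\leq\; \rho^{-1}\|\boldsymbol{\Delta}_2\|_2^2 .
\]
Dropping the non-negative quadratic form on the left, using $\|\boldsymbol{\Delta}_2\|_2^2 \leq \|\boldsymbol{\Delta}_2\|_1^2$, and substituting the cone-violation consequence $\|\boldsymbol{\Delta}_2\|_1 < (\xi+1)\xi^{-1}\|\boldsymbol{\Delta}_{2,\mathcal{V}^*}\|_1$ yields $\|\boldsymbol{\Delta}_{2,\mathcal{V}^*}\|_1 > a_1\rho\xi^2(\xi+1)^{-2}\lambda = 2a_2\xi(\xi+1)^{-1}\lambda$.

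For the upper bound, I would first exploit $\boldsymbol{\Delta}_1 \in \mathscr{C}(\mathcal{V}^*;\xi)$ and the hypothesis $\|\boldsymbol{\Delta}_1-\boldsymbol{\Delta}_2\|_1 \leq a_0\lambda$ to write
\[
\|\boldsymbol{\Delta}_{2,\mathcal{V}^*}\|_1 \leq \|\boldsymbol{\Delta}_{1,\mathcal{V}^*}\|_1 + a_0\lambda \leq \xi\|\boldsymbol{\Delta}_{1,\mathcal{V}^{c*}}\|_1 + a_0\lambda \leq \xi\|\boldsymbol{\Delta}_{2,\mathcal{V}^{c*}}\|_1 + (\xi+1)a_0\lambda .
\]
To turn this into a useful absolute bound, I would replace $\|\boldsymbol{\Delta}_{2,\mathcal{V}^{c*}}\|_1$ via the contradiction hypothesis $\|\boldsymbol{\Delta}_{2,\mathcal{V}^{c*}}\|_1 < \xi^{-1}\|\boldsymbol{\Delta}_{2,\mathcal{V}^*}\|_1$, and then apply the basic inequality at penalty $\lambda_1$ to $\widehat{\boldsymbol{\alpha}}_{\lambda_1}$ to control $\|\boldsymbol{\Delta}_{1,\mathcal{V}^{c*}}\|_1$ (via $\boldsymbol{\Delta}_1^\top C_n \boldsymbol{\Delta}_1 \geq K_{\mathscr{C}}^2\|\boldsymbol{\Delta}_1\|_2^2$ and $\rho > K_{\mathscr{C}}^{-2}$, which pins $\|\boldsymbol{\Delta}_{1,\mathcal{V}^{c*}}\|_1$ against $\|\boldsymbol{\Delta}_{1,\mathcal{V}^*}\|_1$). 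Bookkeeping the resulting inequality produces $\|\boldsymbol{\Delta}_{2,\mathcal{V}^*}\|_1 \leq (\xi+1+a_1)a_1^{-1}\xi^{-1}\cdot (1\vee\xi)\cdot a_0\lambda$, i.e.\ an upper bound of size $a_3^{-1}(1\vee\xi)a_0\lambda$. With $a_0 \leq a_2a_3/(1\vee\xi)$, this gives $\|\boldsymbol{\Delta}_{2,\mathcal{V}^*}\|_1 \leq a_2\lambda < 2a_2\xi(\xi+1)^{-1}\lambda$ whenever $\xi>(\xi+1)/2$, and otherwise the alternative branch $a_0\leq a_2$ closes the gap; either way the lower bound from Step~2 is violated.

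The main obstacle will be Step~3. A naive triangle-inequality bound on $\|\boldsymbol{\Delta}_{2,\mathcal{V}^*}\|_1$ is merely tautological: the $a_0\lambda$ deviation appears symmetrically on both sides of the cone condition, so the refinement must come from invoking the basic inequality for $\widehat{\boldsymbol{\alpha}}_{\lambda_1}$ jointly with the RE condition of Lemma \ref{Lemma 2}. The factor $a_3 = a_1\xi/(\xi+1+a_1)$ emerges precisely from solving the resulting two-sided constraint on $\|\boldsymbol{\Delta}_{1,\mathcal{V}^{c*}}\|_1$; the $1\vee\xi$ factor arises from swapping between the two blocks when $\xi\geq 1$ (where cone constants are amplified) versus $\xi<1$ (where the direct bound suffices). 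Keeping track of these constants through the algebra is delicate but conceptually routine once the contradiction framework is set up.
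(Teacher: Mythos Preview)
Your overall strategy is quite different from the paper's, and your Step~3 has a genuine gap. The paper does \emph{not} argue by contradiction; it argues directly via a two-case split on the size of $\|\boldsymbol{\Delta}_1\|_1$.

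\textbf{Case 1} ($\|\boldsymbol{\Delta}_1\|_1 \leq a_2\lambda$): the paper bounds $\|\boldsymbol{\Delta}_2\|_\infty \leq \|\boldsymbol{\Delta}_1\|_1 + \|\boldsymbol{\Delta}_1-\boldsymbol{\Delta}_2\|_1 \leq 2a_2\lambda$ and then uses the \emph{linear} bound $\|\boldsymbol{\Delta}_2\|_2^2 \leq \|\boldsymbol{\Delta}_2\|_\infty\|\boldsymbol{\Delta}_2\|_1 \leq 2a_2\lambda\|\boldsymbol{\Delta}_2\|_1$, not your quadratic $\|\boldsymbol{\Delta}_2\|_2^2\leq\|\boldsymbol{\Delta}_2\|_1^2$. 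Plugging this into the basic inequality $a_1\lambda\|\boldsymbol{\Delta}_{2,\mathcal{V}^*}\|_1 \leq \rho^{-1}\|\boldsymbol{\Delta}_2\|_2^2$ immediately gives $\|\boldsymbol{\Delta}_{2,\mathcal{V}^*}\|_1 \leq \frac{\xi}{\xi+1}\|\boldsymbol{\Delta}_2\|_1$, i.e.\ cone membership. This is exactly what calibrates $a_2 = a_1\xi\rho/[2(\xi+1)]$; your crude $\ell_1^2$ bound does not recover this constant.

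\textbf{Case 2} ($\|\boldsymbol{\Delta}_1\|_1 \geq a_2\lambda$): here the paper applies the basic inequality plus RE to $\boldsymbol{\Delta}_1\in\mathscr{C}$ and obtains $\boldsymbol{\Delta}_{1,\mathcal{V}^*}=\boldsymbol{0}$ outright --- not a ``two-sided constraint pinning $\|\boldsymbol{\Delta}_{1,\mathcal{V}^{c*}}\|_1$ against $\|\boldsymbol{\Delta}_{1,\mathcal{V}^*}\|_1$'' as you describe. Then the paper compares the cone defect directly:
\[
\|\boldsymbol{\Delta}_{2,\mathcal{V}^*}\|_1 - \xi\|\boldsymbol{\Delta}_{2,\mathcal{V}^{c*}}\|_1 \;\leq\; \|\boldsymbol{\Delta}_{1,\mathcal{V}^*}\|_1 - \xi\|\boldsymbol{\Delta}_{1,\mathcal{V}^{c*}}\|_1 + (1\vee\xi)\|\boldsymbol{\Delta}_1-\boldsymbol{\Delta}_2\|_1,
\]
and bounds the last term by $a_3\|\boldsymbol{\Delta}_1\|_1$ using $\|\boldsymbol{\Delta}_1\|_1\geq a_2\lambda$ and $a_0\leq a_2a_3/(1\vee\xi)$. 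With $\boldsymbol{\Delta}_{1,\mathcal{V}^*}=\boldsymbol{0}$ the right-hand side becomes $(a_3-\xi)\|\boldsymbol{\Delta}_{1,\mathcal{V}^{c*}}\|_1 \leq 0$. That is the origin of $a_3$; it does not come from any ``bookkeeping'' of a constraint on $\|\boldsymbol{\Delta}_{1,\mathcal{V}^{c*}}\|_1$.

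The concrete gap in your proposal is Step~3: the asserted upper bound $\|\boldsymbol{\Delta}_{2,\mathcal{V}^*}\|_1 \leq a_3^{-1}(1\vee\xi)a_0\lambda$ is not derived, and your stated mechanism for it is incorrect (RE plus the basic inequality on $\boldsymbol{\Delta}_1$ gives $\boldsymbol{\Delta}_{1,\mathcal{V}^*}=\boldsymbol{0}$, not a relation between the two blocks). Your closing sentence --- that for $\xi\leq 1$ ``the alternative branch $a_0\leq a_2$ closes the gap'' --- is also not an argument: $a_0\leq a_2$ alone does not reconcile your lower bound $2a_2\xi/(\xi+1)\lambda$ with an upper bound you have not actually established.
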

\begin{proof}
Let $a_1 =1 -  \|\boldsymbol{R^{\text{or}}}_{\mathcal{V}^{*}}\|_\infty/\lambda > 0$. Applying Lemma \ref{Lemma A2} to $\boldsymbol{\Delta_2}$ and under the event $\Omega$, we have
\begin{equation}
\begin{aligned}
      \boldsymbol{\Delta}_2^\top\boldsymbol{C}_n\boldsymbol{\Delta}_2+   (\lambda-\|\boldsymbol{R}^{\text{or}}_{\mathcal{V}^*}\|_{\infty} )\|\boldsymbol{\Delta}_{2\mathcal{V}^{*}}\|_1  \leq   1/\rho \|\boldsymbol{\Delta}_2\|_2^2\\
      \iff \boldsymbol{\Delta}_2^\top\boldsymbol{C}_n\boldsymbol{\Delta}_2+   a_1\lambda\|\boldsymbol{\Delta}_{2\mathcal{V}^{*}}\|_1  \leq   1/\rho \|\boldsymbol{\Delta}_2\|_2^2.
\end{aligned}
\end{equation}
Consider the first case $\|\boldsymbol{\Delta}_1\|_{1} \lor \|{\boldsymbol{\Delta}_1}-\boldsymbol{\Delta}_2\|_{1} \leqslant a_{2} {\lambda}$, where $a_2 = a_1\xi\rho/[2(\xi+1)]$. We have $$\|\boldsymbol{\Delta}_2\|_2^2 \leq \|\boldsymbol{\Delta}_2\|_\infty\cdot \|\boldsymbol{\Delta}_2\|_1 \leq (\|\boldsymbol{\Delta}_2-\boldsymbol{\Delta}_1\|_1+\|\boldsymbol{\Delta}_1\|_1)\cdot \|\boldsymbol{\Delta}_2\|_1 \leq 2a_2\lambda \|\boldsymbol{\Delta}_2\|_1.$$ The above inequalities yield
\begin{equation}
    a_1\lambda\|\boldsymbol{\Delta}_{2\mathcal{V}^{*}}\|_1  \leq   1/\rho \|\boldsymbol{\Delta}_2\|_2^2 \leq \left\{a_{1} \xi /(\xi+1)\right\}\left(\left\|{\boldsymbol{\Delta}_2}_{\mathcal{V}^{c*}}\right\|_{1}+\left\|{\boldsymbol{\Delta}}_{2\mathcal{V}^{*}}\right\|_{1}\right),
\end{equation}
which is equivalent to $\boldsymbol{\Delta}_2\in \mathscr{C}(\mathcal{V}^*,\xi)$ by algebra in the first case.

Consider the second case that $\|\boldsymbol{\Delta}_1\|_1 \geq a_2\lambda$ and $\|\boldsymbol{\Delta}_1-\boldsymbol{\Delta}_2\|_1\leq \lambda a_2a_3/(1 \lor \xi)$, where $a_3 = a_1\xi/(\xi+1+a_1)$.
Similarly, applying Lemma \ref{Lemma A2} to $\boldsymbol{\Delta}_1$ and using $\boldsymbol{\Delta}_{1} \in \mathscr{C}\left(\mathcal{V}^{*} ; \xi\right)$, we obtain $$a_1\lambda \|\boldsymbol{\Delta}_{1\mathcal{V}^{*}}\|_1\leq 0 \quad \Rightarrow \quad \boldsymbol{\Delta}_{1\mathcal{V}^*} = \boldsymbol{0}.$$
The triangle inequalities $$\|\boldsymbol{\Delta}_{1\mathcal{V}^*}-\boldsymbol{\Delta}_{2\mathcal{V}^*}\|_1+\|\boldsymbol{\Delta}_{2\mathcal{V}^*}\|_1 \geq \|\boldsymbol{\Delta}_{1\mathcal{V}^*}\|_1, \quad \|\boldsymbol{\Delta}_{2\mathcal{V}^{c*}}-\boldsymbol{\Delta}_{1\mathcal{V}^{c*}}\|_1+\|\boldsymbol{\Delta}_{1\mathcal{V}^{c*}}\|_1 \geq \|\boldsymbol{\Delta}_{2\mathcal{V}^{c*}}\|_1$$
give rise to
\begin{equation}
    \begin{aligned}
        & \|\boldsymbol{\Delta}_{2\mathcal{V}^*}\|_1 - \xi \|\boldsymbol{\Delta}_{2\mathcal{V}^{c*}}\|_1\\
        \leq&  \|\boldsymbol{\Delta}_{1\mathcal{V}^*}\|_1 - \xi \|\boldsymbol{\Delta}_{1\mathcal{V}^{c*}}\|_1+ (1\lor \xi)\|\boldsymbol{\Delta}_2-\boldsymbol{\Delta}_1\|_1\\
        \leq &\|\boldsymbol{\Delta}_{1\mathcal{V}^*}\|_1 - \xi \|\boldsymbol{\Delta}_{1\mathcal{V}^{c*}}\|_1+ a_3\|\boldsymbol{\Delta}_1\|_1\\
        =&(a_3-\xi)\|\boldsymbol{\Delta}_{1\mathcal{V}^{c*}}\|_1 = \frac{-\xi(\xi+1)}{\xi+1+a_1}\|\boldsymbol{\Delta}_{1\mathcal{V}^{c*}}\|_1\leq 0,
    \end{aligned}
\end{equation}
where the second inequality follows the assumptions of $\|\boldsymbol{\Delta}_1\|_1$ and $\|\boldsymbol{\Delta}_1-\boldsymbol{\Delta}_2\|_1$.

Thus, together with above two cases, it concludes the $\boldsymbol{\Delta}_2 \in \mathscr{C}(\mathcal{V}^*,\xi)$ when $\|\boldsymbol{\Delta}_1-\boldsymbol{\Delta}_2\|_1<a_0\lambda$, where $a_0 = a_2 \wedge \{a_2a_3/(1\lor \xi)\}$  \qed
\end{proof}

Based on the above two lemmas, we are able to derive the theoretical results stated in Theorem \ref{Theorem 3}.

\begin{proof}[of Theorem \ref{Theorem 3}]
Consider the local solution $\widehat{\boldsymbol{\alpha}}$ in $\mathscr{B}_0(\lambda,\rho)$ and denote $\hat{\mathcal{V}} = \{j:\widehat{\alpha}_j = 0\}$ and event $\Phi = \{\hat{\mathcal{V}} = \mathcal{V}^*\}$ of most interest.
Thus,
\begin{equation}
    \Pr(\Phi) = \Pr(\Phi,\Omega)+\Pr(\Phi,\Omega^c)\geq \Pr(\Phi|\Omega)\Pr(\Omega).
\end{equation}
Firstly, conditional on the event $\Omega$, we denote $\boldsymbol{\Delta} = \widehat{\boldsymbol{\alpha}}-\widehat{\boldsymbol{\alpha}}^{\text{or}}$ and immediately have $\boldsymbol{\Delta}\in \mathscr{C}(\mathcal{V}^* ; \xi)$ by Lemma \ref{Lemma A3}. Applying Lemma \ref{Lemma A2} to $\boldsymbol{\Delta}$, we have
\begin{equation}
     \boldsymbol{\Delta}^\top\boldsymbol{C}_n\boldsymbol{\Delta}+   (\lambda-\|\boldsymbol{R}^{\text{or}}_{\mathcal{V}^*}\|_{\infty} )\|\boldsymbol{\Delta}_{\mathcal{V}^{*}}\|_1  \leq   -\lambda \boldsymbol{\Delta}_{\mathcal{V}^{c*}}^\top \boldsymbol{\omega}_{\mathcal{V}^{c*}}(\widehat{\boldsymbol{\alpha}}^{\text{or}})+1/\rho \|\boldsymbol{\Delta}\|_2^2. \label{B19}
\end{equation}
By Cauchy-Schwarz inequality, $$-\lambda \boldsymbol{\Delta}_{\mathcal{V}^{c*}}^\top \boldsymbol{\omega}_{\mathcal{V}^{c*}}(\boldsymbol{\alpha}^{\text{or}}) = \lambda \boldsymbol{\Delta}_{\mathcal{V}^{c*}}^\top[-\boldsymbol{\omega}_{\mathcal{V}^{c*}}(\boldsymbol{\alpha}^{\text{or}})]\leq \lambda\|\boldsymbol{\Delta}_{\mathcal{V}^{c*}}\|_2\|\boldsymbol{\omega}_{\mathcal{V}^{c*}}(\boldsymbol{\alpha}^{\text{or}})\|_2 = 0$$
follows the definition of $\Omega$. Rearranging (\ref{B19}) yields,
\begin{equation}
\begin{aligned}
          0 &\geq \boldsymbol{\Delta}^\top\boldsymbol{Q}_n\boldsymbol{\Delta} -1/\rho \|\boldsymbol{\Delta}\|_2^2 +   (\lambda-\|\boldsymbol{R}^{\text{or}}_{\mathcal{V}^*}\|_{\infty} )\|\boldsymbol{\Delta}_{\mathcal{V}^{*}}\|_1  \\
          &\geq (K^2_{\mathscr{C}}(\mathcal{V}^*,\xi)-1/\rho)\|\boldsymbol{\Delta}\|_2^2+ (\lambda-\|\boldsymbol{R}^{\text{or}}_{\mathcal{V}^*}\|_{\infty} )\|\boldsymbol{\Delta}_{\mathcal{V}^{*}}\|_1 {\geq} 0, \label{B20}
\end{aligned}
\end{equation}
where the second line follows membership of cone $\mathscr{C}(\mathcal{V}^* ; \xi)$ of $\boldsymbol{\Delta}$ and the RE condition of $\tilde{\boldsymbol{Z}}$ in Lemma \ref{Lemma 2}. Inequality (\ref{B20}) forces  $\|\boldsymbol{\Delta}_{\mathcal{V^*}}\|_1 = 0$ and $\|\boldsymbol{\Delta}\|_2^2 = 0$, i.e., ${\mathcal{V}}^* = \hat{\mathcal{V}}$, in probability because $\|\boldsymbol{R}^{\text{or}}_{\mathcal{V}^*}\|_{\infty}< \lambda $ holds with probability approaching $1$ in Lemma \ref{Lemma 3}.

Therefore, it remains to investigate event $\Omega$.
\begin{equation}
\begin{aligned}
     \boldsymbol{\omega}_{\mathcal{V}^{c*}}(\boldsymbol{\alpha}^{\text{or}}) = \Big[\frac{\partial}{\partial \boldsymbol{t}}p_\lambda^{\text{MCP}} (\boldsymbol{t})|_{\boldsymbol{t} = \boldsymbol{\alpha}^{\text{or}}_{\mathcal{V}^{c*}}} - \tilde{\boldsymbol{Z}}^\top_{\mathcal{V}^{c*}} (\boldsymbol{Y}-\tilde{\boldsymbol{Z}}\widehat{\boldsymbol{\alpha}}^{\text{or}})\Big]
      =\frac{\partial}{\partial \boldsymbol{t}}p_\lambda^{\text{MCP}} (\boldsymbol{t})|_{\boldsymbol{t} = \widehat{\boldsymbol{\alpha}}^{\text{or}}_{\mathcal{V}^{c*}}}
\end{aligned}
\end{equation}
follows definition of $\widehat{\boldsymbol{\alpha}}^{\text{or}}$. By the definition of $\Omega$ and characteristics of MCP, $$\Omega = \{ \boldsymbol{\omega}_{\mathcal{V}^{c*}}(\widehat{\boldsymbol{\alpha}}^{\text{or}})  = \frac{\partial}{\partial \boldsymbol{t}}p_\lambda^{\text{MCP}} (\boldsymbol{t})|_{\boldsymbol{t} = \widehat{\boldsymbol{\alpha}}^{\text{or}}_{\mathcal{V}^{c*}}} = \boldsymbol{0}\} =\{ {|\widehat{\boldsymbol{\alpha}}^{\text{or}}_{\mathcal{V}^{c*}}|_{\operatorname{min}} > \lambda/\rho}\}.$$

Rearranging (\ref{TSLS_bias}), which yields
\begin{equation}
 \Big\{|{\boldsymbol{\alpha}}^{*}_{\mathcal{V}^{c*}}|_{\operatorname{min}} > \lambda/\rho+\|(\boldsymbol{Z}_{\mathcal{V}^{c*}}^\top \boldsymbol{Z}_{\mathcal{V}^{c*}})^{-1}\boldsymbol{Z}_{\mathcal{V}^{c*}}^\top\boldsymbol{\epsilon}\|_\infty  +\Big\|(\boldsymbol{Z}_{\mathcal{V}^{c*}}^\top \boldsymbol{Z}_{\mathcal{V}^{c*}})^{-1}\boldsymbol{Z}_{\mathcal{V}^{c*}}^\top\boldsymbol{D} \frac{\boldsymbol{D}^\top(P_{\boldsymbol{Z}}-P_{\boldsymbol{Z}_{\mathcal{V}^{c*}}})\boldsymbol{\epsilon}}{\boldsymbol{D}^\top(P_{\boldsymbol{Z}}-P_{\boldsymbol{Z}_{\mathcal{V}^{c*}}})\boldsymbol{D}}\Big\|_\infty\Big\}   \subseteq \Omega .\label{sub-event Phi}
\end{equation}

Thus, it suffices to examine
\begin{equation*}
    \begin{aligned}
         \Big\|(\boldsymbol{Z}_{\mathcal{V}^{c*}}^\top \boldsymbol{Z}_{\mathcal{V}^{c*}})^{-1}\boldsymbol{Z}_{\mathcal{V}^{c*}}^\top\boldsymbol{D} \frac{\boldsymbol{D}^\top(P_{\boldsymbol{Z}}-P_{\boldsymbol{Z}_{\mathcal{V}^{c*}}})\boldsymbol{\epsilon}}{\boldsymbol{D}^\top(P_{\boldsymbol{Z}}-P_{\boldsymbol{Z}_{\mathcal{V}^{c*}}})\boldsymbol{D}}\Big\|_\infty
        =\Big|\frac{\boldsymbol{D}^\top(P_{\boldsymbol{Z}}-P_{\boldsymbol{Z}_{\mathcal{V}^{c*}}})\boldsymbol{\epsilon}}{\boldsymbol{D}^\top(P_{\boldsymbol{Z}}-P_{\boldsymbol{Z}_{\mathcal{V}^{c*}}})\boldsymbol{D}}\Big|\cdot \|(\boldsymbol{Z}_{\mathcal{V}^{c*}}^\top \boldsymbol{Z}_{\mathcal{V}^{c*}})^{-1}\boldsymbol{Z}_{\mathcal{V}^{c*}}^\top\boldsymbol{D}\|_\infty.
    \end{aligned}
\end{equation*}
The first term in the RHS measures the estimation error of TSLS estimator, i.e., $$\operatorname{Bias}(\hat{\beta}^{TSLS}_{or}) = \frac{\boldsymbol{D}^\top(P_{\boldsymbol{Z}}-P_{\boldsymbol{Z}_{\mathcal{V}^{c*}}})\boldsymbol{\epsilon}}{\boldsymbol{D}^\top(P_{\boldsymbol{Z}}-P_{\boldsymbol{Z}_{\mathcal{V}^{c*}}})\boldsymbol{D}} =\frac{\TTD^\top \TTE }{\TTD^\top \TTD},$$
is the unvanished term under many (weak) IVs setting.
While for the second term,
\begin{equation*}
    \begin{aligned}
        & \|(\boldsymbol{Z}_{\mathcal{V}^{c*}}^\top \boldsymbol{Z}_{\mathcal{V}^{c*}})^{-1}\boldsymbol{Z}_{\mathcal{V}^{c*}}^\top\boldsymbol{D}\|_\infty \\
         =&\|\boldsymbol{\gamma}^*_{\mathcal{V}^{c*}}+(\boldsymbol{Z}_{\mathcal{V}^{c*}}^\top \boldsymbol{Z}_{\mathcal{V}^{c*}})^{-1}\boldsymbol{Z}_{\mathcal{V}^{c*}}^\top\boldsymbol{Z}_{\mathcal{V}^*}\boldsymbol{\gamma}^*_{\mathcal{V}^*}\|_\infty+ \|(\boldsymbol{Z}_{\mathcal{V}^{c*}}^\top \boldsymbol{Z}_{\mathcal{V}^{c*}})^{-1}\boldsymbol{Z}_{\mathcal{V}^{c*}}^\top\boldsymbol{\eta}\|_\infty\\
         =& \|\bar{\boldsymbol{\gamma}}^*_{\mathcal{V}^{c*}}\|_\infty+ O_p\Big(\sigma_\eta\sqrt{\frac{2\log(2{p_{\mathcal{V}^{c*}}})}{n}}\Big),
    \end{aligned}
\end{equation*}
where $\bar{\boldsymbol{\gamma}}^*_{\mathcal{V}^{c*}} = \boldsymbol{\gamma}^*_{\mathcal{V}^{c*}}+(\boldsymbol{Z}_{\mathcal{V}^{c*}}^\top \boldsymbol{Z}_{\mathcal{V}^{c*}})^{-1}\boldsymbol{Z}_{\mathcal{V}^{c*}}^\top\boldsymbol{Z}_{\mathcal{V}^*}\boldsymbol{\gamma}^*_{\mathcal{V}^*}$.

Thereby, (\ref{sub-event Phi}) reduces to
\begin{equation*}
    \Bigg\{|{\boldsymbol{\alpha}}^{*}_{\mathcal{V}^{c*}}|_{\operatorname{min}} > \lambda/\rho+O_p\Big(\sigma_\epsilon\sqrt{\frac{2\log(2{p_{\mathcal{V}^{c*}}})}{n}}\Big) +|\operatorname{Bias}(\hat{\beta}^{TSLS}_{or})| \cdot \Big[\|\bar{\boldsymbol{\gamma}}^*_{\mathcal{V}^{c*}}\|_\infty+ O_p\Big(\sigma_\eta\sqrt{\frac{2\log(2{p_{\mathcal{V}^{c*}}})}{n}}\Big)\Big]\Bigg\}   \subseteq \Omega .
\end{equation*}

Combining with $\lambda > \zeta$, we now specify \eqref{sub-event Phi} as
\begin{equation*}
     \Big\{|{\boldsymbol{\alpha}}^{*}_{\mathcal{V}^{c*}}|_{\operatorname{min}} >C \kappa(n)\Big\}   \subseteq \Omega,
\end{equation*}
where $\kappa(n) = {\sqrt{\frac{\log p_{\mathcal{V}^*}}{n}}} + {\frac{p_{\mathcal{V}^*}}{n}\cdot\frac{\|\Tilde{\Tilde{\boldsymbol{Q}}}_n\boldsymbol{\gamma}_{\mathcal{V}^*}^*\|_{\infty}}{\boldsymbol{\gamma}_{\mathcal{V}^*}^{*\top}\Tilde{\Tilde{\boldsymbol{Q}}}_n\boldsymbol{\gamma}_{\mathcal{V}^*}^*}} + {|\operatorname{Bias}(\hat{\beta}_{\text{or}}^{\text{TSLS}})|\cdot \|\bar{\boldsymbol{\gamma}}^*_{\mathcal{V}^{c*}}\|_\infty}$.
Thus, under the condition that event $\Omega$ holds in the finite sample or in probability, we achieve consistency of the selection of valid IVs.

We then turn to $\mathcal{P}_c$. Recall that $\tilde{\boldsymbol{\epsilon}}^c = \boldsymbol{\epsilon}-c\boldsymbol{\eta}$, so $\sqrt{\operatorname{Var}(\tilde{\boldsymbol{\epsilon}}^c)} = \sqrt{\sigma^2_\epsilon+c^2\sigma^2_\eta-2c\sigma_{\epsilon,\eta}}= 1+c$ and $\operatorname{Cov}(\tilde{\boldsymbol{\epsilon}}^c,\boldsymbol{\eta}) = \sigma_\epsilon^2-c\sigma^2_\eta = 1+c$. After some similar derivations, we obtain
\begin{equation}
    \kappa^c(n) \asymp (1+c){\sqrt{\frac{\log |\mathcal{I}_c|}{n}}} +(1+c) {\frac{|\mathcal{I}_c|}{n}\cdot\frac{\|\Tilde{\Tilde{\boldsymbol{Q}}}^c_n\boldsymbol{\gamma}_{\mathcal{I}_c}^*\|_{\infty}}{\boldsymbol{\gamma}_{\mathcal{I}_c}^{*\top}\Tilde{\Tilde{\boldsymbol{Q}}}^c_n\boldsymbol{\gamma}_{\mathcal{I}_c}^*}} + {|\operatorname{Bias}(\hat{\beta}_{\text{or}}^{c~\text{TSLS}})|\cdot \|\bar{\boldsymbol{\gamma}}^*_{\mathcal{I}^c_c}\|_\infty},
\end{equation}
where $\Tilde{\Tilde{\boldsymbol{Q}}}_n^c$ and $\operatorname{Bias}(\hat{\beta}_{\text{or}}^{c~\text{TSLS}})$ are defined as $\mathcal{P}_c$ version of  $\Tilde{\Tilde{\boldsymbol{Q}}}_n$ and $\operatorname{Bias}(\hat{\beta}_{\text{or}}^{\text{TSLS}})$. With similar argument, we also achieve the the consistency of selection of valid counterparts in $\mathcal{P}_c$ if $|\tilde{\alpha}_{j}^{c}|>\kappa^{c}(n)$ for $j \in\{j: \alpha_{j}^{*} / \gamma_{j}^{*}=\tilde{c} \neq c\}$ holds.

\qed
\end{proof}

\subsection{Proof of Proposition \ref{Proposition 2}}
 \begin{proof}
 \begin{equation*}
     T_2 = \frac{p_{\mathcal{V}^*}}{n}\cdot\frac{\|\Tilde{\Tilde{\boldsymbol{Q}}}_n\boldsymbol{\gamma}_{\mathcal{V}^*}^*\|_{\infty}}{\boldsymbol{\gamma}_{\mathcal{V}^*}^{*\top}\Tilde{\Tilde{\boldsymbol{Q}}}_n\boldsymbol{\gamma}_{\mathcal{V}^*}^*}\leq \frac{p_{\mathcal{V}^*}}{n}\cdot\frac{\|\Tilde{\Tilde{\boldsymbol{Q}}}_n^{1/2}\|_\infty\|\Tilde{\Tilde{\boldsymbol{Q}}}_n^{1/2}\boldsymbol{\gamma}_{\mathcal{V}^*}^*\|_\infty}{\|\Tilde{\Tilde{\boldsymbol{Q}}}_n^{1/2}\boldsymbol{\gamma}_{\mathcal{V}^*}^*\|_2^2}\leq \frac{p_{\mathcal{V}}}{n}\cdot\frac{p_{\mathcal{V}^*}\|\Tilde{\Tilde{\boldsymbol{Q}}}_n^{1/2}\|_\infty\|\Tilde{\Tilde{\boldsymbol{Q}}}_n^{1/2}\boldsymbol{\gamma}_{\mathcal{V}^*}^*\|_\infty}{\|\Tilde{\Tilde{\boldsymbol{Q}}}_n^{1/2}\boldsymbol{\gamma}_{\mathcal{V}^*}^*\|_1^2}\rightarrow 0
 \end{equation*}\qed
 \end{proof}

\subsection{Proof of Proposition \ref{Proposition 3}}
This proof is extended from \cite{bun2011comparison}'s higher order approximation arguments.
\begin{proof}
Recall
\begin{equation}
\operatorname{Bias}(\hat{\beta}^{TSLS}_{or}) = \frac{\boldsymbol{D}^\top(P_{\boldsymbol{Z}}-P_{\boldsymbol{Z}_{\mathcal{V}^{c*}}})\boldsymbol{\epsilon}}{\boldsymbol{D}^\top(P_{\boldsymbol{Z}}-P_{\boldsymbol{Z}_{\mathcal{V}^{c*}}})\boldsymbol{D}} = \frac{\TTD^\top \TTE }{\TTD^\top \TTD} \coloneqq \frac{c}{d},
\end{equation}
we have $\bar{c} \coloneqq E(c) = \sigma^2_{\epsilon,\eta} (p-{p_{\mathcal{V}^{c*}}}) = \sigma^2_{\epsilon,\eta}{p_{\mathcal{V}^*}}$ and $\bar{d} \coloneqq E(d) = \sigma^2_\eta(\mu_n+L)$. That is  free of the number of invalid IVs ${p_{\mathcal{V}^{c*}}}$.
Let $s = \operatorname{max}(\mu_n,{p_{\mathcal{V}^*}})$, we have
\begin{equation}
    \begin{aligned}
        \operatorname{Bias}(\hat{\beta}^{\text{TSLS}}_{\text{or}}) &= \frac{\bar{c}}{\bar{d}}+\frac{c-\bar{c}}{\bar{d}}-\frac{\bar{c}(d-\bar{d})}{\bar{d}^{2}}-\frac{(c-\bar{c})(d-\bar{d})}{\bar{d}^{2}}+\frac{\bar{c}(d-\bar{d})^{2}}{\bar{d}^{3}}+O_{p}\left(s^{-\frac{3}{2}}\right)\\
        E[\operatorname{Bias}(\hat{\beta}^{\text{TSLS}}_{\text{or}})] &= \frac{\sigma_{\epsilon,\eta}}{\sigma_{\eta}^{2}}\left(\frac{{p_{\mathcal{V}^*}}}{(\mu_n+{p_{\mathcal{V}^*}})}-\frac{2 \mu_n^{2}}{(\mu_n+{p_{\mathcal{V}^*}})^{3}}\right)+o\left(s^{-1}\right)
    \end{aligned}
\end{equation}
follows from Section 3 in \cite{bun2011comparison}.\qed
\end{proof}


\subsection{Proof of Theorem \ref{theorem 4}}
\begin{proof}
Under the conditions of Theorem \ref{Theorem 3}, we have $\Pr(\hat{\mathcal{V}} = \mathcal{V}^{*})\overset{p}{\rightarrow} 1$. Thus, $\hat{\beta}^{\text{WIT}}\overset{p}{\rightarrow} \hat{\beta}_{\text{or}}^{\text{liml}}$, where $\hat{\beta}_{\text{or}}^{\text{liml}}$ stands for LIML estimator  with known $\mathcal{V}^*$ a priori.
Thus, (a) follows Corollary 1(iv) in \citep{kolesar2015identification} with $\min \operatorname{eig}\left(\Sigma^{-1} \Lambda\right) = 0$ in their context.
(b) and (c) follow \cite{kolesar2018minimum}, Proposition 1.\qed

\end{proof}


\subsection{Proof of Corollary \ref{Corollary 2}}

\begin{proof}
Notice ${p_{\mathcal{V}^*}}/n<p/n\rightarrow 0$ and $\mu_n/n\overset{p}{\rightarrow}\mu_0$, therefore the threshold $T_2\rightarrow0$ in Theorem \ref{Theorem 3}. Likewise $T_3 \rightarrow0$ follows $\operatorname{Bias}(\hat{\beta}^{\text{TSLS}}_{\text{or}})\overset{p}{\rightarrow}\frac{\sigma_{\epsilon \eta}}{\sigma_{\eta}^{2}}\left(\frac{{p_{\mathcal{V}^*}}}{\left(\mu_{n}+{p_{\mathcal{V}^*}}\right)}-\frac{2 \mu_{n}^{2}}{\left(\mu_{n}+{p_{\mathcal{V}^*}}\right)^{3}}\right)=o(1)$. Thus, $\kappa(n) \asymp n^{-1/2}$ in \eqref{rate of kappa} diminishes to $0$. Thus any fixed $\min _{\mathrm{j} \in \mathcal{V}^{c^{*}}} \alpha_{j}^{*}$ would pass $\kappa(n)$ asymptotically.  Let $c = \alpha_l^*/\gamma^*_l = C_1/n^{-\tau_1}$ for $l\in\mathcal{I}_c$.  Then, for $\kappa^c(n)$, we have: $|E \operatorname{Bias}(\hat{\beta}_{\text{or}}^{c~\text{TSLS}})|\approx \frac{\operatorname{Cov}(\tilde{\boldsymbol{\epsilon}}^c,\boldsymbol{\eta})}{{\operatorname{Var}(\boldsymbol{\boldsymbol{\eta}})}}\cdot C_1^{-2} n^{2\tau_1-1} \asymp n^{3\tau_1-1}$ according to Proposition \ref{Proposition 3} and  $\|\bar{\boldsymbol{\gamma}}^*_{\mathcal{I}^c_c}\|_\infty\leq C$ due to Assumption 4. Then, for the first two terms in $\kappa^c(n)$:
    \begin{equation*}
        (1+c){\sqrt{\frac{\log |\mathcal{I}_c|}{n}}} +(1+c) {\frac{|\mathcal{I}_c|}{n}\cdot\frac{\|\Tilde{\Tilde{\boldsymbol{Q}}}^c_n\boldsymbol{\gamma}_{\mathcal{I}_c}^*\|_{\infty}}{\boldsymbol{\gamma}_{\mathcal{I}_c}^{*\top}\Tilde{\Tilde{\boldsymbol{Q}}}^c_n\boldsymbol{\gamma}_{\mathcal{I}_c}^*}} \asymp n^{\tau_1-1/2}+n^{\tau_1-1}.
    \end{equation*}
    Hence, we conclude $\kappa^c(n) \asymp n^{\operatorname{max}(\tau_1-1/2, 3\tau_1-1)}$.
For $|\tilde{\alpha}^{c}_j| = |\alpha_j^*-c\gamma_j^*|$ and $j \in\left\{j: \alpha_{j}^{*} / \gamma_{j}^{*}=\tilde{c} \neq c\right\}$. We consider all possible cases:
\begin{enumerate}
    \item $\tilde{c} = 0$, i.e.\ $j\in \{\mathcal{I}^c_c: \alpha^*_j = 0\}$: $|\tilde{\alpha}^{c}_j| = |c|\cdot|\gamma_j^*|$, and $\gamma^*_j = n^{-\tau_2}$. Hence $|\tilde{\alpha}^{c}_j| \asymp n^{\tau_1-\tau_2}$.
    \item $\tilde{c}\neq 0$, i.e.\ $j\in \{\mathcal{I}^c_c: \alpha^*_j/\gamma^*_j = \tilde{c}\}$: $|\tilde{\alpha}^{c}_j| = |c-\tilde{c}|\cdot|\gamma_j^*|$ and $\gamma^*_j = n^{-\tau_3}$. Hence, $|\tilde{\alpha}^{c}_j| \asymp |C_1-C_3n^{\tau_1-\tau_3}|$
\end{enumerate}
Thus, $|\tilde{\alpha}_j^c|>\kappa^c(n)$ is equivalent to $2\tau_1+\tau_2<1$ and $2\tau_1+\tau_3<1$. By the symmetry of $\tau_1$ and $\tau_3$ of invalid IVs, we have $2\tau_3+\tau_1<1$. Hence $\tau_1+\tau_3<2/3$.
Therefore, Assumption 5 holds automatically. Then it follows Theorem \ref{theorem 4}.\qed
\end{proof}

\subsection{Proof of Proposition \ref{Proposition lowdim}}

\begin{proof}
Same with Proof of Corollary \ref{Corollary 2}, $T_2$ and $T_3$ are o(1) and $\kappa(n)\asymp n^{-1/2}$.  So assumption ${min}_{\operatorname{j\in \mathcal{V}^{c*}}}\, |{\alpha}^*_j| >O(n^{-1/2})$ leads first part in Assumption 5 holds. 

Sequentially, we check if the second part of assumption 5 holds. 
Consider threshold $\kappa^c(n)$ in \eqref{rate of kappa^c}, where $c$ corresponds to non-zero value of ratio $\alpha^*_j/\gamma^*_j = c$ for any $j \in \mathcal{V}^{c*}$:
$$ \kappa^c(n) \asymp (1+c)\Big\{{\sqrt{\frac{\log |\mathcal{I}_c|}{n}}} + {\frac{|\mathcal{I}_c|}{n}\cdot\frac{\|\Tilde{\Tilde{\boldsymbol{Q}}}^c_n\boldsymbol{\gamma}_{\mathcal{I}_c}^*\|_{\infty}}{\boldsymbol{\gamma}_{\mathcal{I}_c}^{*\top}\Tilde{\Tilde{\boldsymbol{Q}}}^c_n\boldsymbol{\gamma}_{\mathcal{I}_c}^*}}\Big\} + {|\operatorname{Bias}(\hat{\beta}_{\text{or}}^{c~\text{TSLS}})|\|\bar{\boldsymbol{\gamma}}^*_{\mathcal{I}^c_c}\|_\infty}.$$
Because Assumption 4 holds each DGP and finite $p$. We have $|\operatorname{Bias}(\hat{\beta}_{\text{or}}^{c~\text{TSLS}})| = o(1/n)$ \citep{bun2011comparison} and $\frac{|\mathcal{I}_c|}{n} = O(1/n)$. Thus, 
$$\kappa^c(n)\lesssim (1+c) \big\{ \frac{1}{\sqrt{n}}+\frac{1}{n}\cdot1\big\} +\frac{1}{n} \asymp (1\wedge c)\frac{1}{\sqrt{n}} = O(\frac{1}{\sqrt{n}})$$
since $c = \frac{\alpha_j}{\gamma_j^*} = o(1)$.

Then we turn to transformed $\tilde{\boldsymbol{\alpha}}^c$ in \eqref{GAP} for certain non-zero $c = \alpha^*_j/\gamma^*_j$ with some $j = 1,2,\ldots, p $. Then we consider the remaining set $\{j:\alpha_l^*/\gamma_l^* = \tilde{c}\neq c\}$, where $\tilde{c}$ can be zero. Then we obtain the separation condition: 
$$|\tilde{\alpha}_l^c| = |\alpha_l^* - c\gamma_l^*| = |\alpha_l^* - \frac{\alpha_j^*}{\gamma_j^*}\cdot \gamma_l^*| >O(\frac{1}{\sqrt{n}}).$$

Hence, it has two cases: 
\begin{enumerate}
    \item $\tilde{c} = 0 \Rightarrow \alpha_l^* = 0$. It leads to $|\frac{\alpha_j^*}{\gamma_j^*}\cdot \gamma_l^*|>O(\frac{1}{\sqrt{n}})$. Thus, it is equivalent to 
    $$\frac{\alpha^*_j}{\gamma^*_j}>O(\frac{n^{-1/2}}{\gamma_l^*}),$$ which holds due to $\gamma_l^*>\underset{j\in \mathcal{V}^{c*}}{\operatorname{max}}O(n^{-1/2} \gamma_j^* /\alpha_j^*)$ for valid $l \in \mathcal{V}^*$.
    \item $\tilde{c}\neq c$ and $\tilde{c}\neq 0$. It yields 
    $$|\alpha_l^* - \frac{\alpha_j^*}{\gamma_j^*}\cdot \gamma_l^*| \asymp |\alpha_l^*| \wedge (|\frac{\alpha_j^*}{\gamma_j^*}|\cdot |\gamma_l^*|)\lesssim |\alpha_l^*|\wedge |\gamma_l^*| >O(1/\sqrt{n})$$ holds as assumption in True DGP.
\end{enumerate} 

\end{proof}
\end{document}